\documentclass [journal,onecolumn,11pt]{IEEEtran}
\usepackage{amsfonts,amsmath,amssymb}
\usepackage{indentfirst, setspace}
\usepackage{url,float}
\usepackage{color}
\usepackage[a4paper,ignoreall]{geometry}
\geometry{left=2cm, right=2cm, top=4cm, bottom=4cm}
\usepackage{longtable}
\setcounter{tocdepth}{3}
\usepackage{graphicx}
\usepackage[a4paper,ignoreall]{geometry}
\usepackage{multicol}
\usepackage{stfloats}
\usepackage{enumerate}
\usepackage{cite}
\usepackage{amsthm}
\usepackage{multirow}
\usepackage{amssymb}
\usepackage[square, comma, sort&compress, numbers]{natbib}

\def\qu#1 {\fbox {\footnote {\ }}\ \footnotetext { From Qu: {\color{red}#1}}}
\def\hqu#1 {}
\def\kq#1 {\fbox {\footnote {\ }}\ \footnotetext { From KangQuan: {\color{blue}#1}}}
\def\hkq#1 {}
\newcommand{\mqu}[1]{{{\color{red}#1}}}


\newtheorem{Th}{Theorem}[section]

\newtheorem{Lemma}[Th]{Lemma}
\newtheorem{Def}[Th]{Definition}

\newcommand{\gf}{{\mathbb F}}


\makeatletter
\newcommand{\figcaption}{\def\@captype{figure}\caption}
\newcommand{\tabcaption}{\def\@captype{table}\caption}
\makeatother

\begin{document}
	\title{New Classes of Permutation Binomials and Permutation Trinomials over Finite Fields}
	
\author{Kangquan~Li, Longjiang~Qu, and Xi~Chen
        \thanks{The authors are with the College of Science, National University of Defense Technology, ChangSha, 410073,
                China, email: 940672099@qq.com, ljqu\_happy@hotmail.com, 1138470214@qq.com.
                The research of this paper is supported by the NSFC of China
                under Grant  61272484, the National Basic
Research Program of China(Grant No. 2013CB338002), the Basic Research Fund of National University of Defense Technology (No. CJ 13-02-01)
and the Program for New Century Excellent Talents in University (NCET).}
               }
	
	\maketitle{}
	
	\begin{abstract}
	Permutation polynomials over finite fields play important roles in finite fields theory.
	They also have wide applications in many areas of science and engineering such as coding theory, cryptography,
    combinational  design, communication theory and so on. Permutation binomials and trinomials attract people's interest due to their simple algebraic form and additional extraordinary properties. In this paper, {several} new classes of permutation binomials and
     permutation trinomials are constructed. {Some of these permutation polynomials are generalizations of known ones. }
    \end{abstract}

    \begin{IEEEkeywords}
    	Finite Field, Permutation Polynomial, Permutation Binomial, Permutation Trinomial.
    \end{IEEEkeywords}

    \section{Introduction}
    A polynomial $f \in \mathbb{F}_{q}[x]$ is called a permutation polynomial of $\mathbb{F}_{q}$
     if the associated polynomial function $f:c\to f(c)$ from $\mathbb{F}_{q}$ into $\mathbb{F}_{q}$ is a permutation of $\mathbb{F}_{q}$.
     Permutation polynomials over finite fields play important roles in finite fields theory.
     They also have wide applications in coding theory, cryptography, combinational design, communication theory.
      The study of permutation polynomial can date back to Hermite  \cite{Hermite} and Dickson  \cite{Dic}.
      {There are numerous books and survey papers on the subject covering different periods in the development of this active area
       \cite[Ch.18]{22},  \cite{XH},    \cite[Ch.7]{LN},  \cite[Ch.8]{64},  \cite{XH1}, while the survey by X. Hou  \cite{XH1} in 2015 is the most recent one.}

    Permutation binomials and trinomials attract people's interest due to their simple algebraic form and additional extraordinary properties. For instances,
   {a certain permutation trinomial of $\gf_{2^{2m+1}}$ was a major  ingredient in the proof of the Welch conjecture  \cite{HD}.}
     The discovery of other class of permutation trinomials by Ball and Zieve \cite{SM} provides a way to prove the construction of the Ree-Tits symplectic spreads of $\mathbf{PG}(3,q)$. For more relevant progresses, readers can consult a paper \cite{XH} presented by X.Hou, which conducted a recent survey on permutation binomials and trinomials in 2013.

    {Although quite a few permutation binomials and permutation trinomials have been found,  however,
an explicit and unify characterization of them is still missing and seems to be elusive today. %
  Therefore, it is both interesting and important to find more explicit classes of them and try to find the hidden reason for their permutation property.
In this paper, we  present several new classes of} permutation binomials over finite fields with even extensions and {permutation} trinomials over finite fields with even characteristics. Therefore {we will review} the known permutation binomials over $\mathbb{F}_{q^2}$ and permutation trinomials over $\mathbb{F}_{2^m}$ {in Section \ref{kpb} and Section \ref{kpt},
    respectively} for the convenience of the reader. {For other classes of permutation binomials or permutation trinomials, please refer to  \cite{XH}.}

In Section \ref{sec1}, we {introduce} the definition of the {multiplicative equivalence of permutation polynomials } and some lemmas. In Section \ref{sec2}, we {provide} new classes of permutation binomials through the Hermite's criterion \cite{Hermite} and considering the number of solutions of special equations. In Section \ref{sec3}, we give five new classes of permutation trinomials. {Section \ref{sec4} is the conclusion.} In the paper, ${\operatorname{Tr}}_{k}(x)$ is treated as the absolute trace function on $\mathbb{F}_{2^{k}}$. The algebraic closure of $\mathbb{F}_{q}$ is denoted {by} $\bar{\mathbb{{F}}}_{q}$. For integer $d>0$, $\mu_{d}=\{x \in \bar{\mathbb{{F}}}_{q}:x^{d}=1\}$. Others symbols follow the standard notation of finite fields.

    \section{Preparation}
    \label{sec1}
    {Let $f$ and $g$ be two polynomials in $\mathbb{F}_{q}[x]$ satisfying that $f(x)=g(x^d)$,
    where $1 \le d \le q-1$ is an integer such that $\mathrm{gcd}(d,q-1)=1$. Then  the number of the terms of $f$  is equal to that of $g$, and
     $f$ is a permutation polynomial if and only if so is $g$. Particularly, $f$ is a permutation binomial resp. permutation trinomial if and
     only if so is $g$. Hence we introduce the following definition of multiplicative equivalence.}
    \begin{Def}
    Two permutation polynomials $f(x)$ and $g(x)$ in $\mathbb{F}_{q}[x]$ are called {multiplicative} equivalen{t} if there exists an integer $1 \le d \le q-1$ such that $\mathrm{gcd}(d,q-1)=1$ and $f(x)=g(x^d)$.
    \end{Def}
    {It should be noted that in  \cite{XH1} and many references before, two permutation polynomial  $f$ and $g$ of $\gf_q$ called equivalent if $f(x)=cg(ax+b)+d$, where
     $a, c\in \gf_q^*$, $b, d\in \gf_q$. In our opinion, this type of equivalence can be called {linear} equivalence.  }

{{The following lemmas will be useful in our future discussion.} The first one is a corollary following directly from } {\cite[Lemma 2.1]{Zieve}}.
      \begin{Lemma}
       	Let $f(x)=x^{r_{0}}(x^{(q-1)/d}+a)\in \mathbb{F}_{q}[x]$, where $d \mid q-1$, $a \in \mathbb{F}_{q}^{*}$. If $f(x)$ is a permutation polynomial over $\mathbb{F}_{q}$ and $\mathrm{gcd}(r_{0}+d,(q-1)/d)=1$, then $g(x)=x^{r_{0}+d}(x^{(q-1)/d}+a)$ is
       	also a permutation polynomial over $\mathbb{F}_{q}$. 	
       	\label{ky1}
       \end{Lemma}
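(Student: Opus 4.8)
The plan is to reduce the permutation property of both $f$ and $g$ on $\mathbb{F}_{q}$ to a permutation property of an associated map on the small set $\mu_{d}$ of $d$-th roots of unity, and then to exploit the fact that this associated map turns out to be \emph{identical} for $f$ and for $g$. The tool is the criterion of \cite[Lemma 2.1]{Zieve}: writing $s=(q-1)/d$, a polynomial of the shape $x^{r}h(x^{s})$ permutes $\mathbb{F}_{q}$ if and only if $\gcd(r,s)=1$ and the map $x\mapsto x^{r}h(x)^{s}$ permutes $\mu_{d}$. Here both $f$ and $g$ have this form with $h(y)=y+a$, the exponents being $r_{0}$ and $r_{0}+d$ respectively.

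First I would apply this criterion to the hypothesis that $f$ is a permutation polynomial. This yields two consequences: $\gcd(r_{0},s)=1$, and the auxiliary map $\phi(x)=x^{r_{0}}(x+a)^{s}$ is a permutation of $\mu_{d}$.

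The key step, and essentially the only idea needed, is the observation that every element of $\mu_{d}$ satisfies $x^{d}=1$. Consequently $x^{r_{0}+d}=x^{r_{0}}\cdot x^{d}=x^{r_{0}}$ for all $x\in\mu_{d}$, so the auxiliary map attached to $g$, namely $\psi(x)=x^{r_{0}+d}(x+a)^{s}$, coincides with $\phi$ as a function on $\mu_{d}$. Hence $\psi$ permutes $\mu_{d}$ automatically, with no computation required.

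It then remains only to feed this back into the criterion for $g$. The hypothesis $\gcd(r_{0}+d,s)=1$ is precisely the first condition demanded by \cite[Lemma 2.1]{Zieve}, and the previous paragraph supplies the second, so we conclude that $g(x)=x^{r_{0}+d}(x^{s}+a)$ permutes $\mathbb{F}_{q}$. I do not anticipate any genuine obstacle here: the argument rests entirely on the collapse $x^{d}=1$ on $\mu_{d}$, and the only hypothesis doing real work is the coprimality condition $\gcd(r_{0}+d,s)=1$, which cannot be omitted, since replacing $r_{0}$ by $r_{0}+d$ may destroy coprimality with $s=(q-1)/d$.
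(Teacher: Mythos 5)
Your proof is correct and matches the paper's intended argument: the paper gives no explicit proof, stating only that the lemma is a direct corollary of \cite[Lemma 2.1]{Zieve}, and your derivation --- applying that criterion to both $f$ and $g$ and noting that $x^{r_0+d}=x^{r_0}$ on $\mu_d$ so the induced map on $\mu_d$ is unchanged --- is precisely that direct deduction. The identification of $\gcd(r_0+d,(q-1)/d)=1$ as the only hypothesis doing new work is also exactly right.
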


       \begin{Lemma}
       	\emph{ \cite{LN} (Hermite's Criterion)}
       	Let $\mathbb{F}_{q}$ be of characteristic $p$. Then $f \in \mathbb{F}_{q}[x]$ is a permutation polynomial of $\mathbb{F}_{q}$ if and only if the following two conditions hold:
       	 \begin{enumerate}[(i)]
       	 	\item $f$ has exactly one root in $\mathbb{F}_{q}$;
       	 	\item for each integer $t$ with $1 \le t \le q-2$ and $t \not \equiv 0 \mod{p}$, the reduction of $f(x)^{t} \mod{x^{q}-x}$ has degree $\le$ $q-2$.
       	 \end{enumerate}
       \end{Lemma}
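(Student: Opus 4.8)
The plan is to reduce everything to the behaviour of the power sums $S_{t}:=\sum_{c\in\mathbb{F}_{q}}f(c)^{t}$, exploiting the elementary identity that, with the convention $0^{0}=1$,
\[
\sum_{c\in\mathbb{F}_{q}}c^{t}=\begin{cases}-1,& t=q-1,\\ 0,& 0\le t\le q-2,\end{cases}
\]
which follows by summing a geometric progression over a generator of $\mathbb{F}_{q}^{*}$ (the $t=0$ sum being $q=0$, and $c^{q-1}=1$ for $c\ne 0$ giving $-1$). The key bridge is that since $c^{q}=c$ for every $c\in\mathbb{F}_{q}$, the polynomial $f(x)^{t}$ and its reduction $\sum_{j=0}^{q-1}a_{j}x^{j}$ modulo $x^{q}-x$ induce the same function on $\mathbb{F}_{q}$; hence, by the identity above, $S_{t}=\sum_{j}a_{j}\sum_{c}c^{j}=-a_{q-1}$, i.e.\ $S_{t}$ is minus the coefficient of $x^{q-1}$ in that reduction.

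For necessity, I would note that if $f$ permutes $\mathbb{F}_{q}$ then $f$ is a bijection, so it has exactly one root and (i) holds; moreover $\{f(c):c\in\mathbb{F}_{q}\}=\mathbb{F}_{q}$ as multisets, whence $S_{t}=\sum_{c}c^{t}=0$ for every $1\le t\le q-2$. By the bridge this forces $a_{q-1}=0$, that is, the reduction of $f(x)^{t}$ has degree $\le q-2$, which is (ii).

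For sufficiency I would assume (i) and (ii). First I would upgrade (ii) to the statement $S_{t}=0$ for \emph{all} $1\le t\le q-2$, not merely those $t$ with $p\nmid t$: the case $p\nmid t$ is immediate from the bridge, and for $t=p^{a}s$ with $p\nmid s$ and $1\le s\le q-2$ one writes $S_{t}=\sum_{c}\big(f(c)^{s}\big)^{p^{a}}=\big(\sum_{c}f(c)^{s}\big)^{p^{a}}=S_{s}^{p^{a}}=0$ using the additivity of Frobenius. Next, for each $b\in\mathbb{F}_{q}$ let $N(b)$ be the integer number of solutions of $f(x)=b$; since $(f(x)-b)^{q-1}$ equals $0$ when $f(x)=b$ and $1$ otherwise, the value $N(b)\bmod p$ equals $\sum_{x}\big(1-(f(x)-b)^{q-1}\big)$ computed in $\mathbb{F}_{q}$. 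Expanding $(f(x)-b)^{q-1}$ by the binomial theorem and summing over $x$, every term except $t=q-1$ carries a factor $S_{t}=0$ (including $S_{0}=q=0$), so $\sum_{x}(f(x)-b)^{q-1}=S_{q-1}$ independently of $b$; and $S_{q-1}=q-N(0)\equiv-1\pmod p$ by (i). Hence $N(b)\equiv 0-(-1)=1\pmod p$ for \emph{every} $b$.

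The final step, which I regard as the crux, converts this congruence into equality. Since each $N(b)\ge 0$ and $N(b)\equiv 1\pmod p$, every fibre satisfies $N(b)\ge 1$; as there are $q$ values of $b$ and $\sum_{b}N(b)=q$, the only possibility is $N(b)=1$ for all $b$, so $f$ is a bijection. The main obstacle is precisely this passage from modular to exact counts: the power sums only control the $N(b)$ modulo $p$, and it is condition (i) (giving $N(b)\equiv 1$ rather than merely $N(b)\equiv N(0)$) together with the global constraint $\sum_{b}N(b)=q$ that rigidifies the distribution. Indeed, dropping (i) genuinely breaks the statement, as $f(x)=x^{p}-x$ over $\mathbb{F}_{p}$ satisfies (ii) yet is constant.
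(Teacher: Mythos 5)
Your proposal is correct, and it is essentially the classical argument from Lidl--Niederreiter \cite{LN}, which the paper cites for this lemma without reproducing a proof: the bridge identifying $\sum_{c}f(c)^{t}$ with minus the $x^{q-1}$-coefficient of the reduction, the Frobenius upgrade from $p\nmid t$ to all $1\le t\le q-2$, and the fibre count $N(b)\equiv 1\pmod p$ combined with $\sum_{b}N(b)=q$ are exactly the standard steps. Your closing observation that condition (i) is what pins $N(0)$ (and hence all $N(b)$) to $1\bmod p$, illustrated by $x^{p}-x$ over $\mathbb{F}_{p}$, correctly identifies the crux.
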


       \begin{Lemma}
       	\emph{(Lucas formula)}
       	Let $n$,$i$ be positive integers and $p$ be a prime. Assume $n={{a}_{m}}{{p}^{m}}+\cdots +{{a}_{1}}p+{{a}_{0}}$ and $i={{b}_{m}}{{p}^{m}}+\cdots +{{b}_{1}}p+{{b}_{0}}$. Then
       	$\left( \begin{matrix}
       	n  \\
       	i  \\
       	\end{matrix} \right)\equiv \left( \begin{matrix}
       	{{a}_{m}}  \\
       	{{b}_{m}}  \\
       	\end{matrix} \right)\left( \begin{matrix}
       	{{a}_{m-1}}  \\
       	{{b}_{m-1}}  \\
       	\end{matrix} \right)\cdots \left( \begin{matrix}
       	{{a}_{0}}  \\
       	{{b}_{0}}  \\
       	\end{matrix} \right)(\bmod p)
       	$.
       \end{Lemma}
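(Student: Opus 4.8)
The plan is to prove the identity through a generating-function argument in the polynomial ring $\mathbb{F}_p[X]$, exploiting the characteristic-$p$ collapse of the binomial expansion. The cornerstone is the elementary congruence $(1+X)^p \equiv 1 + X^p \pmod{p}$, which holds because every intermediate coefficient $\binom{p}{k}$ with $1 \le k \le p-1$ is divisible by $p$. First I would iterate this by induction on $j$ to obtain $(1+X)^{p^j} \equiv 1 + X^{p^j} \pmod{p}$ for every $j \ge 0$.

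Next, using the base-$p$ expansion $n = \sum_{j=0}^{m} a_j p^j$, I would factor
\[
(1+X)^n = \prod_{j=0}^{m} \left( (1+X)^{p^j} \right)^{a_j} \equiv \prod_{j=0}^{m} \left( 1 + X^{p^j} \right)^{a_j} \pmod{p}.
\]
Expanding the left-hand side by the binomial theorem produces $\sum_i \binom{n}{i} X^i$, while expanding each factor on the right gives $\prod_{j=0}^{m} \sum_{b_j=0}^{a_j} \binom{a_j}{b_j} X^{b_j p^j}$, whose generic monomial is $\prod_{j=0}^{m} \binom{a_j}{b_j}\, X^{\sum_j b_j p^j}$ with each digit constrained by $0 \le b_j \le a_j \le p-1$.

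The crux is then to compare coefficients of a fixed power $X^i$ on both sides. Since each $b_j$ satisfies $0 \le b_j \le a_j \le p-1$, the exponent $\sum_j b_j p^j$ is literally the base-$p$ representation of the integer it equals; by uniqueness of base-$p$ representation, the tuple $(b_0,\dots,b_m)$ matching a given $i = \sum_j b_j p^j$ is exactly the digit string of $i$. Hence the coefficient of $X^i$ on the right is $\prod_{j=0}^{m} \binom{a_j}{b_j}$, where $b_j$ denotes the $j$-th base-$p$ digit of $i$, and equating it with the coefficient $\binom{n}{i}$ on the left yields the asserted congruence.

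The main obstacle---indeed the only delicate point---is the bookkeeping in this coefficient comparison: one must argue that no two distinct digit strings $(b_j)$ produce the same exponent, and that any $i$ whose $j$-th digit exceeds the corresponding $a_j$ contributes no monomial on the right, which is consistent with the convention $\binom{a_j}{b_j} = 0$ in that case (so that $\binom{n}{i} \equiv 0 \pmod{p}$ there as well). Once the uniqueness of base-$p$ expansions is invoked to rule out collisions, the identity follows immediately for every $i$.
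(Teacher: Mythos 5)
The paper states Lucas' formula as a classical lemma and offers no proof of its own, so there is no internal argument to compare against; your proposal has to stand on its own merits, and it does. Your generating-function argument is the standard complete proof: from $\binom{p}{k}\equiv 0 \pmod{p}$ for $1\le k\le p-1$ you get $(1+X)^{p}\equiv 1+X^{p} \pmod{p}$, induction gives $(1+X)^{p^{j}}\equiv 1+X^{p^{j}} \pmod{p}$, and factoring $(1+X)^{n}=\prod_{j=0}^{m}\bigl((1+X)^{p^{j}}\bigr)^{a_{j}}$ reduces everything to a coefficient comparison in $\mathbb{F}_{p}[X]$. You also correctly isolated the only delicate point: because each $b_{j}$ in the expansion of $\prod_{j}(1+X^{p^{j}})^{a_{j}}$ satisfies $0\le b_{j}\le a_{j}\le p-1$, uniqueness of base-$p$ representation rules out collisions between distinct digit tuples, and the convention $\binom{a_{j}}{b_{j}}=0$ for $b_{j}>a_{j}$ consistently handles the case where a digit of $i$ exceeds the corresponding digit of $n$ (forcing $\binom{n}{i}\equiv 0\pmod{p}$, exactly as the product formula predicts). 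One cosmetic remark: the argument implicitly uses that the $a_{j}$ are genuine base-$p$ digits, i.e.\ $0\le a_{j}\le p-1$, and that $n$ and $i$ are padded to the same length $m+1$; both are harmless normalizations worth stating explicitly. This proof is fully adequate for the lemma's role in the paper, namely justifying that $\binom{q-1}{q+1-r}\not\equiv 0\pmod{p}$ in the proof of Theorem 3.6, since the base-$p$ digits of $q-1=p^{e}-1$ are all $p-1$.
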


       \begin{Lemma}
       	Let $q=2^k$ and $k>0$ be an integer. Then $f(x)=x+x^2+x^4$ is a permutation polynomial over $\mathbb{F}_{q}$ if and only if $k \not \equiv 0(\bmod 3)$.
       	\label{lem1}
       \end{Lemma}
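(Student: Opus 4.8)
The plan is to exploit the fact that $f(x)=x+x^{2}+x^{4}$ is an additive ($\mathbb{F}_{2}$-linearized) polynomial, since every exponent is a power of $2$. For such a polynomial the evaluation map $c\mapsto f(c)$ is $\mathbb{F}_{2}$-linear on $\mathbb{F}_{2^{k}}$, so by finite-dimensionality it is bijective if and only if it is injective, i.e. if and only if its kernel is trivial. Hence $f$ permutes $\mathbb{F}_{2^{k}}$ exactly when $x=0$ is the only root of $f$ lying in $\mathbb{F}_{2^{k}}$. This reduces the whole problem to counting the roots of $f$ that belong to the ground field, which is cleaner than invoking Hermite's criterion directly (although the criterion recorded above would also furnish a proof).

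Next I would factor out $x$, writing $f(x)=x\,(x^{3}+x+1)$, so that the nonzero roots of $f$ are precisely the roots of $g(x)=x^{3}+x+1$. Since $g(0)=g(1)=1\neq 0$, the cubic $g$ has no root in $\mathbb{F}_{2}$, and being of degree $3$ it is therefore irreducible over $\mathbb{F}_{2}$. Because $\mathbb{F}_{2}$ is perfect, this irreducible $g$ is separable, so its three roots are distinct; each of them generates $\mathbb{F}_{2^{3}}=\mathbb{F}_{8}$ over $\mathbb{F}_{2}$ and lies in no proper subfield of $\mathbb{F}_{8}$.

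Finally I would apply the standard subfield containment $\mathbb{F}_{2^{3}}\subseteq\mathbb{F}_{2^{k}}\iff 3\mid k$. It follows that a root of $g$ lies in $\mathbb{F}_{2^{k}}$ if and only if $3\mid k$; equivalently, $f$ has a nonzero root in $\mathbb{F}_{2^{k}}$ exactly when $3\mid k$, and otherwise $0$ is its unique root in $\mathbb{F}_{2^{k}}$. Combining this with the linearized reduction of the first step, $f$ is a permutation polynomial of $\mathbb{F}_{2^{k}}$ if and only if $3\nmid k$, that is, $k\not\equiv 0\pmod{3}$, as claimed.

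The genuinely substantive content lies in two small facts, and these are the steps I would state most carefully: first, the reduction that a linearized polynomial permutes the field precisely when it has trivial kernel; and second, the passage from ``$g$ is irreducible of degree $3$ over $\mathbb{F}_{2}$'' to ``its roots lie in $\mathbb{F}_{2^{k}}$ exactly when $3\mid k$,'' which rests on the splitting field of $g$ being $\mathbb{F}_{8}$ together with the subfield lattice of $\mathbb{F}_{2^{k}}$. Everything else — the factorization $f(x)=x(x^{3}+x+1)$ and the evaluation $g(0)=g(1)=1$ — is routine bookkeeping, so I expect no serious obstacle.
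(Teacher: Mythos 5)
Your proof is correct and complete: the reduction from ``permutation'' to ``trivial kernel'' is valid because $f(x)=x+x^{2}+x^{4}$ is $\mathbb{F}_{2}$-linearized, the factorization $f(x)=x(x^{3}+x+1)$ with $x^{3}+x+1$ irreducible over $\mathbb{F}_{2}$ is right, and the subfield criterion $\mathbb{F}_{2^{3}}\subseteq\mathbb{F}_{2^{k}}\iff 3\mid k$ finishes the argument. The paper itself omits the proof of this lemma (stating only that it ``can be easily proved''), and your linearized-kernel argument is precisely the standard route the authors evidently had in mind, so there is nothing to fault and no divergence to report.
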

{    \begin{proof}
This lemma can be easily proved. We omit it here.    \end{proof} }

       \begin{Lemma}
       	\emph{ \cite{SZM}}
       	\label{san1}
       	{ Let $a \in \mathbb{F}_{2^k}^\ast$. Then the cubic equation $x^3+x=a$ has a unique solution in $ \mathbb{F}_{2^k}$ if and only if ${\operatorname{Tr}_k}\left(a^{-1}\right) \neq {\operatorname{Tr}_k}(1)$.}
       \end{Lemma}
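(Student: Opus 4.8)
The plan is to determine the exact number of solutions of $x^3+x=a$ in $\mathbb{F}_{2^k}$ by reducing, once a single root is known, to a quadratic whose solvability is governed by a trace condition, and then to convert that trace condition into one on $a^{-1}$ alone. First I would observe that, writing $g(x)=x^3+x+a$, the formal derivative is $g'(x)=x^2+1=(x+1)^2$, and since $g(1)=a\neq 0$ the polynomial $g$ is separable, hence has three distinct roots in $\overline{\mathbb{F}}_{2^k}$. Because the coefficient of $x^2$ vanishes, the three roots sum to $0$, so if two of them lie in $\mathbb{F}_{2^k}$ the third does as well. Consequently $g$ has exactly $0$, $1$, or $3$ roots in $\mathbb{F}_{2^k}$, and ``a unique solution'' is equivalent to excluding both the $0$-root and the $3$-root cases.

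Next, suppose $x_0\in\mathbb{F}_{2^k}$ is a root; note $x_0\neq 0$ since $a\neq 0$. Substituting $x=x_0+y$ and using $a=x_0^3+x_0$ factors the equation as $y\bigl(y^2+x_0y+(x_0^2+1)\bigr)=0$, so the remaining two roots come from the quadratic $y^2+x_0y+(x_0^2+1)=0$. By the standard criterion (substitute $y=x_0 z$), this quadratic is solvable in $\mathbb{F}_{2^k}$ if and only if $\operatorname{Tr}_k\!\left((x_0^2+1)/x_0^2\right)=\operatorname{Tr}_k\!\left(1+x_0^{-2}\right)=0$, i.e.\ $\operatorname{Tr}_k(x_0^{-1})=\operatorname{Tr}_k(1)$. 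Thus, given a root $x_0$, the cubic has $3$ roots precisely when $\operatorname{Tr}_k(x_0^{-1})=\operatorname{Tr}_k(1)$ and a unique root otherwise. To replace $x_0^{-1}$ by $a^{-1}$ I would use $a=x_0(x_0+1)^2$ together with the characteristic-$2$ partial-fraction identity $\tfrac{1}{x_0(x_0+1)^2}=\tfrac{1}{x_0}+\tfrac{1}{x_0+1}+\tfrac{1}{(x_0+1)^2}$; since $\operatorname{Tr}_k$ is invariant under squaring, the terms in $(x_0+1)^{-1}$ and $(x_0+1)^{-2}$ have equal trace and cancel, giving $\operatorname{Tr}_k(a^{-1})=\operatorname{Tr}_k(x_0^{-1})$. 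This settles the cases of $1$ and $3$ roots.

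The main obstacle is the remaining case where $g$ has no root in $\mathbb{F}_{2^k}$, for then no $x_0\in\mathbb{F}_{2^k}$ is available and one must still show $\operatorname{Tr}_k(a^{-1})=\operatorname{Tr}_k(1)$ (so that this case correctly corresponds to ``not unique''). Here $g$ is irreducible over $\mathbb{F}_{2^k}$ and splits in $\mathbb{F}_{2^{3k}}$; choosing a root $\alpha\in\mathbb{F}_{2^{3k}}$, the same partial-fraction computation carried out over $\mathbb{F}_{2^{3k}}$ gives $\operatorname{Tr}_{3k}(a^{-1})=\operatorname{Tr}_{3k}(\alpha^{-1})$. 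Since $g$ has three roots in $\mathbb{F}_{2^{3k}}$, the quadratic criterion applied at level $3k$ forces $\operatorname{Tr}_{3k}(\alpha^{-1})=\operatorname{Tr}_{3k}(1)$. Finally, transitivity of the trace together with $a^{-1}\in\mathbb{F}_{2^k}$ yields $\operatorname{Tr}_{3k}(a^{-1})=\operatorname{Tr}_k(a^{-1})$, while $\operatorname{Tr}_{3k}(1)=\operatorname{Tr}_k(1)$ because $3k\equiv k\pmod 2$; combining these gives $\operatorname{Tr}_k(a^{-1})=\operatorname{Tr}_k(1)$, as required.

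Assembling the three cases, $\operatorname{Tr}_k(a^{-1})=\operatorname{Tr}_k(1)$ holds exactly when there are $0$ or $3$ roots, and $\operatorname{Tr}_k(a^{-1})\neq\operatorname{Tr}_k(1)$ exactly when there is a unique root, which is the desired equivalence. I expect the elementary substitution and the partial-fraction identity to be routine; the delicate point is the no-root case, where one must argue in the extension $\mathbb{F}_{2^{3k}}$ and track the trace through the tower $\mathbb{F}_{2^{3k}}/\mathbb{F}_{2^k}/\mathbb{F}_2$ correctly.
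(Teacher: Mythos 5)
Your proof is correct, and it is worth noting at the outset that the paper itself offers no proof to compare against: Lemma \ref{san1} is quoted verbatim from Berlekamp--Rumsey--Solomon \cite{SZM}, so you are supplying an argument the paper omits. Your route is the standard field-theoretic one, and every step checks out: separability of $g(x)=x^3+x+a$ follows from $g'(x)=(x+1)^2$ together with $g(1)=a\neq 0$; the vanishing coefficient of $x^2$ forces the root count in $\mathbb{F}_{2^k}$ to be $0$, $1$, or $3$; the shift $x=x_0+y$ correctly yields $y\bigl(y^2+x_0y+(x_0^2+1)\bigr)$; the criterion $\operatorname{Tr}_k\bigl(c/b^2\bigr)=0$ for $y^2+by+c$ gives the condition $\operatorname{Tr}_k(x_0^{-1})=\operatorname{Tr}_k(1)$; and the partial-fraction identity $\frac{1}{x_0(x_0+1)^2}=\frac{1}{x_0}+\frac{1}{x_0+1}+\frac{1}{(x_0+1)^2}$ (valid in characteristic $2$, as one verifies by clearing denominators) transfers the trace condition from $x_0^{-1}$ to $a^{-1}$. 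Your handling of the no-root case is the delicate part and you got it right: irreducibility gives a root $\alpha\in\mathbb{F}_{2^{3k}}$, the same computation at level $3k$ gives $\operatorname{Tr}_{3k}(a^{-1})=\operatorname{Tr}_{3k}(\alpha^{-1})=\operatorname{Tr}_{3k}(1)$, and the descent uses $\operatorname{Tr}_{3k}(u)=\operatorname{Tr}_k(3u)=\operatorname{Tr}_k(u)$ for $u\in\mathbb{F}_{2^k}$ together with $\operatorname{Tr}_n(1)=n\bmod 2$, so $\operatorname{Tr}_{3k}(1)=\operatorname{Tr}_k(1)$. Two hairline points deserve an explicit word, though both are immediate: you should note $x_0\neq 1$ (it follows from $g(1)=a\neq 0$), since otherwise the terms $(x_0+1)^{-1}$ and $(x_0+1)^{-2}$ in the partial fractions are undefined; and in the three-root case it is separability that guarantees the two roots of the quadratic are genuinely new, so the count is exactly $3$ rather than a degenerate coincidence with $x_0$. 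With those remarks added, your argument is a complete and self-contained proof of the cited lemma.
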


    \section{Permutation binomials}
    \label{sec2}
     {G}iven $f(x)=x^{m}+ax^{n}\in \mathbb{F}_{q}[x]$, where $0<n<m<q$ and $a \in \mathbb{F}_{q}^{*}$, there exist integers $r,t,d>0$ with $\mathrm{gcd}(t,q-1)=1$ and $d \mid q-1$ such that $f({{x}^{t}})\equiv {{x}^{r}}({{x}^{(q-1)/d}}+a)(\bmod {{x}^{q}}-x)$ \cite{XH1}. Therefore, w.l.o.g, we only consider {permutation binomials with} the form $x^{r}(x^{(q-1)/d}+a)$. Inspired by Hou and Lappano \cite{XHSD}, in this section, we consider  permutation binomials over finite fields with even extensions, {that is, over $\gf_{q^2}$}.

     \subsection{{Known permutation binomials over $\gf_{q^2}$}}
     \label{kpb}
     First, we review the known permutation binomials over finite fields with even extensions.
               \begin{Th}
               	\emph{ \cite{Zieve}}
               	Let $r,d>0$ be integers such that $d\mid q^2-1$ and $a \in \mathbb{F}_{q^2}^{*}$. Further assume that $\eta +\frac{a}{\eta }\in {{\mu }_{(q^2-1)/d}}$ for all $\eta \in {{\mu }_{2d}}$. Then $x^{r}(x^{(q^2-1)/d}+a)$ is a permutation polynomial of $\mathbb{F}_{q^2}$ if and only if $-a\notin {{\mu }_{d}}$, $\mathrm{gcd}(2d,2r+(q^2-1)/d) \le 2$.
               \end{Th}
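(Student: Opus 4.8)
The plan is to reduce the problem over $\mathbb{F}_{q^2}$ to a question about a single power-type map on the cyclic group $\mu_d$, and then translate bijectivity of that map into the stated gcd condition. Write $N=(q^2-1)/d$ and $h(x)=x+a$, so that $f(x)=x^r h(x^N)$. First I would invoke the criterion of Zieve (the same \cite[Lemma 2.1]{Zieve} from which Lemma \ref{ky1} is derived): $f$ permutes $\mathbb{F}_{q^2}$ if and only if $\gcd(r,N)=1$ and the auxiliary map $g(x)=x^r(x+a)^{N}$ permutes $\mu_d$. Before analysing $g$, observe that for $\zeta\in\mu_d$ one has $\zeta^r\in\mu_d$ and $\bigl((\zeta+a)^N\bigr)^d=(\zeta+a)^{q^2-1}=1$, so $g$ sends $\mu_d$ into $\mu_d$ exactly when $\zeta+a\neq 0$ for every $\zeta\in\mu_d$, i.e. when $-a\notin\mu_d$; if $-a\in\mu_d$ then $g$ vanishes somewhere on $\mu_d$ and cannot be a bijection. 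This isolates the first asserted condition.

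The key step is to linearise $g$ by means of the hypothesis. Every $\zeta\in\mu_d$ can be written $\zeta=\eta^2$ with $\eta\in\mu_{2d}$, the squaring map $\mu_{2d}\to\mu_d$ being surjective. For such $\eta$ I would factor $\zeta+a=\eta^2+a=\eta\bigl(\eta+a\eta^{-1}\bigr)$ and raise to the $N$-th power; since the hypothesis gives $\eta+a\eta^{-1}\in\mu_{N}$, the factor $(\eta+a\eta^{-1})^N$ equals $1$, so $(\zeta+a)^N=\eta^N$. Hence $g(\zeta)=\zeta^r(\zeta+a)^N=\eta^{2r}\eta^{N}=\eta^{2r+N}$. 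Thus, through the parametrisation $\zeta=\eta^2$, the map $g$ becomes the pure power map $\eta\mapsto\eta^{2r+N}$ on $\mu_{2d}$; converting the binomial factor into a monomial is precisely the purpose of the hypothesis.

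It then remains to decide when $g$ permutes $\mu_d$. Put $s=2r+N$. The image of $g$ equals $\{\eta^{s}:\eta\in\mu_{2d}\}$, a subgroup of $\mu_{2d}$ of order $2d/\gcd(2d,s)$, and we have just seen this image lies in $\mu_d$. Since a bijection of $\mu_d$ must have image of size $d$, $g$ permutes $\mu_d$ if and only if $2d/\gcd(2d,s)=d$, that is $\gcd(2d,2r+N)\le 2$ — the second asserted condition. (One must first check the power map is well defined on $\mu_d$, i.e. that $\eta$ and $-\eta$ give the same value; this forces $s$ even, which the hypothesis supplies by evaluation at $\eta=\pm 1$, and in turn guarantees $\gcd(2d,s)\ge 2$, so that ``$\le 2$'' and ``$=2$'' coincide.) Feeding these facts back through the Zieve criterion yields the equivalence in both directions.

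The main obstacle I anticipate is the counting of the last paragraph: one must track the orders of $\mu_{2d}$ and of its image under the $s$-th power map carefully, verify that the squaring map is genuinely two-to-one from $\mu_{2d}$ onto $\mu_d$ (so the parametrisation $\zeta=\eta^2$ is consistent), and reconcile the coprimality requirement $\gcd(r,N)=1$ coming from the reduction lemma with the single condition $\gcd(2d,2r+N)\le 2$ appearing in the statement. The necessity direction needs equal care, since one must show that any failure of $\gcd(2d,2r+N)\le 2$ shrinks the image of $g$ strictly inside $\mu_d$ and thereby destroys the permutation property.
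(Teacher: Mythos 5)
The paper itself contains no proof of this theorem --- it is quoted verbatim from \cite{Zieve} in the survey of known results --- so your proposal can only be measured against Zieve's original argument, and your plan (reduce via \cite[Lemma 2.1]{Zieve} to the map $g(x)=x^r(x+a)^N$ on $\mu_d$ with $N=(q^2-1)/d$, use the hypothesis to convert $g(\eta^2)$ into the monomial $\eta^{2r+N}$ on $\mu_{2d}$, then count the image) is exactly that argument. However, the issue you flag at the end as an ``anticipated obstacle'' --- reconciling the requirement $\gcd(r,N)=1$ from the reduction lemma with the two stated conditions --- is a genuine gap, and in fact an unfillable one: $\gcd(r,N)=1$ is independent of the stated conditions and does not follow from them. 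Take $d=1$: the hypothesis only demands $\pm(1+a)\in\mu_{q^2-1}$, i.e.\ $a\neq-1$; the condition $-a\notin\mu_1$ is again $a\neq-1$; and $\gcd(2,\,2r+q^2-1)\le 2$ holds trivially; yet $f(x)=x^r(x^{q^2-1}+a)$ agrees with $(1+a)x^r$ on $\mathbb{F}_{q^2}^{*}$, so it permutes $\mathbb{F}_{q^2}$ only when $\gcd(r,q^2-1)=1$. Concretely, $q=3$, $r=2$, $a=1$ satisfies both stated conditions, while $f(x)=x^2(x^8+1)$, which equals $2x^2$ on $\mathbb{F}_9^{*}$, is not injective. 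So the statement as quoted is false: the survey has dropped Zieve's additional condition $\gcd(r,(q^2-1)/d)=1$, and also the ``$\cup\{0\}$'' in the original hypothesis $\eta+a/\eta\in\mu_{(q^2-1)/d}\cup\{0\}$ --- note that with the hypothesis as quoted, $-a\notin\mu_d$ is automatic (if $-a\in\mu_d$, some $\eta\in\mu_{2d}$ has $\eta^2=-a$, whence $\eta+a/\eta=0\notin\mu_N$), so condition (i) would be redundant, a further sign of misquotation. Your last sentence hopes that a failure of $\gcd(2d,2r+N)\le2$ is the only way to destroy the permutation property; the example shows it is not.

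With the missing condition reinstated, your outline closes correctly and coincides with Zieve's proof, modulo one small case you gloss over: in characteristic $2$ your parenthetical claim that evaluating the hypothesis at $\eta=\pm1$ forces $s=2r+N$ to be even is wrong ($-1=1$ there, and $N$ is odd since $q^2-1$ is odd), but it is also unnecessary, because then $\mu_{2d}=\mu_d$ has odd order, squaring is already a bijection of $\mu_d$ (so well-definedness is automatic), and $\gcd(2d,s)=\gcd(d,s)$ is odd, so ``$\le 2$'' means ``$=1$'', which is precisely bijectivity of $\eta\mapsto\eta^s$ on $\mu_d$. In odd characteristic your evenness argument is fine (applying the hypothesis to both $\eta$ and $-\eta$ forces $\eta^N=(-\eta)^N$, hence $N$ even), and your image-counting $2d/\gcd(2d,s)=d$ is correct. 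In short: right method, one genuinely failing step; a complete blind proof should have either uncovered the counterexample or restored $\gcd(r,(q^2-1)/d)=1$ as a third equivalent condition.
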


               \begin{Th}
               	\emph{ \cite{Zieve1}}
               	Let $r$ and $d$ be positive integers, and let $\beta \in \mathbb{F}_{q^{2}}$ be such that $\beta^{q+1}=1$. Then $x^{r+d(q-1)}+\beta^{-1}x^{r}$ is a permutation polynomial of $\mathbb{F}_{q^2}$ if and only if all of the following hold:
               	\begin{enumerate}[(i)]
               		\item $\mathrm{gcd}(r,q-1)=1$.
               		\item $\mathrm{gcd}(r-d,q+1)=1$.
               		\item $(-\beta)^{(q+1)/\mathrm{gcd}(q+1,d)} \neq 1$.
               	\end{enumerate}
               \end{Th}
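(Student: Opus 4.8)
The plan is to reduce the permutation property over $\mathbb{F}_{q^2}$ to a permutation property on the norm-one subgroup $\mu_{q+1}$, using the well-known multiplicative criterion for polynomials of the form $x^r h(x^{(q^n-1)/d})$ (the source cited above, \cite[Lemma 2.1]{Zieve}). First I would rewrite the binomial as a composition with a power map. Since $q-1 = (q^2-1)/(q+1)$, we have
$$x^{r+d(q-1)} + \beta^{-1}x^r = x^r\bigl(x^{d(q-1)} + \beta^{-1}\bigr) = x^r\,h\bigl(x^{(q^2-1)/(q+1)}\bigr), \qquad h(y) = y^d + \beta^{-1}.$$
By the criterion, $f$ permutes $\mathbb{F}_{q^2}$ if and only if $\gcd(r,q-1)=1$ --- which is precisely condition (i) --- and the auxiliary map $g(y) = y^r h(y)^{q-1}$ permutes $\mu_{q+1}$. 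Everything then reduces to analyzing $g$ on $\mu_{q+1}$.

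Second, I would simplify $g$ on $\mu_{q+1}$ using $y^q = y^{-1}$ and, since $\beta^{q+1}=1$, $\beta^q = \beta^{-1}$. A short computation gives $(y^d+\beta^{-1})^q = y^{-d} + \beta$, and hence, whenever $y^d + \beta^{-1} \neq 0$,
$$(y^d + \beta^{-1})^{q-1} = \frac{y^{-d}+\beta}{y^d+\beta^{-1}} = \beta y^{-d}, \qquad\text{so}\qquad g(y) = \beta\,y^{r-d}.$$
Thus on the part of $\mu_{q+1}$ where the inner factor does not vanish, $g$ is just the monomial $\beta y^{r-d}$.

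Third, I would handle the two remaining conditions. Observe that $g$ maps $\mu_{q+1}$ into itself if and only if $y^d + \beta^{-1}$ never vanishes on $\mu_{q+1}$; for otherwise some $y_0$ gives $g(y_0)=0 \notin \mu_{q+1}$. Since the $d$-th power map on $\mu_{q+1}$ has image $\mu_{(q+1)/\gcd(d,q+1)}$ and $-\beta^{-1}\in\mu_{q+1}$, the equation $y^d = -\beta^{-1}$ is solvable in $\mu_{q+1}$ exactly when $(-\beta)^{(q+1)/\gcd(q+1,d)} = 1$; hence nonvanishing is equivalent to condition (iii). Granting (iii), $g(y) = \beta y^{r-d}$ on all of $\mu_{q+1}$, and since multiplication by $\beta$ is a bijection of $\mu_{q+1}$, this permutes $\mu_{q+1}$ if and only if $y\mapsto y^{r-d}$ does, i.e. iff $\gcd(r-d,q+1)=1$, which is condition (ii). Combining, $g$ permutes $\mu_{q+1}$ iff (ii) and (iii) both hold, which together with (i) yields the claim.

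The point requiring the most care is the necessity direction: failure of (iii) must force $g$ to send some point of $\mu_{q+1}$ to $0$, so that $g$ fails even to be a self-map of $\mu_{q+1}$. For this one must verify that $-\beta^{-1}$ genuinely lies in $\mu_{q+1}$ (true since $-1,\beta\in\mu_{q+1}$ in every characteristic, as $(-1)^{q+1}=1$) and correctly identify the image of the $d$-th power map so as to recover the exponent $(q+1)/\gcd(q+1,d)$ appearing in (iii). The remaining steps are routine consequences of the cyclic structure of $\mu_{q+1}$.
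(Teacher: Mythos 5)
Your proof is correct: the reduction via the multiplicative criterion \cite[Lemma 2.1]{Zieve} to checking that $g(y)=y^{r}h(y)^{q-1}$ permutes $\mu_{q+1}$, the simplification $(y^{d}+\beta^{-1})^{q-1}=\beta y^{-d}$ using $y^{q}=y^{-1}$ and $\beta^{q}=\beta^{-1}$, and the identification of nonvanishing of $h$ on $\mu_{q+1}$ with condition (iii) via the image of the $d$-th power map on the cyclic group $\mu_{q+1}$ are all sound, including the necessity direction. The paper itself states this theorem without proof, quoting it as a known result from \cite{Zieve1}, and your argument is essentially the proof in that cited source, where the induced map on $\mu_{q+1}$ is likewise computed to be the monomial $\beta y^{r-d}$, so there is nothing to compare beyond noting the agreement.
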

               \begin{Th}
               	\emph{ \cite{XHSD}}
               	Let $f=ax+x^{3q-2}\in \mathbb{F}_{q^2}[x]$, where $a \in \mathbb{F}_{q^2}^{*}$. Then $f$ is a permutation polynomial of $\mathbb{F}_{q^2}$ if one of the following occurs.
               	\begin{enumerate}[(i)]
               		\item $q=2^{2k+1}$ and $a^{\frac{q+1}{3}}$ is a primitive $3$rd root of unity.
               		\item $q=5$ and $a^2$ is a root of $(x+1)(x+2)(x-2)(x^2-x+1)$.
               		\item $q=2^3$ and $a^3$ is a root of $x^3+x+1$.
               		\item $q=11$ and $a^4$ is a root of $(x-5)(x+2)(x^2-x+1)$.
               		\item $q=17$ and $a^6=4,5$.
               		\item $q=23$ and $a^8=-1$.
               		\item $q=29$ and $a^{10}=-3$.
               	\end{enumerate}
               \end{Th}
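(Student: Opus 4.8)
The plan is to reduce the permutation property of $f$ over $\mathbb{F}_{q^2}$ to that of an induced self-map of the subgroup $\mu_{q+1}$ of $(q+1)$-th roots of unity, and then to analyse this induced map. First I would rewrite $f$: since $3q-2=1+3(q-1)$ and $x^{3(q-1)}=(x^{q-1})^3$, we have $f(x)=ax+x^{3q-2}=x\bigl(x^{3(q-1)}+a\bigr)=x\,h(x^{q-1})$ with $h(y)=y^3+a$. This is exactly of the shape $x^{r}h(x^{(q^2-1)/d})$ with $r=1$ and $d=q+1$, so that $(q^2-1)/d=q-1$. By the standard criterion for such polynomials (cf.\ \cite{Zieve}, of which Lemma \ref{ky1} is a special instance), and because $\gcd(r,q-1)=\gcd(1,q-1)=1$ holds automatically, $f$ permutes $\mathbb{F}_{q^2}$ if and only if $u^3+a$ has no zero on $\mu_{q+1}$ (equivalently $x=0$ is the only root of $f$) and the induced map $g(u)=u\,(u^3+a)^{q-1}$ permutes $\mu_{q+1}$.

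The heart of the infinite family (i) is that this induced map collapses to a monomial. Here $q=2^{2k+1}\equiv 2 \pmod 3$, so $3\mid q+1$ and $(q+1)/3$ is an integer. If $a^{(q+1)/3}=\omega$ is a primitive cube root of unity, then $a^{q+1}=\omega^3=1$, so $a\in\mu_{q+1}$ and hence $a^{q}=a^{-1}$. For any $u\in\mu_{q+1}$ we also have $u^{q}=u^{-1}$, whence
\[
(u^3+a)^{q-1}=\frac{u^{-3}+a^{-1}}{u^3+a}=\frac{1}{a\,u^{3}},
\]
using $u^{-3}+a^{-1}=(u^3+a)/(a\,u^3)$. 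Therefore $g(u)=u\cdot(a\,u^3)^{-1}=a^{-1}u^{-2}$. Since $q$ is even, $q+1$ is odd, so $\gcd(2,q+1)=1$ and $u\mapsto u^{-2}$ permutes $\mu_{q+1}$; hence so does $g$. The zero-free condition also holds, because $u^{3}$ runs through the subgroup $\mu_{(q+1)/3}$ while $a\notin\mu_{(q+1)/3}$ (as $a^{(q+1)/3}=\omega\neq 1$). This settles (i).

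For the sporadic cases (ii)--(vii) the same reduction applies, but the collapse fails. Note that in each case the listed power of $a$ is precisely $a^{(q+1)/3}$: indeed $q=5,8,11,17,23,29$ all satisfy $q\equiv 2\pmod 3$, with $(q+1)/3=2,3,4,6,8,10$, and a short computation shows $a^{q+1}=\bigl(a^{(q+1)/3}\bigr)^{3}\neq 1$ for each prescribed value, so $a\notin\mu_{q+1}$ and $g$ stays a genuine rational self-map of $\mu_{q+1}$ of bounded degree. For these fixed small $q$ I would verify that $g$ permutes $\mu_{q+1}$ directly: substitute $u^{q}=u^{-1}$ to write $g$ as a rational function, set $g(u)=g(v)$, clear denominators to a polynomial in $u,v$, factor out the trivial factor $(u-v)$ (which is $(u+v)$ in the case $q=2^3$), and check that the cofactor has no zero $(u,v)\in\mu_{q+1}^{2}$ with $u\neq v$ for each listed value of $a^{(q+1)/3}$; one must also confirm $u^{3}+a\neq 0$ on $\mu_{q+1}$ in each case.

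The main obstacle is this last step, and more fundamentally the reason the sporadic list is exactly these $q$. The clean monomial collapse is special to norm-one $a$; once $a\notin\mu_{q+1}$ the bijectivity of $g$ is governed by the absolutely irreducible factors of $\bigl(g(u)-g(v)\bigr)/(u-v)$, and a Hasse--Weil estimate on the associated curve shows that for large $q$ no value of $a^{(q+1)/3}$ outside family (i) can work, leaving only finitely many $q$ --- precisely those in (ii)--(vii) --- to be disposed of by finite verification. Carrying out that bound together with the finite search is where the real work lies.
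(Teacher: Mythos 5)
The paper itself offers no proof of this theorem: it is quoted verbatim from Hou and Lappano \cite{XHSD} in the survey subsection of known binomials, so there is no internal argument to compare against. The original proof in \cite{XHSD} runs through Hermite's criterion: one computes the power sums $\sum_{x\in\mathbb{F}_{q^2}}f(x)^{\alpha+q\beta}$ and analyzes the resulting binomial-coefficient sums and recurrences --- the same machinery this paper deploys for its own Theorems 3.5 and 3.6. Your route is genuinely different: the multiplicative reduction $f(x)=x\,h(x^{q-1})$, $h(y)=y^3+a$, via Zieve's criterion, reducing everything to whether $g(u)=u(u^3+a)^{q-1}$ permutes $\mu_{q+1}$. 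For the infinite family (i) your argument is complete and correct: with $a^{q+1}=1$ the fraction collapses (in characteristic $2$) to $g(u)=a^{-1}u^{-2}$, which is bijective on $\mu_{q+1}$ because $q+1$ is odd, and your zero-freeness check ($u^3=a$ would force $a^{(q+1)/3}=1\neq\omega$) is right. This is considerably cleaner than the power-sum computation for that case; what the Hermite-style approach buys instead is a uniform framework in which the sporadic cases and, more importantly, the necessity direction fall out of the same calculation.

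Two caveats on the remainder. First, the theorem as stated is sufficiency only (``$f$ is a permutation polynomial \emph{if} one of the following occurs''), so your closing paragraph about Hasse--Weil bounds addresses the converse proved in \cite{XHSD} but not asserted here; for the statement at hand, only the finite verifications for (ii)--(vii) are needed, and over fields of order at most $29^2=841$ these are routine --- though you sketch rather than execute them. Second, a small gap in that sketch: $g$ depends on $a$ itself, not merely on $a^{(q+1)/3}$, so ``check each listed value of $a^{(q+1)/3}$'' needs justification. It follows from the scaling identity $f_a(\lambda x)=\lambda^{3q-2}f_{a\lambda^{3(1-q)}}(x)$ together with the fact that $\{\lambda^{3(1-q)}:\lambda\in\mathbb{F}_{q^2}^{*}\}=\mu_{(q+1)/3}$, which is exactly the fiber of the map $a\mapsto a^{(q+1)/3}$; hence the permutation property indeed depends only on $a^{(q+1)/3}$, and one representative $a$ per listed value suffices. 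With that observation added and the finite checks carried out, your proof of the stated result is complete.
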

               \begin{Th}
               \emph{ \cite{SD}}
               Let $f=ax+x^{5q-4}\in \mathbb{F}_{q^2}[x]$, where $a \in \mathbb{F}_{q^2}^{*}$. Then $f$ is a permutation polynomial of $\mathbb{F}_{q^2}$ if one of the following occurs.
               \begin{enumerate}[(i)]
               	\item $q=2^{4k+2}$ and $a^{\frac{q+1}{5}}$ is a primitive $5$th root of unity.
               	\item $q=3^2$ and $a^2$ is a root of $(x+1)(x^2+1)(x^2+x+2)(x^2+2x+2)(x^4+x^2+x+1)(x^4+x^3+x^2+1)(x^4+2x^3+x^2+2x+1)$.
               	
               	\item $q=19$ and $a^4$ is a root of $(x+1)(x+2)(x+3)(x+4)(x+5)(x+9)(x+10)(x+13)(x+17)(x^2+3x+16)(x^2+4x+1)(x^2+18x+6)$.
               	\item $q=29$ and $a^6\in \{15,18,22,23\}$.
               	\item $q=7^2$ and $a^{10}$ is a root of $x^2+4x+1$.
               	\item $q=59$ and $a^{12}$ is a root of $(x^2+x+1)(x^3+x+1)$.
               	\item $q=2^6$ and $a^{13}$ is a root of $(x^2+x+1)(x^3+x+1)$.
               \end{enumerate}
               \end{Th}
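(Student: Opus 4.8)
The plan is to reduce the question to a permutation problem on the small group $\mu_{q+1}$, exploit the clean structure available in the infinite family (i), and treat the remaining items as finite verifications. First I would rewrite $f$ in the shape $f(x)=x^{r}h\!\left(x^{(q^{2}-1)/d}\right)$ with $r=1$, $d=q+1$ and $h(X)=a+X^{5}$, since $x^{5q-4}=x\cdot\left(x^{q-1}\right)^{5}$ and $(q^{2}-1)/d=q-1$. By the standard criterion that transfers permutations of $\mathbb{F}_{q^{2}}$ to permutations of $\mu_{q+1}$ (the reduction underlying the result of \cite{Zieve} reviewed above), and because $\mathrm{gcd}(r,(q^{2}-1)/d)=\mathrm{gcd}(1,q-1)=1$, the polynomial $f$ permutes $\mathbb{F}_{q^{2}}$ if and only if
\[
g(y):=y\,(a+y^{5})^{q-1}
\]
permutes $\mu_{q+1}$; for this to be well defined one additionally needs $a+y^{5}\neq 0$ for every $y\in\mu_{q+1}$. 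A preliminary observation that organizes all seven cases is that $5\mid q+1$ throughout, and that replacing $x$ by $\lambda x$ sends $a$ to $a\,(\lambda^{q-1})^{-5}$; as $\lambda^{q-1}$ runs over $\mu_{q+1}$, the permutation property depends on $a$ only through the invariant $b:=a^{(q+1)/5}$, which is exactly why each hypothesis is phrased as a condition on $a^{(q+1)/5}$.

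For the infinite family (i), I would first note that $q=2^{4k+2}$ describes precisely the powers of two with $5\mid q+1$ (as $2^{m}\equiv 4\pmod 5$ iff $m\equiv 2\pmod 4$). The hypothesis that $b=a^{(q+1)/5}$ is a primitive fifth root of unity forces $a^{q+1}=b^{5}=1$, so $a\in\mu_{q+1}$, and simultaneously $b\neq 1$, i.e.\ $a\notin\mu_{(q+1)/5}$. The latter means $a$ has no fifth root in $\mu_{q+1}$, so $a+y^{5}\neq 0$ on $\mu_{q+1}$ and $g$ is well defined. Now using $a^{q}=a^{-1}$ and $y^{q}=y^{-1}$ for $y\in\mu_{q+1}$, a short computation gives $(a+y^{5})^{q-1}=(a^{-1}+y^{-5})/(a+y^{5})=1/(a\,y^{5})$, hence $g(y)=a^{-1}y^{-4}$ on $\mu_{q+1}$. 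Since $q$ is even, $q+1$ is odd and $\mathrm{gcd}(4,q+1)=1$, so $y\mapsto y^{-4}$ permutes $\mu_{q+1}$; composing with multiplication by the constant $a^{-1}\in\mu_{q+1}$ yields a permutation of $\mu_{q+1}$, which finishes (i).

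For the sporadic cases (ii)--(vii) the element $a$ generally does not lie in $\mu_{q+1}$, so $g$ does not collapse to a monomial. Here I would clear the Frobenius by writing $(a+y^{5})^{q-1}=(a^{q}+y^{-5})/(a+y^{5})$ and reduce $g$ to a bounded-degree rational function of $y$ on the $(q+1)$-element set $\mu_{q+1}$, then determine exactly which values of the invariant $b=a^{(q+1)/5}$ make this map injective. Because the condition depends only on $b$, and $b$ takes finitely many values for each of the listed $q$, this is a finite check whose outcome is recorded by the stated factorizations, with the exponent on $a$ in each item equal to $(q+1)/5$, matching the invariant $b$.

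The main obstacle is precisely the sporadic list: unlike family (i), there is no uniform algebraic mechanism, so one must carry out the injectivity analysis of $g$ on $\mu_{q+1}$ for each $q$ separately and, crucially, argue that the displayed conditions on $b$ are not merely sufficient but capture every admissible $a$, and that no further small $q$ occurs. Establishing this completeness, rather than verifying any single case, is the delicate and essentially computational part of the proof.
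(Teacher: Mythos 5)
This theorem is quoted in the paper's survey subsection on known permutation binomials and is attributed to Lappano \cite{SD}; the paper itself contains no proof of it, so there is no in-paper argument to compare against, and your proposal must be judged on its own merits and against the source. On the merits, your reduction is sound: writing $f(x)=x\,h(x^{q-1})$ with $h(X)=a+X^{5}$ and invoking the $\mu_{q+1}$-criterion (Zieve's lemma, with $\gcd(1,q-1)=1$) is valid, your normalization argument correctly shows the permutation property depends on $a$ only through $b=a^{(q+1)/5}$ (explaining the exponents $2,4,6,10,12,13$ in the list), and your treatment of case (i) is a complete proof: in characteristic $2$ the hypotheses give $a\in\mu_{q+1}\setminus\mu_{(q+1)/5}$, whence $a+y^{5}\neq 0$ on $\mu_{q+1}$, $g(y)=a^{-1}y^{-4}$, and $\gcd(4,q+1)=1$ since $q+1$ is odd. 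For (ii)--(vii) your reduction to finitely many values of $b$ acting on a set of at most $65$ elements is a legitimate finite verification. One correction of emphasis: the statement as quoted is only a sufficiency claim (``$f$ is a permutation polynomial \emph{if} one of the following occurs''), so the ``completeness'' issue you single out as the delicate part --- that the list captures every admissible $a$ and every small $q$ --- is not required here; it belongs to Lappano's stronger determination theorem, of which this is the forward direction. Methodologically, your route genuinely differs from the source's: Lappano (following Hou's technique, the same one this paper uses to prove its own Theorem 3.5) works via Hermite-type power sums $\sum_{x\in\mathbb{F}_{q^2}}f(x)^{\alpha+\beta q}$ over $\mathbb{F}_{q^2}$, which is what makes the full ``only if'' determination feasible, whereas your multiplicative reduction to $\mu_{q+1}$ is cleaner and more conceptual for the sufficiency direction, collapsing the infinite family (i) to a monomial in one line at the cost of leaving the sporadic cases to case-by-case computation.
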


      \subsection{{N}ew classes of permutation binomials}

     \begin{Th}
     	Let $f(x)={{x}}({{x}^{q+1}}+a) \in \mathbb{F}_{q^{2}}[x]$. Then $f(x)$ is a permutation {binomial} over $\mathbb{F}_{q^{2}}$ if and only if $ q \not \equiv 1(\bmod3)$ and $a^{2(q-1)}-a^{q-1}+1=0$.
     	\label{th_BiPP1}
     \end{Th}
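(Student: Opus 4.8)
The plan is to push $f$ down to the subfield $\mathbb{F}_q$ through the norm map and thereby turn the question into one about a cubic. Note $f(x)=x\,(x^{q+1}+a)=x\,h(x^{q+1})$ with $h(y)=y+a$, so $f$ has the shape $x^{r}h(x^{s})$ with $r=1$, $s=q+1$, and $\gcd(r,s)=1$. If $a\in\mathbb{F}_q^{*}$ then the whole norm-fibre $\{x:x^{q+1}=-a\}$ is sent to $0=f(0)$, so $f$ is not a permutation; as $a^{q-1}=1$ also violates the displayed equation, both sides of the asserted equivalence fail, and we may assume $a\notin\mathbb{F}_q$. First I would partition $\mathbb{F}_{q^2}^{*}$ into the fibres $C_{c}=\{x:x^{q+1}=c\}$ of the norm $N(x)=x^{q+1}$, indexed by $c\in\mathbb{F}_q^{*}$ and each of size $q+1$. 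On $C_c$ one has $f(x)=x(c+a)$, multiplication by the nonzero scalar $c+a$; this sends $C_c$ bijectively onto $C_{g(c)}$, where $g(c):=c\,(c+a)^{q+1}$, because $N(f(x))=N(x)\,N(c+a)=c\,(c+a)^{q+1}$ for $x\in C_c$. Since the $C_c$ partition $\mathbb{F}_{q^2}^{*}$ and $f(0)=0$, $f$ permutes $\mathbb{F}_{q^2}$ if and only if the induced map $g$ permutes $\mathbb{F}_q^{*}$.

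Next I would compute $g$ explicitly on $\mathbb{F}_q$. For $c\in\mathbb{F}_q$ we have $c^{q}=c$, hence $(c+a)^{q+1}=(c+a)(c+a^{q})$ and
$$g(c)=c(c+a)(c+a^{q})=c^{3}+Tc^{2}+Mc,\qquad T:=a+a^{q},\quad M:=a^{q+1}\in\mathbb{F}_q .$$
Because $g(0)=0$ and, for $a\notin\mathbb{F}_q$, $g$ carries $\mathbb{F}_q^{*}$ into $\mathbb{F}_q^{*}$, the map $g$ permutes $\mathbb{F}_q^{*}$ precisely when it permutes $\mathbb{F}_q$. Setting $u:=a^{q-1}\in\mu_{q+1}$ gives $T=a(1+u)$ and $M=a^{2}u$, and in characteristic $\neq3$ the linear coefficient of the depressed cubic obtained from $g$ by the substitution $c\mapsto c-T/3$ works out to
$$p:=M-\frac{T^{2}}{3}=-\frac{a^{2}}{3}\,(u^{2}-u+1).$$
Thus the hypothesis $a^{2(q-1)}-a^{q-1}+1=0$, i.e. $u^{2}-u+1=0$, is exactly the condition $p=0$.

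It remains to prove the cubic classification: $g$, equivalently $c^{3}+pc$, permutes $\mathbb{F}_q$ if and only if $p=0$ and $q\not\equiv1\pmod{3}$. The ``if'' direction is easy, since for $p=0$ the map $g$ is a translate of the cube map $c\mapsto c^{3}$, which permutes $\mathbb{F}_q$ exactly when $\gcd(3,q-1)=1$. For the converse I would study collisions: for $x\neq y$, $g(x)=g(y)$ is equivalent to
$$x^{2}+xy+y^{2}+T(x+y)+M=0 ,$$
a conic whose leading form $x^{2}+xy+y^{2}$ is anisotropic exactly when $q\not\equiv1\pmod{3}$ (equivalently, when $\mathbb{F}_q$ has no primitive cube root of unity). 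A standard count of the $\mathbb{F}_q$-points of this conic, minus the at most two points on the diagonal $x=y$ (which solve $3x^{2}+2Tx+M=0$), shows that an off-diagonal point --- hence a collision --- always exists unless simultaneously $p=0$ and the leading form is anisotropic. This forces both $p=0$ and $q\not\equiv1\pmod{3}$, as required.

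Finally I would treat characteristic $3$ by hand, since there $c\mapsto c-T/3$ is unavailable and the leading form collapses to $(x-y)^{2}$. Here $g(x)-g(y)=(x-y)\big((x-y)^{2}+T(x+y)+M\big)$, so a collision exists whenever $T\neq0$ (fix any $x-y\neq0$ and solve the linear equation for $x+y$); and when $T=0$ the map $g(c)=c^{3}+Mc$ is additive and injective because its only nonzero roots $\pm a$ lie outside $\mathbb{F}_q$. Since $T=0\iff u=-1\iff u^{2}-u+1=0$ and $q\not\equiv1\pmod{3}$ holds automatically in characteristic $3$, the statement persists. I expect the main obstacle to be the converse of the cubic classification --- the conic point count, and the verification that an off-diagonal solution survives in every case except $p=0$ with anisotropic leading form --- together with the uniform bookkeeping across characteristics $2$, $3$ and $>3$. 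As a sanity check, $u^{2}-u+1=0$ has a root in $\mu_{q+1}$ only when $q\not\equiv1\pmod{3}$, so the equation on $a$ already entails the stated congruence.
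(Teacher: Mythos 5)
Your proposal is correct, and it takes a genuinely different route from the paper, even though both arguments funnel into the \emph{same} cubic: your $g(c)=c(c+a)(c+a^q)=c^3+(a+a^q)c^2+a^{q+1}c$ is exactly the polynomial $u^3+(a^q+a)u^2+a^{q+1}u$ the paper obtains. The paper proceeds in two disjoint pieces: it rules out $q\equiv 1\pmod{3}$ by a Hermite's-criterion power-sum computation (taking $\alpha=\beta=(q-1)/3$), and for $q\not\equiv 1\pmod{3}$ it converts $f(x)=c$ into the cubic via the per-value substitution $u=c/x-a=x^{q+1}$, $x=c/(a+u)$, then settles when the cubic permutes $\mathbb{F}_q$ by \emph{citing} the table of normalized degree-$3$ permutation polynomials (Table 7.1 of Lidl--Niederreiter), splitting into characteristic $\neq 3$ and characteristic $3$. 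You instead (i) obtain the reduction structurally via the norm-fibre decomposition of $\mathbb{F}_{q^2}^*$ --- essentially the $x^rh(x^s)$/Zieve-type criterion specialized to $s=q+1$ --- which yields both directions of ``$f$ permutes $\mathbb{F}_{q^2}$ iff $g$ permutes $\mathbb{F}_q$'' at once and makes the separate Hermite computation unnecessary; and (ii) prove the needed cubic classification from scratch through the collision factorization $g(x)-g(y)=(x-y)\bigl(x^2+xy+y^2+T(x+y)+M\bigr)$, where isotropy of $x^2+xy+y^2$ (exactly when $q\equiv 1\pmod{3}$) forces collisions for every $a$, and where in the anisotropic case nondegeneracy of the conic (exactly $p\neq 0$, since the would-be singular point is on the diagonal and lies on the conic iff $p=0$) gives $q+1$ affine points of which at most two are diagonal, hence a collision. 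Your characteristic-$3$ analysis ($T\neq 0$ gives collisions by solving linearly for $x+y$; $T=0$ gives the additive injective $c^3-a^2c$ because $\pm a\notin\mathbb{F}_q$, with $T=0\iff (u+1)^2=u^2-u+1=0$) reproduces the paper's Case 2, where Table 7.1 forces $a_2=0$ and $a^2$ is then a nonsquare. What the paper's route buys is brevity via citation; what yours buys is self-containedness, uniformity (no standalone Hermite step), and a conceptual explanation of the congruence --- your closing observation that $u^2-u+1$ has a root in $\mu_{q+1}$ precisely when $q\not\equiv 1\pmod{3}$, so the coefficient condition already entails the congruence, is correct and genuinely clarifies the statement. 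The one step you leave as ``standard,'' the conic point count, does close in every characteristic, but note the bookkeeping it requires: in characteristic $2$ degeneracy must be detected via the singular point $(T,T)$ with $Q(T,T)=M+T^2=p$ (a symmetric-determinant criterion is unavailable), and in the isotropic degenerate case one should check that the two rational lines meet the diagonal only at their common point, leaving $2(q-1)$ off-diagonal collision points.
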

     \begin{proof}
     	First of all, we show that when $ q \equiv 1(\bmod3)$,  $f(x)$ is not a permutation polynomial over $\mathbb{F}_{q^{2}}$.
     	
     	Let $0\le \alpha ,\beta \le q-1$ with $(\alpha ,\beta) \ne (0,0)$. Then {w}e have
     	\begin{eqnarray*}
     		\sum\limits_{x\in {{\mathbb{F}}_{{{q}^{2}}}}}{f{{(x)}^{\alpha +q\beta }}}
     		&=& \sum\limits_{x\in {{\mathbb{F}}_{{{q}^{2}}}}}{{{(ax+{{x}^{q+2}})}^{\alpha }}{{({{a}^{q}}{{x}^{q}}+{{x}^{1+2q}})}^{\beta }}} \\
     		&=& \sum\limits_{x\in {{\mathbb{F}}_{{{q}^{2}}}}}{\sum\limits_{ 0\le i \le \alpha,  0\le j \le \beta}{\left( \begin{matrix}
     					\alpha   \\
     					i  \\
     				\end{matrix} \right){{a}^{\alpha -i}}{{x}^{\alpha -i}}{{x}^{(q+2)i}}\left( \begin{matrix}
     				\beta   \\
     				j  \\
     			\end{matrix} \right){{a}^{q(\beta -j)}}{{x}^{q(\beta -j)}}{{x}^{(1+2q)j}}}}\\
     	&=& {{a}^{\alpha +q\beta }}\sum\limits_{0\le i \le \alpha,  0\le j \le \beta}{\left( \begin{matrix}
     			\alpha   \\
     			i  \\
     		\end{matrix} \right)}\left( \begin{matrix}
     		\beta   \\
     		j  \\
     	\end{matrix} \right){{a}^{-i-qj}}\sum\limits_{x\in \mathbb{F}_{{{q}^{2}}}^{*}}{{{x}^{\alpha +q\beta +(1+q)(i+j)}}}.
     \end{eqnarray*}

     The inner sum is $0$ unless $\alpha +q\beta +(1+q)(i+j)\equiv 0 (\bmod q^2-1)$, {which can happen only if } $\alpha +\beta q\equiv 0 (\bmod q+1)$, or equivalently, $ \alpha=\beta$.  Then we have
     \begin{equation*}
     \sum\limits_{x\in \mathbb{F}_{{{q}^{2}}}^{*}}{f{{(x)}^{\alpha +q\beta }}=-{{a}^{(1+q)\alpha }}\sum\limits_{\tiny{\begin{array}{c}
                                                                                                                  \alpha +i+j\equiv 0(\bmod q-1), \\
                                                                                                                  0\le i, j \le \alpha
                                                                                                                \end{array}}
     }{\left( \begin{matrix}
     		\alpha   \\
     		i  \\
     		\end{matrix} \right)\left( \begin{matrix}
     		\alpha   \\
     		j  \\
     		\end{matrix} \right)}{{a}^{-i-qj}}}.
     \end{equation*}
     Let $\alpha =\frac{q-1}{3}$. Then
     \begin{equation*}
     \sum\limits_{x\in {\mathbb{F}_{{{q}^{2}}}}}{f{{(x)}^{\alpha +q\alpha }}=-}{{a}^{(q+1)\alpha }}{{a}^{-\alpha -\alpha q}}=-1\ne 0.
     \end{equation*}
     Hence, in the case, $f(x)$ is not a permutation polynomial over $\mathbb{F}_{q^{2}}$.

     In the following, we show that when $ q \not \equiv 1 (\bmod{3})$, $f(x)$ is a permutation polynomial over $\mathbb{F}_{q^{2}}$ if and only if $a^{2(q-1)}-a^{q-1}+1=0$.

     First, it is easy to prove that {zero is the only root of  $f(x)=0$ in $\gf_q$} if and only if ${{a}^{q-1}}\ne 1$.

     Next we prove that $f(x)=c$ has at most one solution in $\mathbb{F}_{q^{2}}$ for any $c \in \mathbb{F}_{q^{2}}^{*}$ {if and only if $a^{2(q-1)}-a^{q-1}+1=0$}. It is clear that $0$ is not a solution. Therefore we consider the following equation
     \begin{equation}
     {{x}^{q+1}}+a=\frac{c}{x}.
     \label{ky3}
     \end{equation}
     Let {$u=\frac{c}{x}-a=x^{q+1}$. Then $u\in \mathbb{F}_{q}^{*}$} and $x=\frac{c}{a+u}$. Plugging it into (\ref{ky3}), one have
     \begin{equation*}
     	\left(\frac{c}{a+u}\right)^{q+1}=u.
     \end{equation*}
     After simplifying and rearranging the terms, we get
     \begin{equation}
     {{u}^{3}}+({{a}^{q}}+a){{u}^{2}}+{{a}^{q+1}}u={{c}^{q+1}}.
     \label{ky4}
     \end{equation}
     Let ${{a }_{2}}={{a}^{q}}+a$, ${{a }_{1}}={{a}^{q+1}}$, ${{a }_{0}}=-{{c}^{q+1}}$. Then $f(x)=c$ has at most one solution in $\mathbb{F}_{q^{2}}$ if and only if $g(u)={{u}^{3}}+{{a }_{2}}{{u}^{2}}+{{a }_{1}}u$ is a permutation polynomial over $\mathbb{F}_{q}$. {In the following,
     we will use the table of normalized permutation polynomials over $\gf_q$ (Table 7.1 in  \cite{LN}). Recall that a polynomial
      $\phi$ is called in normalized form if $\phi$ is monic, $\phi(0)=0$, and when the degree $n$ of $\phi$
      is not divisible by the characteristic of $\gf_q$, the coefficient of $x^{n-1}$ is 0. The rest of the proof is split into two cases.}

     Case 1: $q\equiv 2(\bmod 3)$.

     Let $u=v-\frac{{{a }_{2}}}{3}$. Plugging it into (\ref{ky4}), we get
     $$v^3+\mu_{1}v+\mu_{2}=0.$$
     where ${{\mu }_{1}}=-\frac{1}{3}a _{2}^{2}+{{a }_{1}}$, ${{\mu }_{2}}=\frac{2}{27}a _{2}^{3}-\frac{1}{3}{{a }_{1}}{{a }_{2}}+{{a }_{0}}$. Then ${{\mu }_{1}},{{\mu }_{2}}\in \mathbb{F}_{q}$.

     As $f(x)=c$ has at most one solution in $\mathbb{F}_{q^{2}}$ if and only if $h(v)={{v}^{3}}+{{\mu }_{1}}v$ is a permutation polynomial over ${{\mathbb{F}}_{q}}$, {which holds if and only if ${{\mu }_{1}}=0$ by  \cite[Table 7.1]{LN}.} Recalling that ${{\mu }_{1}}=-\frac{1}{3}{{a}^{2}}({{a}^{2(q-1)}}-{{a}^{q-1}}+1)$, we come to the conclusion
      that $f(x)=c$ has at most one solution in $\mathbb{F}_{q^{2}}$ if and only if  ${{a}^{2(q-1)}}-{{a}^{q-1}}+1=0$.

     Case 2: $q\equiv 0 (\bmod 3)$.

     { In this case, $g(u)={{u}^{3}}+{{a }_{2}}{{u}^{2}}+{{a }_{1}}u$ is already a normalized polynomial. According to  \cite[Table 7.1]{LN},
       $x^{3}-ax$ ($a$ is not a square) and $x^3$ are the two  normalized permutation polynomials over $\mathbb{F}_{q}$ with degree three.
       On one hand, if $g$ permutes $\gf_q$, then we have ${{a }_{2}}=0$. Thus  ${a}^{q-1}+1=0$, or equivalently, $0=({a}^{q-1}+1)^2={{a}^{2(q-1)}}-{{a}^{q-1}}+1$.
        On the other hand, if ${a}^{q-1}+1=0$, then $g(u)=u^3-a^2u$. Clearly, $a^2$ is a non-square in $\mathbb{F}_{q}$ since $a \in \gf_{q^2}\setminus\mathbb{F}_{q}$. Hence $g(u)$ permutes  $\gf_q$.}


     We finish the proof.
    \end{proof}
 We briefly discuss the multiplicative inequivalence of the new
permutation binomial in Theorem \ref{th_BiPP1} with known ones. Due to the special and specific conditions in Theorem  3.5 and those in the last
subsection, the
permutation binomial $f(x)$ of Theorem \ref{th_BiPP1} is not multiplicative equivalent to any known permutation binomial.
{According to Lemma \ref{ky1},  if $\mathrm{gcd}(1+l(q-1),q+1)=1$, where $l$ is an integer, the polynomial $x^{1+l(q-1)}(x^{q+1}+a)$ is also a permutation binomial over $\mathbb{F}_{q^{2}}$.}


    Theorem \ref{th_BiPP1} considers permutation binomials of $\mathbb{F}_{q^{2}}$ with the form ${{x}}({{x}^{q+1}}+a)$.  Next, we consider another permutation binomial with the form   ${{x}^{r}}({{x}^{q-1}}+a)$,
    {where $r\in [1, q+1]$}.
    \begin{Th}
    \label{th_BiPP2}
    	Let $f(x)={{x}^{r}}({{x}^{q-1}}+a)\in \mathbb{F}_{{{q}^{2}}}[x]$, $r\in [1,q+1]$. Then $f(x)$ is a permutation
    {binomial} over $\mathbb{F}_{{{q}^{2}}}$ if and only if $r=1$ and ${{a}^{q+1}}\ne 1$.

    \end{Th}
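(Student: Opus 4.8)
The plan is to prove the two implications separately, and the structural observation driving everything is the factorization $f(x)=x^{r-1}(x^{q}+ax)$, which is immediate from $x^{r}(x^{q-1}+a)=x^{q+r-1}+ax^{r}$. Writing $f_{1}(x)=x^{q}+ax$ for the $r=1$ polynomial, the point is that $f_{1}$ is an $\mathbb{F}_{q}$-linear (linearized) map on the two-dimensional space $\mathbb{F}_{q^{2}}$. This makes the sufficiency direction transparent and, more importantly, suggests reducing the general $r$ to the behaviour of $f$ on the multiplicative cosets of $\mathbb{F}_{q}^{\ast}$, where $f$ will be governed by the scalar $\zeta\mapsto\zeta^{r}$.

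For sufficiency, suppose $r=1$ and $a^{q+1}\neq1$. Since $f_{1}$ is $\mathbb{F}_{q}$-linear it permutes $\mathbb{F}_{q^{2}}$ iff its kernel is trivial, and a nonzero root satisfies $x^{q-1}=-a$, which is solvable iff $-a\in\mu_{q+1}$, i.e. iff $(-a)^{q+1}=a^{q+1}=1$. Hence $a^{q+1}\neq1$ forces $\ker f_{1}=\{0\}$ and $f_{1}$ permutes. The very same computation settles one half of necessity: if $a^{q+1}=1$ then $x^{q-1}=-a$ has $q-1\ge1$ nonzero solutions, each sent by $f$ to $0=f(0)$, so $f$ is not injective. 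Thus $a^{q+1}\neq1$ is necessary, and I assume it henceforth.

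For the remaining direction (that $r=1$ is forced) I would pass to the quotient $\mathbb{F}_{q^{2}}^{\ast}/\mathbb{F}_{q}^{\ast}\cong\mu_{q+1}$. Using that $f_{1}$ is $\mathbb{F}_{q}$-linear one gets $f(\zeta x)=\zeta^{r}f(x)$ for all $\zeta\in\mathbb{F}_{q}^{\ast}$, so on each coset $x\mathbb{F}_{q}^{\ast}$ the map $f$ acts by $\zeta\mapsto\zeta^{r}f(x)$; this is injective on the coset iff $\gcd(r,q-1)=1$, and then carries it onto $f(x)\mathbb{F}_{q}^{\ast}$. Therefore $f$ permutes $\mathbb{F}_{q^{2}}$ iff $\gcd(r,q-1)=1$ and the induced map on cosets is a bijection. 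Representing a coset by $w=x^{q-1}\in\mu_{q+1}$ and using $w^{q}=w^{-1}$, a short computation gives this induced map as $g(w)=f(x)^{q-1}=w^{r-1}\,(1+a^{q}w)/(w+a)$, a self-map of $\mu_{q+1}$ (well defined since $a^{q+1}\neq1$ keeps $w+a\neq0$). For $r=1$ the factor $w^{r-1}$ vanishes and $g$ is the M\"obius transformation $w\mapsto(a^{q}w+1)/(w+a)$ of determinant $a^{q+1}-1\neq0$, hence injective and a permutation of the finite set $\mu_{q+1}$; this reproves sufficiency inside the same framework.

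The crux, and the step I expect to be the main obstacle, is to show that for $2\le r\le q+1$ with $\gcd(r,q-1)=1$ (the other $r$ being already excluded) the map $g$ fails to permute $\mu_{q+1}$. The difficulty is that $g$ is a \emph{pointwise product} of the power map $w\mapsto w^{r-1}$ and a M\"obius bijection, not a composition, so neither a bare degree count (a degree-$r$ rational function can perfectly well permute the $q+1$ points of $\mu_{q+1}$) nor the R\'edei-function criterion for the power map applies directly. My plan is to transport $\mu_{q+1}$ bijectively onto $\mathbb{P}^{1}(\mathbb{F}_{q})$ via $t\mapsto(t+\gamma)/(t+\gamma^{q})$ for a fixed $\gamma\in\mathbb{F}_{q^{2}}\setminus\mathbb{F}_{q}$, turning $g$ into an explicit degree-$r$ rational function $\widetilde{g}\in\mathbb{F}_{q}(t)$, and to exhibit a collision $\widetilde{g}(t_{1})=\widetilde{g}(t_{2})$ with $t_{1}\neq t_{2}$ whenever $r\ge2$; equivalently, to produce for some $v\in\mu_{q+1}$ two distinct roots in $\mu_{q+1}$ of the degree-$r$ equation $a^{q}w^{r}+w^{r-1}-vw-va=0$. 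As a fallback I would run a Hermite-type argument in the spirit of Theorem \ref{th_BiPP1}: expand $\sum_{x\in\mathbb{F}_{q^{2}}}f(x)^{\alpha+q\beta}$, reduce the exponent modulo $q^{2}-1$, and locate an admissible $t=\alpha+q\beta\le q^{2}-2$ for which the surviving binomial-coefficient sum is nonzero exactly when $r\ge2$, contradicting Hermite's criterion.
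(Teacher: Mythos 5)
Your sufficiency argument and the necessity of $a^{q+1}\ne 1$ are correct, and in fact cleaner than the paper's treatment of that part: observing that $f(x)=x^{r-1}(x^q+ax)$ and that $x^q+ax$ is $\mathbb{F}_q$-linear with trivial kernel precisely when $-a\notin\mu_{q+1}$ replaces a Hermite power-sum verification with two lines of linear algebra (the paper itself remarks after the theorem that the $r=1$ polynomial is linearized). The reduction of the remaining direction to the map $g(w)=w^{r-1}(1+a^qw)/(w+a)$ on $\mu_{q+1}$, together with the condition $\gcd(r,q-1)=1$, is also sound --- it is the standard criterion for polynomials of the form $x^rh(x^{q-1})$. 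But the entire content of the hard direction is the claim that $g$ fails to permute $\mu_{q+1}$ for \emph{every} $r\in[2,q+1]$ with $\gcd(r,q-1)=1$, and at exactly that point you stop proving and start planning: ``exhibit a collision $\widetilde{g}(t_1)=\widetilde{g}(t_2)$ whenever $r\ge 2$'' names the goal, not an argument. No $v\in\mu_{q+1}$, no pair of roots of $a^qw^r+w^{r-1}-vw-va=0$, and no mechanism uniform in $r$, $a$ and $q$ is offered; and you yourself identify why nothing routine closes it (degree counts fail, since a degree-$r$ rational map can permute $\mu_{q+1}$, and $g$ is a pointwise product of a power map with a M\"obius map, not a composition). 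As written, the proposal proves the easy half plus a correct reformulation of the hard half.

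Your fallback is essentially the paper's actual proof, but it too is unexecuted, and the work it demands is precisely the crux you defer. The paper expands $\sum_{x}f(x)^{\alpha+\beta q}$, observes that the inner sum survives only when $\alpha+\beta=q-1$, and that the surviving congruence $-r(\alpha+1)+i-j\equiv 0 \pmod{q+1}$ ranges over the window $S_{\alpha,r}=[-r(\alpha+1)-(q-1-\alpha),\,-r(\alpha+1)+\alpha]$. For $r=1$ this window is $[-q,-1]$, containing no multiple of $q+1$ (this is the paper's sufficiency proof, which your linearized-polynomial argument genuinely simplifies); for $r\in[2,q+1]$ the specific choice $\alpha=0$, $\beta=q-1$, i.e.\ the exponent $t=(q-1)q$, makes the window $[-r-q+1,-r]$ contain exactly one multiple of $q+1$, whence $\sum_{x}f(x)^{(q-1)q}=-\binom{q-1}{q+1-r}a^{(r-2)q}$, and this is nonzero because every digit of $q-1$ in base $p$ equals $p-1$, so $\binom{q-1}{q+1-r}\not\equiv 0\pmod p$ by Lucas. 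Finding that single admissible exponent and evaluating the binomial sum is the whole battle; ``locate an admissible $t$'' does not supply it. Until either the collision in $\mu_{q+1}$ is actually produced or this power-sum computation is carried out, the necessity of $r=1$ --- the only nontrivial assertion of the theorem --- remains unproven.
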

    \begin{proof}
    	First, it is clear that $f(x)={{x}^{r}}({{x}^{q-1}}+a)$ has only one root in $\mathbb{F}_{{{q}^{2}}}$ if and only if ${{a}^{q+1}}\ne 1$.
    	Next, let $0\le \alpha ,\beta \le q-1$ with $(\alpha ,\beta) \ne (0,0),(q-1,q-1)$. We have
    	\begin{eqnarray*}
    		\sum\limits_{x\in {\mathbb{F}_{{{q}^{2}}}}}{f{{(x)}^{\alpha +\beta q}}} &=& \sum\limits_{x\in {\mathbb{F}_{{{q}^{2}}}}}{{{x}^{r(\alpha +\beta q)}}{{({{x}^{q-1}}+a)}^{(\alpha +\beta q)}}}\\
    		&=& \sum\limits_{x\in \mathbb{F}_{{{q}^{2}}}^{*}}{{{x}^{r(\alpha +\beta q)}}{{({{x}^{q-1}}+a)}^{\alpha }}{{({{x}^{1-q}}+{{a}^{q}})}^{\beta }}}\\
    		&=& \sum\limits_{x\in \mathbb{F}_{{{q}^{2}}}^{*}}{{{x}^{r(\alpha +\beta q)}}\sum\limits_{i=0}^{\alpha }{\left( \begin{matrix}
    					\alpha   \\
    					i  \\
   				\end{matrix} \right){{x}^{i(q-1)}}{{a}^{\alpha -i}}}\sum\limits_{j=0}^{\beta }{\left( \begin{matrix}
    				\beta   \\
    				j  \\
    			\end{matrix} \right){{x}^{j(1-q)}}{{(a^q)}^{\beta -j}}}}\\
    	&=& {{a}^{\alpha +\beta q}}\sum\limits_{0\le i\le \alpha,0 \le j \le \beta}{\left( \begin{matrix}
    			\alpha   \\
    			i  \\
    		\end{matrix} \right)}\left( \begin{matrix}
    		\beta   \\
    		j  \\
    	\end{matrix} \right){{a}^{-i-qj}}\sum\limits_{x\in \mathbb{F}_{{{q}^{2}}}^{*}}{{{x}^{r(\alpha +\beta q)+(i-j)(q-1)}}}.
    	\end{eqnarray*}

        The inner sum is 0 unless  $r(\alpha+\beta q)+(i-j)(q-1)\equiv 0(\bmod q^2-1) $,{which can happen only if } $\alpha +\beta q\equiv 0(\bmod q-1)$, or equivalently, $\alpha +\beta =q-1$. Let $\beta =q-1-\alpha $. Then we have
        \begin{eqnarray*}        	
        \sum\limits_{x\in {{\mathbb{F}}_{{{q}^{2}}}}}{f{{(x)}^{\alpha +(q-1-\alpha )q}}}  &=&	
        {{a}^{(\alpha +1)(1-q)}}\sum\limits_{0 \le i \le \alpha,0 \le j \le q-1-\alpha}{\left( \begin{matrix}
        		\alpha   \\
        		i  \\
        	\end{matrix} \right)}\left( \begin{matrix}
        	q-1-\alpha   \\
        	j  \\
        \end{matrix} \right){{a}^{-i-qj}}\sum\limits_{x\in \mathbb{F}_{{{q}^{2}}}^{*}}{{{x}^{(q-1)(-r(\alpha +1)+i-j)}}}  \\
       &=& -{{a}^{(\alpha +1)(1-q)}}\sum\limits_{-r(\alpha +1)+i-j\equiv 0 (\bmod q+1)}{\left( \begin{matrix}
        		\alpha   \\
        		i  \\
        	\end{matrix} \right)}\left( \begin{matrix}
        	q-1-\alpha   \\
        	j  \\
        \end{matrix} \right){{a}^{-i-qj}}.       	
    \end{eqnarray*}
    As $i$ runs over the interval $[0,\alpha]$ and $j$ over the interval $[0,q-1-\alpha]$, $-r(\alpha +1)+i-j\in {{S}_{\alpha ,r}}$, where
    $${{S}_{\alpha ,r}}=[-r(\alpha +1)-(q-1-\alpha ),-r(\alpha +1)+\alpha ].$$

    If $r=1$, then ${{S}_{\alpha ,r}}=[-q,-1]$, and the equation $-(\alpha +1)+i-j\equiv 0(\bmod q+1)$ has no solution. Therefore, $\sum\limits_{x\in {{\mathbb{F}}_{{{q}^{2}}}}}{f{{(x)}^{\alpha +(q-1-\alpha )q}}}=0$. Hence, $f(x)$ is a permutation polynomial over ${\mathbb{F}}_{{{q}^{2}}}$.

    If $r \in [2,q+1]$, let $\alpha=0$. Then ${{S}_{0,r}}=[-r-q+1,-r]$. In the case, the only multiple of $q+1$ in ${S}_{0,r}$ is $-(q+1)$. Therefore, we have
    \begin{equation*}
    \sum\limits_{x\in {{\mathbb{F}}_{{{q}^{2}}}}}{f{{(x)}^{(q-1)q}}}=-{{a}^{(1-q)}}\left( \begin{matrix}
    q-1  \\
    q+1-r  \\
    \end{matrix} \right){{a}^{-1-q+rq}}=-\left( \begin{matrix}
    q-1  \\
    q+1-r  \\
    \end{matrix} \right){{a}^{(r-2)q}}.
    \end{equation*}
    {It follows from Lucas formula that  $\left( \begin{matrix}
    q-1  \\
    q+1-r  \\
    \end{matrix} \right)\not\equiv 0 \bmod p
    $ if $r\in [2,q+1]$.} Therefore,  $\sum\limits_{x\in {{\mathbb{F}}_{{{q}^{2}}}}}{f{{(x)}^{(q-1)q}}}\ne 0 $.
    Hence, when $r \in [2,q+1]$, $f(x)$ is not a permutation polynomial over $\mathbb{F}_{q^{2}}$.

    The proof is finished.
    \end{proof}	

    {In Theorem \ref{th_BiPP2}, let $r=1$, then we get $f(x)={{x}^{r}}({{x}^{q-1}}+a)=x^q+ax$, which is a linearized polynomial.
    Hence Theorem \ref{th_BiPP2} does not provide new permutation binomial. However, to the authors' best
    knowledge,  the characterization of the sufficient and necessary condition for the polynomial with this form to
    be  a permutation is new.
     }


    \section{Permutation trinomials}
    \label{sec3}
    \subsection{The known permutation trinomials}
    \label{kpt}
    First, we review the known permutation trinomials over $\mathbb{F}_{2^m}$.
    The results before 2014 were summarized in  \cite{DQ}.  We copy their list into the following theorem for the readers' conveniences,

    \begin{Th}\label{th_TriPPList}
    \begin{enumerate}
\item Some linearized permutation trinomials described in \cite{LN}.

\item $x+x^3+x^5$ over $\gf_{2^m}$, where $m$ is odd (the Dickson
polynomial of degree 5).

\item $x+x^5+x^7$ over $\gf_{2^m}$, where $m \not\equiv 0  \pmod{3}$ (the Dickson
polynomial of degree 7).

\item $x+x^3+x^{2^{(m+1)/2}+1}$ over $\gf_{2^m}$, where $m$ is odd \cite{HD}.

\item $x^{2^{2k}+1} +(ax)^{2^k+1} + ax^2$ over $\gf_{2^m}$, where $m=3k$ and $a^{(2^m-1)/(2^k-1)} \ne 1$ \cite{BCHO}.

\item $x^{3 \cdot 2^{(m+1)/2}+4}+x^{2^{(m+1)/2}+2}+x^{2^{(m+1)/2}}$ over $\gf_{2^m}$,
where $m$ is odd (\cite{Ch} or \cite[Theorem 4]{HD1}).

\item   $x^{2^{2k}+1}+x^{2^k+1}+vx$ over $\gf_{2^m}$, where $m=3k$ and $v \in \gf_{2^k}^*$ \cite{TZH}.


\item  \cite{LeePark}\label{thm-LeePark} Let $q \equiv 1 \pmod{3}$ be a prime power. Let $\alpha$ be a generator of $\gf_{q}^*$,  $s=(q-1)/3$,
and let $\omega=\alpha^s$. Define $f(x)=ax^2+bx+c \in \gf_{q}[x]$. Then
$$
h(x):=x^rf(x^s)
$$
is a permutation polynomial over $\gf_q$ if and only if the following conditions are satisfied:
\begin{enumerate}
\item $\gcd(r, s)=1$,
\item $f(\omega^i) \ne 0$ for $0 \le i \le 2$,
\item $\log_{\alpha}(f(1)/f(\omega)) \equiv \log_{\alpha}(f(\omega)/f(\omega^2)) \not\equiv r \pmod{3}$.
\end{enumerate}

\end{enumerate}
\end{Th}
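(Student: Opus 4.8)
The final statement is a catalogue of previously established permutation trinomials rather than a single new assertion, so the ``proof'' consists of verifying each entry, and for most of them this amounts to invoking the original source. The plan is therefore to organise the entries by the technique that settles them, proving the genuinely self-contained cases directly and pointing to the cited references for the rest.

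First I would dispose of the linearised trinomials in item~1: a $\gf_p$-linear map on a finite-dimensional space is a bijection iff it is injective, so each such trinomial permutes $\gf_{2^m}$ exactly when its only root is $0$, which is a routine check on the associated Dickson matrix (details in \cite{LN}). Items~2 and~3 are the Dickson polynomials $x^5+x^3+x=D_5(x,1)$ and $x^7+x^5+x=D_7(x,1)$ in characteristic $2$; here I would apply the classical criterion that $D_n(x,1)$ permutes $\gf_q$ iff $\gcd(n,q^2-1)=1$, which for $q=2^m$ turns the constraint into $\gcd(5,4^m-1)=1$ (equivalently $m$ odd) and $\gcd(7,4^m-1)=1$ (equivalently $m\not\equiv0\pmod3$), recovering the stated congruences. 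Items~4--7 are ad hoc constructions tied to Niho-type exponents and to the cubic and normal-form analyses of \cite{HD,BCHO,Ch,HD1,TZH}; for these I would reproduce the arguments of the cited papers, since no uniform shortcut is available.

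The one entry with real content is item~8 (Lee--Park), the only ``if and only if'' in the list. Here I would use the multiplicative reduction underlying the whole section: writing $s=(q-1)/3$ so that $(q-1)/s=3$, the polynomial $h(x)=x^r f(x^s)$ permutes $\gf_q$ iff $\gcd(r,s)=1$ and the induced map $g(x)=x^r f(x)^s$ permutes the group $\mu_3=\{1,\omega,\omega^2\}$ of cube roots of unity. Since $f(\omega^i)^s=f(\omega^i)^{(q-1)/3}=\omega^{\log_\alpha f(\omega^i)}$ whenever $f(\omega^i)\neq0$, the permutation of $\mu_3$ is governed by the affine map $i\mapsto ri+\log_\alpha f(\omega^i)\pmod 3$ on $\Z/3\Z$; requiring $f(\omega^i)\neq 0$ and this map to be a bijection yields precisely conditions (b) and (c).

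The main obstacle is the necessity direction in item~8: one must show that failure of the exact congruence pattern in (c) forces the $\Z/3\Z$-map to be non-injective. The clean way to see this is to set $A=\log_\alpha(f(1)/f(\omega))$ and $B=\log_\alpha(f(\omega)/f(\omega^2))$ and observe that bijectivity is equivalent to the three consecutive differences being nonzero, i.e. $A\neq r$, $B\neq r$, and $A+B\neq 2r$ modulo $3$. A short case analysis shows that if $A\neq B$ while both avoid $r$, then $\{A,B\}$ exhausts the two residues distinct from $r$ and hence $A+B\equiv -r\equiv 2r$, a contradiction; thus bijectivity forces $A\equiv B\not\equiv r$, which is exactly (c). Carrying out this small-group bookkeeping carefully, and reconciling it with the logarithmic formulation of (c), is where the real work lies, while the remaining items reduce to citation.
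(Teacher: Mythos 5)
Your reading of the situation is accurate, and your proposal is sound --- but note that the paper itself gives \emph{no} proof of Theorem~\ref{th_TriPPList} whatsoever: the authors state explicitly that they copy this list of previously known permutation trinomials from \cite{DQ}, and each item carries its own citation, so the paper's ``proof'' is citation and nothing more. Your treatment is therefore strictly more informative than the source while remaining correct where you supply details. The self-contained parts check out: in characteristic $2$ one indeed has $D_5(x,1)=x^5+x^3+x$ and $D_7(x,1)=x^7+x^5+x$, and the classical criterion that $D_n(x,1)$ permutes $\gf_q$ iff $\gcd(n,q^2-1)=1$ yields the stated congruences, since the order of $2$ is $4$ modulo $5$ (so $5\mid 2^{2m}-1$ iff $m$ is even) and $3$ modulo $7$ (so $7\mid 2^{2m}-1$ iff $3\mid m$). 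For item~8 your cyclotomic reduction is the same lemma the paper itself imports from \cite{Zieve} to prove its Lemma~\ref{ky1}, and your $\Z/3\Z$ bookkeeping is right: writing $e_i=\log_\alpha f(\omega^i)$ (well defined precisely when condition (b) holds, whose necessity is immediate since $f(\omega^i)=0$ would give $h$ a nonzero root), injectivity of $i\mapsto ri+e_i$ on $\Z/3\Z$ is equivalent to $A\not\equiv r$, $B\not\equiv r$, $A+B\not\equiv 2r \pmod 3$ with $A=e_0-e_1$, $B=e_1-e_2$; since the three residues sum to $0\equiv 3r$, distinct $A,B$ both avoiding $r$ would force $A+B\equiv 2r$, so bijectivity holds iff $A\equiv B\not\equiv r$ (using that $2$ is invertible mod $3$, so $2A\equiv 2r$ iff $A\equiv r$), which is exactly condition (c). What the two approaches buy: the paper's pure-citation route keeps a survey list compact and defers correctness entirely to the literature; yours makes items 2, 3 and 8 genuinely self-contained (and your Lee--Park argument works for any $q\equiv 1\pmod 3$, not just even characteristic), at the cost of length and of still having to defer items 1 and 4--7 to \cite{LN,HD,BCHO,Ch,HD1,TZH}, for which, as you say, no uniform shortcut is available.
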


%
%
%

    The following results are new classes of permutation trinomials constructed in   \cite{DQ}.
         \begin{Th}\label{th_DQ1}
         	\emph{ \cite{DQ}}
         	Let $m>1$ be an odd integer. Then both $x+x^{2^{(m+1)/2}-1}+x^{2^{m}-2^{(m+1)/2}+1}$ and $x+x^{3}+x^{2^{m}-2^{(m+3)/2}+2}$ are permutation polynomials over $\mathbb{F}_{2^{m}}$.
         \end{Th}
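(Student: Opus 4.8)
The plan is to treat the two trinomials separately but with a common preparation, combining the notion of multiplicative equivalence with the Frobenius map. Throughout I write $q=2^m$ and $s=2^{(m+1)/2}$, where $m$ odd is exactly what makes $s$ a genuine power of $2$. Two elementary facts drive everything. First, on $\mathbb{F}_q^\ast$ one has $s^2=2^{m+1}\equiv 2\pmod{q-1}$, so exponents may be simplified using $s^2\equiv 2$. Second, $\gcd(s-1,q-1)=\gcd(s+1,q-1)=1$: writing $m=2j+1$ gives $\gcd\big((m+1)/2,m\big)=\gcd(j+1,j)=1$, whence $\gcd(2^{(m+1)/2}-1,2^m-1)=1$, and a short computation modulo $s+1$ (using $2^{2k}\equiv 1$ with $k=(m+1)/2$, so $2^{2k-1}-1\equiv 2^{k-1}$) gives $\gcd(s+1,q-1)=1$ as well. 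In particular $x\mapsto x^{s-1}$ permutes $\mathbb{F}_q$, and since $(s-1)(s+1)=s^2-1\equiv 1$, its inverse is $x\mapsto x^{s+1}$.

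For the first trinomial $x+x^{s-1}+x^{q-s+1}$ I would substitute $B=x^{s-1}$. Because this is a bijection with $x=B^{s+1}$, multiplicative equivalence reduces the problem to the image polynomial. Using $q-s+1\equiv 2-s$ and $s^2\equiv 2$ one computes $x^{s-1}=B$, $x^{q-s+1}=B^{(s+1)(2-s)}=B^{s}$, and $x=B^{s+1}$, so the trinomial becomes $g(B)=B+B^{s}+B^{s+1}$. The key observation is the collapse $1+g(B)=(1+B)+B^{s}(1+B)=(1+B)(1+B^{s})=(1+B)(1+B)^{s}=(1+B)^{s+1}$, where $1+B^{s}=(1+B)^{s}$ because $s$ is a power of $2$. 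Hence $g(B)=(1+B)^{s+1}+1$, a composition of two translations with the monomial $t\mapsto t^{s+1}$, and this permutes $\mathbb{F}_q$ precisely because $\gcd(s+1,q-1)=1$, which was checked above. This settles the first trinomial cleanly.

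For the second trinomial $x+x^{3}+x^{q-2s+2}$ I would apply the same substitution $B=x^{s-1}$. Writing $q-2s+2\equiv 3-2s$ and using $(s+1)(3-2s)\equiv s-1$, the polynomial becomes $B^{s-1}+B^{s+1}+B^{3s+3}=B^{s-1}(1+B+B^{s+2})^2$. This is exactly the known permutation trinomial of item~(6) of Theorem~\ref{th_TriPPList}, namely $x^{s}(1+x+x^{s+2})^2$, but with the leading monomial $x^{s}$ replaced by $x^{s-1}$; equivalently, it is that permutation divided by $x$. This is where the argument for the first trinomial breaks down: the inner factor $1+B+B^{s+2}$ is a genuine trinomial and does \emph{not} collapse to a perfect power, so no single monomial substitution turns $x+x^{3}+x^{q-2s+2}$ into an affine power. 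Overcoming this is the main obstacle.

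I see two ways forward, and I expect the bookkeeping here to be the hard part. The first is to invoke Hermite's criterion, exactly as the paper does for its permutation binomials: expand $f(x)^{t}$ by the multinomial theorem, use that $\sum_{x\in\mathbb{F}_q}x^{N}$ is nonzero only when $(q-1)\mid N$, and show that for every odd $t$ with $1\le t\le q-2$ the surviving sum of multinomial coefficients vanishes modulo $2$. Here the constraint $i+3j+(3-2s)k\equiv 0$ together with $i+j+k=t$ reduces to $t\equiv 2\big((s-1)k-j\big)\pmod{q-1}$, and the oddness of $m$ should force the coefficient sums, evaluated by Lucas' formula, to cancel in pairs; verifying this vanishing is the crux. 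The second, more structural, route is to exploit the shared factor $(1+B+B^{s+2})^2$ with the item~(6) permutation: reduce the equation $f(x)=c$ to a low-degree equation in a single variable and apply Lemma~\ref{san1} on the solvability of $x^{3}+x=a$, using the trace condition to pin down uniqueness. Either way, the essential difficulty is the non-collapsing trinomial factor, in contrast to the transparent $(1+B)^{s+1}$ structure that disposes of the first case.
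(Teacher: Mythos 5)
Your treatment of the first trinomial is correct and complete. With $s=2^{(m+1)/2}$, the congruences $s^2\equiv 2$ and $(s-1)(s+1)\equiv 1\pmod{2^m-1}$ are right, the substitution $x=B^{s+1}$ is a legitimate reduction by multiplicative equivalence, the exponent computations $x^{s-1}=B$ and $x^{q-s+1}=B^{s}$ check out, and the collapse $B+B^{s}+B^{s+1}=(1+B)^{s+1}+1$ together with $\gcd(s+1,2^m-1)=1$ settles that case. For calibration: this theorem is quoted in the present paper from \cite{DQ} without proof, so there is no in-paper argument to compare against; your reduction of the first trinomial to an affine power of a monomial is a clean, self-contained way to dispose of it.

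For the second trinomial, however, you have not given a proof, and this is a genuine gap. You correctly reduce it to $B^{s-1}(1+B+B^{s+2})^2$, correctly observe that the inner factor does not collapse to a perfect power, and then only list two candidate strategies while explicitly deferring the ``crux'' of each. The Hermite route stalls exactly where you say it does: the relation $t\equiv 2\bigl((s-1)k-j\bigr)\pmod{q-1}$ carries no parity information because $q-1$ is odd, so nothing cancels for free, and the vanishing of the surviving coefficient sums \emph{is} the entire content of the theorem, not a verification. Your structural route is closer to what is actually known: the second trinomial is precisely the $a=1$ case of Theorem \ref{th_Tripp5} of this paper, whose proof proceeds by supposing $f(x)=f((1+c)x)$, eliminating down to a quadratic $\varepsilon^2+\varepsilon+D=0$, and then invoking the claim $\operatorname{Tr}_{2k+1}(D_1)=1$ for an explicit rational function $D_1$ of $c$ --- a claim which even this paper borrows from \cite{DQ}, describing its proof as ``a bit long and intricate'' and omitting it (note also that the mechanism there is a quadratic plus a trace evaluation, not directly Lemma \ref{san1} on cubics as you suggest). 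So your instinct about where the difficulty sits is accurate, but the proposal stops exactly where the real work begins; as it stands, the second assertion of the theorem remains unproven.
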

         \begin{Th}
         	\emph{ \cite{DQ}}
         	Let $m$ be a positive even integer. Then $x+x^{2^{(m+2)/2}-1}+x^{2^{m}-2^{m/2}+1}$ is a permutation polynomial over $\mathbb{F}_{2^{m}}$.
         \end{Th}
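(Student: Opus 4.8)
The plan is to recognize $f$ as an $x^{r}h(x^{s})$-type polynomial, reduce its permutation property to that of an induced rational map on the unit circle $\mu_{q+1}$, and then settle the latter by an absolute-trace identity. Write $m=2k$ and $q=2^{m/2}=2^{k}$, so that $\gf_{2^m}=\gf_{q^{2}}$. The three exponents become $1$, $2^{(m+2)/2}-1=2q-1$ and $2^{m}-2^{m/2}+1=q^{2}-q+1$, hence $f(x)=x+x^{2q-1}+x^{q^{2}-q+1}$. Since $x^{2q-1}=x\cdot(x^{q-1})^{2}$ and $x^{q^{2}-q+1}=x\cdot(x^{q-1})^{q}$, I would factor
$$f(x)=x\,h\!\left(x^{q-1}\right),\qquad h(y)=1+y^{2}+y^{q}.$$

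As $\gcd(1,q-1)=1$, the multiplicative reduction underlying Lemma \ref{ky1} (see \cite{Zieve}) says that $f$ permutes $\gf_{q^{2}}$ if and only if the induced map $g(x):=x\,h(x)^{q-1}$ permutes $\mu_{q+1}=\{x:x^{q+1}=1\}$. On $\mu_{q+1}$ we have $x^{q}=x^{-1}$ and $x^{q^{2}}=x$, so $h(x)=1+x^{2}+x^{-1}$ and $h(x)^{q}=1+x^{-2}+x$; multiplying numerator and denominator of $x\,h(x)^{q}/h(x)$ by $x^{2}$ and cancelling one factor of $x$ gives the compact form
$$g(x)=\frac{x^{3}+x^{2}+1}{x^{3}+x+1}.$$
For this to be a genuine self-map I must first check that the denominator does not vanish on $\mu_{q+1}$. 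A root of $x^{3}+x+1$ lies in $\gf_{8}\setminus\gf_{2}$ and hence has multiplicative order $7$, whereas every element of $\mu_{q+1}$ has order dividing $q+1=2^{k}+1$; since $2^{k}\not\equiv 6\pmod 7$ for all $k$, we get $7\nmid q+1$, so no such root can lie in $\mu_{q+1}$. Because $h(x)\neq 0$ forces $h(x)^{q-1}\in\mu_{q+1}$, this also shows $g(\mu_{q+1})\subseteq\mu_{q+1}$.

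It remains to prove that $g$ is injective on the finite set $\mu_{q+1}$. Setting $g(x)=g(y)$, cross-multiplying and collecting terms, I expect the difference to factor with $x+y$ as one factor; writing $s=x+y$ and $p=xy$, the other factor is $p^{2}+(s+1)p+(s+1)$. Thus a hypothetical collision $x\neq y$ forces $s=(p^{2}+p+1)/(p+1)$ with $p=xy\in\mu_{q+1}$, and $x,y$ are the roots of $T^{2}+sT+p$. The crux is to show these roots are never simultaneously in $\mu_{q+1}$. Taking a square root $c=\sqrt{p}\in\mu_{q+1}$ (squaring is a bijection of $\mu_{q+1}$ as $q+1$ is odd) and substituting $T=cU$ reduces the quadratic to $U^{2}+\beta U+1$ with $\beta=s/c$, and one checks $\beta^{q}=\beta$, i.e. $\beta\in\gf_{q}$. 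The two roots lie in $\mu_{q+1}$ precisely when $U^{2}+\beta U+1$ is irreducible over $\gf_{q}$, equivalently $\operatorname{Tr}_{\gf_{q}/\gf_{2}}(1/\beta^{2})=1$. Here a short computation gives $1/\beta^{2}=p/s^{2}=t/(t+1)^{2}$ with $t=p+p^{-1}\in\gf_{q}$, and the identity $t/(t+1)^{2}=r^{2}+r$ for $r=1/(t+1)$ yields $\operatorname{Tr}_{\gf_{q}/\gf_{2}}(1/\beta^{2})=\operatorname{Tr}_{\gf_{q}/\gf_{2}}(r^{2}+r)=0$. Hence the quadratic always splits over $\gf_{q}$, its roots avoid $\mu_{q+1}$, and no genuine collision occurs, so $g$ permutes $\mu_{q+1}$.

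I expect the main obstacle to be exactly this last step. After removing the factor $x+y$, the surviving relation is symmetric and, annoyingly, invariant under the Frobenius constraints coming from $\mu_{q+1}$ (applying $x\mapsto x^{q}$ merely reproduces it), so it is not transparent that it admits no admissible pair. The decisive insight is that membership of the two roots in $\mu_{q+1}$ is governed by an irreducibility, hence trace, condition, together with the simplification $t/(t+1)^{2}=r^{2}+r$, which makes the relevant absolute trace vanish identically and thereby kills every potential collision. Finally I would dispatch the degenerate cases separately: $p=1$ violates $p^{2}+(s+1)p+(s+1)=0$ outright, while $s=0$ (equivalently $p^{2}+p+1=0$) produces a repeated root $x=y$, so neither yields a collision.
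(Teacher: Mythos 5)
Your argument is correct, but note first that this paper never proves the statement: it is quoted from \cite{DQ} as part of a survey of known permutation trinomials, so there is no internal proof to compare against. Judged on its own merits, every step of your proposal checks out. With $q=2^{m/2}$ one indeed has $f(x)=x\,h(x^{q-1})$ with $h(y)=1+y^{2}+y^{q}$, and the reduction to whether $g(x)=x\,h(x)^{q-1}=(x^{3}+x^{2}+1)/(x^{3}+x+1)$ permutes $\mu_{q+1}$ is exactly \cite[Lemma 2.1]{Zieve}, the same source as the paper's Lemma \ref{ky1}. The order-$7$ computation ($2^{k}\bmod 7\in\{1,2,4\}$, so $7\nmid q+1$) legitimately excludes poles; the collision difference does factor as $(x+y)\bigl(p^{2}+(s+1)p+(s+1)\bigr)$ with $s=x+y$, $p=xy$; the verifications $\beta=s/c\in\mathbb{F}_{q}$ and $1/\beta^{2}=p/s^{2}=t/(t+1)^{2}=r^{2}+r$ with $t=p+p^{q}\in\mathbb{F}_{q}$ are right, forcing $\operatorname{Tr}_{\mathbb{F}_{q}/\mathbb{F}_{2}}(1/\beta^{2})=0$, so the quadratic splits over $\mathbb{F}_{q}$; and the degenerate cases $p=1$ and $s=0$ are disposed of correctly. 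The only point you leave implicit is why split roots cannot lie on the circle: $\mathbb{F}_{q}\cap\mu_{q+1}=\{1\}$ since $\gcd(q-1,q+1)=1$ in characteristic $2$, and $U=1$ as a root would force $\beta=0$, i.e. $s=0$, which you have already excluded --- a one-line patch.

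Methodologically your route is genuinely different from the house style of \cite{DQ} and of this paper's own new trinomial results (Theorems \ref{theorem1}--\ref{th_Tripp5}), which establish the permutation property by direct counting: one shows $f(x)=0$ has only the trivial root and that $f(x)=a$ has at most one solution, via substitutions into the subfield that reduce matters to quadratics or cubics over $\mathbb{F}_{2^{m/2}}$ together with trace conditions such as Lemma \ref{san1}. Your unit-circle reduction is shorter and more structural: it explains the permutation property as bijectivity of a degree-$3$ rational self-map of $\mu_{q+1}$, and the same template transfers immediately to other trinomials of the shape $x^{r}h(x^{q-1})$. What the direct-counting style buys in exchange is elementary self-containedness --- it needs no result on induced maps of subgroups --- and it sometimes extracts necessary-and-sufficient conditions (as in the paper's Theorem \ref{theorem1}) rather than just sufficiency.
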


         \begin{Th}
         	\emph{ \cite{DQ}}
         	Let $k$ be a positive integer and $q$ be a prime power with $q \not \equiv 0(\bmod 3)$. Let $m$ be a positive even integer. Then, $x+x^{kq^{m/2}-(k-1)}+x^{k+1-kq^{m/2}}$ is a permutation polynomial over  $\mathbb{F}_{q^{m}}$ if and only if one of the following three conditions holds:
         	\begin{enumerate}[(i)]
         		\item $m \equiv 0 (\bmod 4)$;
         		\item $q \equiv 1 (\bmod 4)$;
         		\item $m \equiv 2 (\bmod 4), q \equiv 2 (\bmod 3)$, and $\mathrm{exp}_{3}(q^{m/2}+1)$, where $\mathrm{exp}_{3}(i)$ denotes the exponent of $3$ in the canonical factorization of $i$.
         	\end{enumerate}
         \end{Th}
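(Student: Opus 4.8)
The plan is to reduce the permutation property of $f$ to a zero-free condition on the group $\mu_{q^{m/2}+1}$ of $(q^{m/2}+1)$-th roots of unity, and then to read off that condition from the $3$-adic valuation of $q^{m/2}+1$. Write $Q=q^{m/2}$, so $f\in\mathbb{F}_{Q^2}[x]$. First I would rewrite the exponents as $kQ-(k-1)=1+k(Q-1)$ and $k+1-kQ=1-k(Q-1)$, so that, factoring out $x$, for $x\neq 0$ one has
\[
f(x)=x\bigl(1+x^{k(Q-1)}+x^{-k(Q-1)}\bigr)=x\,h\!\left(x^{Q-1}\right),\qquad h(y):=1+y^{k}+y^{-k},
\]
with $f(0)=0$ and $h$ regarded as a function on $\mu_{Q+1}$. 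This puts $f$ in the shape $x^{r}h(x^{s})$ with $r=1$ and $s=Q-1\mid Q^2-1$, to which the multiplicative-reduction criterion of Park--Lee--Zieve type (cf. \cite{LeePark}, \cite{Zieve}) applies: since $(Q^2-1)/s=Q+1$, the polynomial $f$ permutes $\mathbb{F}_{Q^2}$ if and only if $\gcd(r,s)=1$ and the map $y\mapsto y\,h(y)^{s}$ permutes $\mu_{Q+1}$, the latter in particular forcing $h$ to be zero-free on $\mu_{Q+1}$.

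The decisive simplification comes next. For $y\in\mu_{Q+1}$ we have $y^{Q}=y^{-1}$, hence $h(y)^{Q}=1+y^{kQ}+y^{-kQ}=1+y^{-k}+y^{k}=h(y)$, so $h(y)\in\mathbb{F}_{Q}$. Therefore, whenever $h(y)\neq 0$ we get $h(y)^{Q-1}=1$, and the induced map $y\mapsto y\,h(y)^{Q-1}$ is just the identity on $\mu_{Q+1}$, which trivially permutes. I would corroborate this with the direct fibre argument: the fibres of $x\mapsto x^{Q-1}$ are the cosets of $\mathbb{F}_{Q}^{*}$, and on each fibre $f$ acts as multiplication by the constant $h(y)\in\mathbb{F}_{Q}^{*}$, hence bijectively. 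Consequently $f$ permutes $\mathbb{F}_{Q^2}$ if and only if $h(y)\neq 0$ for every $y\in\mu_{Q+1}$; for if $h(y_0)=0$ and $x_0^{Q-1}=y_0$, then $f(x_0)=0=f(0)$, destroying injectivity.

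It then remains to decide when $1+y^{k}+y^{-k}=0$, equivalently $y^{2k}+y^{k}+1=0$, has a root in $\mu_{Q+1}$. As $q\not\equiv 0\pmod 3$ the characteristic is not $3$, so the roots of $z^2+z+1$ are exactly the two primitive cube roots of unity; hence a root exists precisely when some $y\in\mu_{Q+1}$ has $y^{k}$ of order $3$. Working in the cyclic group $\mu_{Q+1}$ of order $n=Q+1$, the $k$-th powers form the subgroup of order $n/\gcd(n,k)$, and this subgroup contains an element of order $3$ if and only if $3\mid n/\gcd(n,k)$, that is, if and only if $\mathrm{exp}_3(Q+1)>\mathrm{exp}_3(k)$. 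Thus $f$ is a permutation if and only if $\mathrm{exp}_3(q^{m/2}+1)\le\mathrm{exp}_3(k)$.

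Finally I would unwind this uniform criterion into the three stated cases by computing $\mathrm{exp}_3(q^{m/2}+1)$. If $q\equiv 1\pmod 3$ then $q^{m/2}+1\equiv 2\pmod 3$; if $q\equiv 2\pmod 3$ then $q^{m/2}+1\equiv(-1)^{m/2}+1\pmod 3$, which is $\equiv 2$ when $m\equiv 0\pmod 4$ and $\equiv 0$ when $m\equiv 2\pmod 4$. Hence $3\nmid q^{m/2}+1$, making $f$ a permutation for every $k$, exactly in cases (i) $m\equiv 0\pmod 4$ and (ii) $q\equiv 1\pmod 3$ (the congruence in (ii) being read modulo $3$), while the single remaining case $m\equiv 2\pmod 4$ with $q\equiv 2\pmod 3$ is governed by the valuation inequality $\mathrm{exp}_3(q^{m/2}+1)\le\mathrm{exp}_3(k)$ of (iii). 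I expect the main obstacle to be the reduction step itself, namely verifying that the induced map on $\mu_{Q+1}$ is the identity and that the fibre action is bijective; once the problem is transported onto $\mu_{Q+1}$, the cube-root count and the $3$-adic bookkeeping are routine.
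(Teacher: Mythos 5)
Your proof is correct, and there is a wrinkle you should know about: the paper you are being compared against does not actually prove this statement --- it quotes it from \cite{DQ} in its survey of known permutation trinomials --- so the only meaningful comparison is with that cited source, whose argument runs along the same lines as yours (reduction of $x^rh(x^s)$ to the induced map on $\mu_{q^{m/2}+1}$). Within that route, your write-up is clean and complete: writing $Q=q^{m/2}$, the observation that $h(y)=1+y^k+y^{-k}$ satisfies $h(y)^Q=h(y)$ on $\mu_{Q+1}$, hence $h(y)\in\mathbb{F}_Q$ and $h(y)^{Q-1}=1$ whenever $h(y)\neq 0$, collapses the Zieve-type criterion to the single condition that $h$ be zero-free on $\mu_{Q+1}$; and your direct fibre argument (each coset $x_0\mathbb{F}_Q^*$ is mapped to itself by multiplication by $h(x_0^{Q-1})\in\mathbb{F}_Q^*$) makes the appeal to \cite{Zieve} or \cite{LeePark} dispensable, which is a small but genuine simplification. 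The $3$-adic bookkeeping is also right: since the characteristic is not $3$, a zero of $h$ on $\mu_{Q+1}$ exists iff some $k$-th power in the cyclic group of order $Q+1$ has order $3$, iff $\mathrm{exp}_3(Q+1)>\mathrm{exp}_3(k)$, so $f$ permutes iff $\mathrm{exp}_3(q^{m/2}+1)\le\mathrm{exp}_3(k)$. A point in your favor rather than against: the theorem as printed in this paper is garbled, and your derivation silently repairs it. Condition (ii) must read $q\equiv 1\pmod{3}$, not $q\equiv 1\pmod{4}$ --- for instance $q=5$, $m=2$, $k=1$ gives $f(x)=x+x^5+x^{21}$ over $\mathbb{F}_{25}$, which is not a permutation because $1+y+y^{-1}$ vanishes at the primitive cube roots of unity in $\mu_6$, even though $5\equiv 1\pmod 4$ --- and condition (iii) is truncated and should end with the inequality $\mathrm{exp}_3(q^{m/2}+1)\le\mathrm{exp}_3(k)$, exactly as your uniform criterion predicts and as the original statement in \cite{DQ} has it.
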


    Recently, some particular types of permutation trinomials have been determined by Hou. We list these following
    permutation trinomials in even characteristic.
    \begin{Th}
    \label{TPPHou1}
    	\emph{ \cite{XH4}}
         Let $f=ax+bx^q+x^{2q-1} \in \mathbb{F}_{q^2}[x]$, where $q$ is even. Then $f$ is a permutation polynomial over $\mathbb{F}_{q^2}$ if and only if one of the following is satisfied.
        \begin{enumerate}[(i)]
        	\item $a=b=0,q=2^{2k}$.
        	\item $ab\neq 0, a=b^{1-q}, \mathrm{Tr}_{q/2}(b^{-1-q})=0$.
        	\item $ab(a-b^{1-q})\neq 0, \frac{a}{b^2}\in \mathbb{F}_q,\mathrm{Tr}_{q/2}(\frac{a}{b^2})=0,b^2+a^2b^{q-1}+a=0$.
        \end{enumerate}
    \end{Th}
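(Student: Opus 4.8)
The plan is to reduce the permutation property of $f$ over $\gf_{q^2}$ to a permutation property on the much smaller cyclic group $\mu_{q+1}$, and then to analyze that reduced problem case by case. The starting point is the factorization
\[
f(x)=ax+bx^{q}+x^{2q-1}=x\bigl(x^{2(q-1)}+b\,x^{q-1}+a\bigr)=x\,h\!\left(x^{q-1}\right),\qquad h(y)=y^{2}+by+a .
\]
Since $x\mapsto x^{q-1}$ maps $\gf_{q^2}^{*}$ onto $\mu_{q+1}$, the standard multiplicative criterion for polynomials of the shape $x^{r}h(x^{(q^2-1)/d})$ (the same circle of ideas underlying Lemma~\ref{ky1}, cf. \cite{Zieve}) applies with $r=1$ and $d=q+1$: because $\gcd(1,q-1)=1$ holds automatically, $f$ permutes $\gf_{q^2}$ if and only if $g(x):=x\,h(x)^{q-1}$ permutes $\mu_{q+1}$. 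In particular a root of $h$ in $\mu_{q+1}$ would force $g$ to take the value $0\notin\mu_{q+1}$, so a first necessary condition is that $h$ have no zero on $\mu_{q+1}$.

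First I would dispose of the degenerate regime. If $a=b=0$ then $f=x^{2q-1}$ is a monomial, which permutes $\gf_{q^2}$ exactly when $\gcd(2q-1,q^2-1)=1$; since $\gcd(2q-1,q-1)=1$ always and $\gcd(2q-1,q+1)=\gcd(3,q+1)$, and $q$ is even, this happens if and only if $3\nmid q+1$, i.e. $q\equiv1\pmod 3$, i.e. $q=2^{2k}$. This yields case (i). I would also check the two boundary binomials $a=0,\ b\neq0$ and $b=0,\ a\neq0$ directly and show neither permutes $\gf_{q^2}$, so that a genuine trinomial forces $ab\neq0$, consistent with the hypotheses of (ii) and (iii).

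For $ab\neq0$ I would exploit that on $\mu_{q+1}$ one has $x^{q}=x^{-1}$, whence $h(x)^{q-1}=h(x)^{q}/h(x)$ turns $g$ into the rational map
\[
g(x)=\frac{a^{q}x^{2}+b^{q}x+1}{x^{3}+bx^{2}+ax}\qquad(x\in\mu_{q+1}).
\]
The clean case is (ii), $a=b^{1-q}$: then $a^{q}=b^{q-1}$ and the numerator collapses, $a^{q}x^{2}+b^{q}x+1=b^{q-1}(x^{2}+bx+a)$, so that $g(x)=b^{q-1}x^{-1}$ wherever $h(x)\neq0$. Since $b^{q-1}\in\mu_{q+1}$, the map $x\mapsto b^{q-1}x^{-1}$ is a bijection of $\mu_{q+1}$; hence $f$ permutes $\gf_{q^2}$ precisely when the cancellation is legitimate, i.e. when $h$ has no zero on $\mu_{q+1}$. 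Writing the roots of $h$ as $x=b\zeta$ with $\zeta^{2}+\zeta+b^{-1-q}=0$ and noting $b^{-1-q}\in\gf_{q}$, a short trace computation (a root lies on $\mu_{q+1}$ iff $\zeta^{q+1}=b^{-1-q}$) shows that $h$ avoids $\mu_{q+1}$ if and only if $\zeta\in\gf_{q}$, that is, if and only if $\mathrm{Tr}_{q/2}(b^{-1-q})=0$. This is exactly condition (ii).

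Case (iii), where $a\neq b^{1-q}$, is the genuinely hard part, since the above cancellation fails and $g$ is a true degree-$3$ rational self-map of $\mu_{q+1}$. Here I would use the hypothesis $a/b^{2}\in\gf_{q}$ to transport the injectivity question from $\mu_{q+1}$ to $\mathbb{P}^{1}(\gf_{q})$ via a M\"obius bijection $\gf_{q}\cup\{\infty\}\to\mu_{q+1}$, which converts $g$ into a low-degree rational function defined over $\gf_{q}$. Solving $g(x_{1})=g(x_{2})$, factoring off the trivial factor $x_{1}=x_{2}$, reduces unique solvability to a quadratic (and an associated cubic) over $\gf_{q}$, whose solvability is governed in even characteristic by an absolute-trace condition; this is where $\mathrm{Tr}_{q/2}(a/b^{2})=0$ and Lemma~\ref{san1} enter, while the algebraic relation $b^{2}+a^{2}b^{q-1}+a=0$ is precisely the constraint that isolates this regime. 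The main obstacle, and the most delicate part of the argument, is \emph{completeness}: proving that no pairs $(a,b)$ outside the three listed families give a permutation. For this I would either exhibit explicit collisions $g(x_{1})=g(x_{2})$ with $x_{1}\neq x_{2}$ in $\mu_{q+1}$, or fall back on Hermite's criterion to produce a nonvanishing power sum, and I expect the bookkeeping of the excluded boundary points ($x=1$, the poles of $g$, and $\infty$ under the M\"obius change of variables) to be the chief source of technical difficulty.
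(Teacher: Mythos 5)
This theorem is quoted in the paper from Hou \cite{XH4} as a known result; the paper contains no proof of it, so there is nothing internal to compare against, and your attempt has to be measured against the cited proof, which in \cite{XH4} occupies the bulk of a research paper. Your framework is correct and is the standard modern one: the factorization $f(x)=x\,h(x^{q-1})$ with $h(y)=y^{2}+by+a$, the reduction (via \cite[Lemma 2.1]{Zieve}, the same lemma behind Lemma \ref{ky1}) to whether $g(x)=x\,h(x)^{q-1}$ permutes $\mu_{q+1}$, and the observation that a root of $h$ on $\mu_{q+1}$ kills the permutation property all check out. Your case (i) is complete ($\gcd(2q-1,q^{2}-1)=\gcd(3,q+1)$, giving $q=2^{2k}$), and your case (ii) is essentially complete: with $a=b^{1-q}$ one has $a^{q}=b^{q-1}$, the numerator $a^{q}x^{2}+b^{q}x+1$ factors as $b^{q-1}(x^{2}+bx+a)$, so $g(x)=b^{q-1}x^{-1}$ off the zeros of $h$, and the substitution $x=b\zeta$, $\zeta^{2}+\zeta=b^{-1-q}\in\gf_{q}$ together with $\zeta^{q+1}=b^{-1-q}$ correctly converts ``$h$ has no zero on $\mu_{q+1}$'' into $\mathrm{Tr}_{q/2}(b^{-1-q})=0$.

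The genuine gap is case (iii) and, more broadly, the necessity direction, which together constitute the actual substance of Hou's theorem. For (iii) you give only intentions: you never derive the condition $b^{2}+a^{2}b^{q-1}+a=0$ (you merely assert it ``is precisely the constraint that isolates this regime''), you do not carry out the M\"obius transport of $g$ from $\mu_{q+1}$ to $\mathbb{P}^{1}(\gf_{q})$, and you do not perform the quadratic/cubic trace analysis where $\mathrm{Tr}_{q/2}(a/b^{2})=0$ and Lemma \ref{san1} are supposed to enter; ``I would use'' and ``I expect'' mark exactly the places where all the work lies. Likewise, completeness --- that no $(a,b)$ with $ab(a-b^{1-q})\neq 0$ outside (iii), and no pair with exactly one of $a,b$ zero, yields a permutation --- is acknowledged as ``the chief source of technical difficulty'' but not addressed; even the easy boundary cases $a=0\neq b$ and $b=0\neq a$ (where, in characteristic $2$, $h(y)=y^{2}+a=(y+a^{1/2})^{2}$ makes $g$ quite explicit) are left as promissory notes. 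Hou's published proof handles precisely these parts through lengthy Hermite-criterion/power-sum computations and case analysis, a technique quite different from your $\mu_{q+1}$ reduction. So what you have is a correct proof of (i) and (ii) inside a sound framework, but not a proof of the theorem: the hard two-thirds of the statement remains an outline.
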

      \begin{Th}
         \label{TPPHou2}
      	\emph{ \cite{XH2}}
      	Let $q>2$ be even and $f=x+tx^q+x^{2q-1}\in \mathbb{F}_q[x]$, where $t\in \mathbb{F}_q^*$. Then $f$ is a permutation polynomial of $\mathbb{F}_{q^2}$ if and only if $\mathrm{Tr}_{q/2}(\frac{1}{t})=0$.
      \end{Th}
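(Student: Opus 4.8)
The plan is to exploit the multiplicative structure of $f$. First I would write
$$f(x)=x+tx^{q}+x^{2q-1}=x\bigl(1+tx^{q-1}+x^{2(q-1)}\bigr)=x\,g\bigl(x^{q-1}\bigr),\qquad g(y)=y^{2}+ty+1\in\mathbb{F}_{q}[y],$$
so that $f$ has the shape $x^{r}h(x^{s})$ with $r=1$, $s=q-1$, $h=g$, and $s\mid q^{2}-1$ with quotient $d=(q^{2}-1)/s=q+1$. By the standard reduction for polynomials of this shape (in the same circle of ideas as Lemma \ref{ky1}), $f$ permutes $\mathbb{F}_{q^{2}}$ if and only if $\gcd(r,s)=1$ and the auxiliary map $\psi(x)=x\,g(x)^{q-1}$ permutes the group $\mu_{q+1}$ of $(q+1)$-st roots of unity. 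Since $\gcd(1,q-1)=1$ is automatic, everything reduces to analysing $\psi$ on the ``unit circle'' $\mu_{q+1}$.

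The pivotal simplification is that $\psi$ collapses to inversion on $\mu_{q+1}$. For $x\in\mu_{q+1}$ one has $x^{q}=x^{-1}$, and $t\in\mathbb{F}_{q}$ gives $t^{q}=t$, so I would compute
$$g(x)^{q}=x^{-2}+tx^{-1}+1=x^{-2}\bigl(1+tx+x^{2}\bigr)=x^{-2}g(x).$$
Hence $g(x)^{q-1}=x^{-2}$ whenever $g(x)\neq0$, and therefore $\psi(x)=x\cdot x^{-2}=x^{-1}$. As inversion is a bijection of $\mu_{q+1}$, the map $\psi$ permutes $\mu_{q+1}$ exactly when it never encounters a zero of $g$: if $g(x_{0})=0$ for some $x_{0}\in\mu_{q+1}$ then $\psi(x_{0})=0\notin\mu_{q+1}$ (equivalently, $f$ sends the entire fibre $\{x:x^{q-1}=x_{0}\}$ to $0$), destroying the permutation property. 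Thus $f$ permutes $\mathbb{F}_{q^{2}}$ if and only if $g(y)=y^{2}+ty+1$ has no root lying in $\mu_{q+1}$.

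Detecting whether $g$ meets $\mu_{q+1}$ is the step I expect to be the main obstacle, since it is where the trace enters. The substitution $y=ts$ converts $g(y)=0$ into the Artin--Schreier equation $s^{2}+s=t^{-2}$, whose roots lie in $\mathbb{F}_{q}$ precisely when $\mathrm{Tr}_{q/2}(t^{-2})=0$ and in $\mathbb{F}_{q^{2}}\setminus\mathbb{F}_{q}$ precisely when $\mathrm{Tr}_{q/2}(t^{-2})=1$; moreover $\mathrm{Tr}_{q/2}(t^{-2})=\mathrm{Tr}_{q/2}(t^{-1})$ because squaring fixes the absolute trace. When $\mathrm{Tr}_{q/2}(1/t)=0$ both roots of $g$ lie in $\mathbb{F}_{q}$, so a root on $\mu_{q+1}$ would force $y^{2}=1$, i.e. $y=1$; but $g(1)=t\neq0$, so $g$ avoids $\mu_{q+1}$ and $f$ permutes. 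When $\mathrm{Tr}_{q/2}(1/t)=1$ the quadratic $g$ is irreducible over $\mathbb{F}_{q}$, its roots are Frobenius conjugates $y,y^{q}$, and their product equals the constant term $1$, so $y^{q+1}=1$ and $y\in\mu_{q+1}$; hence $g$ meets $\mu_{q+1}$ and $f$ is not a permutation. Together these give exactly ``$f$ permutes $\mathbb{F}_{q^{2}}$ iff $\mathrm{Tr}_{q/2}(1/t)=0$''. As a consistency check, the same result drops out of Theorem \ref{TPPHou1} with $a=1$, $b=t$: then $a-b^{1-q}=0$ excludes case (iii), case (i) is impossible, and only case (ii) can hold, its condition being $\mathrm{Tr}_{q/2}(b^{-1-q})=\mathrm{Tr}_{q/2}(t^{-2})=\mathrm{Tr}_{q/2}(1/t)=0$.
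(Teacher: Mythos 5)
Your proof is correct, but note that there is no in-paper proof to compare against: Theorem \ref{TPPHou2} is quoted without proof from Hou \cite{XH2} in the subsection listing \emph{known} permutation trinomials. Judged on its own, your argument is complete and is the short modern route. You write $f(x)=x\,g(x^{q-1})$ with $g(y)=y^{2}+ty+1$ and invoke the multiplicative reduction --- which you should cite precisely as \cite[Lemma 2.1]{Zieve}, since the paper's Lemma \ref{ky1} is only a one-directional corollary of it and does not by itself furnish the equivalence ``$f$ permutes $\mathbb{F}_{q^2}$ iff $x\,g(x)^{q-1}$ permutes $\mu_{q+1}$'' --- and the identity $g(x)^{q}=x^{-2}g(x)$ on $\mu_{q+1}$ then collapses the induced map to inversion, reducing everything to whether $g$ meets $\mu_{q+1}$. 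Your Artin--Schreier analysis is exactly right: $y=ts$ turns $g(y)=0$ into $s^{2}+s=t^{-2}$ with $\mathrm{Tr}_{q/2}(t^{-2})=\mathrm{Tr}_{q/2}(1/t)$; in the trace-zero case the only candidate in $\mathbb{F}_{q}^{*}\cap\mu_{q+1}$ is $y=1$ (as $q$ is even, $\gcd(q-1,q+1)=1$), which is excluded by $g(1)=t\neq 0$; in the trace-one case the conjugate roots multiply to the constant term $1$, hence lie on $\mu_{q+1}$ and destroy the permutation property --- and the hypothesis $q>2$ enters exactly where you note it, since the fibre of $x\mapsto x^{q-1}$ over such a root has $q-1>1$ elements. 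For context, Hou's original proof in \cite{XH2} is considerably more laborious, proceeding via power-sum and Hermite-type computations in the style of the binomial proofs in Section \ref{sec2} of this paper; your approach instead matches the $\mu_{q+1}$-reduction philosophy behind Lemma \ref{ky1}, and trades those computations for a two-line collapse to inversion plus an elementary quadratic analysis. The closing consistency check against Theorem \ref{TPPHou1}(ii) with $a=1$, $b=t$ is also carried out correctly.
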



    \subsection{New Classes of Permutation Trinomials}
    {In this subsection, we introduce five new classes of permutation trinomials over $\mathbb{F}_{2^m}$.
    Motivated by \cite{DQ}, we first consider permutation trinomials with trivial coefficients, that is, whose nonzero coefficients are all $1$.
   }
     \begin{Th}
     	\label{theorem1}
     	Let $q=2^{2k}$ and $k$ be a positive integer. Then $f(x)=x+x^{2^{k}}+x^{2^{2k-1}-2^{k-1}+1}$  is a permutation trinomial over $\mathbb{F}_{q}$ if and only if $k \not \equiv 0(\bmod 3)$.
     \end{Th}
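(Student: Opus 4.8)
The plan is to use the standard multiplicative reduction to the unit circle $\mu_{Q+1}$ and then recognize the induced map as a conjugate of $L(x)=x+x^{2}+x^{4}$, so that Lemma~\ref{lem1} delivers the stated condition. Write $Q=2^{k}$, so $q=Q^{2}$ and $\gf_{q}=\gf_{Q^{2}}$ is a quadratic extension of $\gf_{Q}=\gf_{2^{k}}$. Since $2^{2k-1}-2^{k-1}=2^{k-1}(2^{k}-1)=\tfrac{Q(Q-1)}{2}$, the third exponent satisfies $x^{2^{2k-1}-2^{k-1}+1}=x\,(x^{Q-1})^{Q/2}$, while $x^{2^{k}}=x\,x^{Q-1}$; hence
\[
f(x)=x\bigl(1+x^{Q-1}+(x^{Q-1})^{Q/2}\bigr)=x\,h(x^{Q-1}),\qquad h(t)=1+t+t^{Q/2}.
\]
First I would invoke the classical criterion for polynomials of the shape $x^{r}h(x^{s})$ with $s=(q-1)/d$ (as in \cite{Zieve}): here $r=1$, $s=Q-1$, $d=Q+1$, and since $\gcd(1,Q-1)=1$, $f$ permutes $\gf_{q}$ if and only if $\psi(t):=t\,h(t)^{Q-1}$ permutes $\mu_{Q+1}$.

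Next I would simplify $\psi$ on $\mu_{Q+1}$, where $t^{Q}=t^{-1}$. As $h$ has coefficients in $\gf_{2}$, $h(t)^{Q}=h(t^{-1})=1+t^{-1}+t^{-Q/2}$, so $\psi(t)=t\,h(t^{-1})/h(t)$. Substituting the bijection $b=t^{Q/2}$ of $\mu_{Q+1}$ (legitimate since $\gcd(Q/2,Q+1)=1$), with $b^{2}=t^{-1}$, a direct computation gives
\[
\psi=\frac{1+b+b^{3}}{b\,(1+b^{2}+b^{3})}=\frac{N(b)}{D(b)}.
\]
The first key observation is the characteristic-$2$ identity $N(b)+D(b)=1+b^{4}=(1+b)^{4}$, so that $\psi+1=(1+b)^{4}/D(b)$. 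I would also record that $D$ never vanishes on $\mu_{Q+1}$: the roots of $1+b^{2}+b^{3}$ lie in $\mu_{7}$, and $7\nmid 2^{k}+1$ for every $k$.

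The heart of the argument is a rational parametrization of $\mu_{Q+1}$ that turns $\psi$ into $L$. Fix $\xi\in\gf_{Q^{2}}$ with $\xi+\xi^{Q}=1$ and set $b=\frac{v+\xi}{v+\xi^{Q}}$, giving a bijection $\gf_{Q}\cup\{\infty\}\to\mu_{Q+1}$ (with $v=\infty$ corresponding to the fixed point $b=1$). Writing $A=v+\xi$ and $B=v+\xi^{Q}=A+1$, one finds $1+b=1/B$ and, after substituting $B=A+1$, the clean collapse $B^{3}+A^{2}B+A^{3}=A^{3}+A+1$, whence
\[
\psi+1=\frac{1}{A\,(A^{3}+A+1)}=\frac{1}{A^{4}+A^{2}+A}=\frac{1}{L(A)},\qquad L(x)=x+x^{2}+x^{4}.
\]
Reading the image parameter $v'$ from $1+\psi=1/(v'+\xi^{Q})$ yields $v'=L(v+\xi)+\xi+1$; since $A^{Q}=A+1$ forces $L(A)^{Q}=L(A)+1$, one checks $v'\in\gf_{Q}$, so $v\mapsto v'$ is a genuine self-map of $\gf_{Q}$. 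By the additivity of $L$, this map is injective exactly when $L$ has trivial kernel on $\gf_{Q}=\gf_{2^{k}}$, i.e. exactly when $L$ permutes $\gf_{2^{k}}$. Thus $\psi$ permutes $\mu_{Q+1}$, equivalently $f$ permutes $\gf_{q}$, if and only if $L$ permutes $\gf_{2^{k}}$, which by Lemma~\ref{lem1} holds iff $k\not\equiv0\pmod{3}$; this settles both directions simultaneously.

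I expect the main obstacle to be discovering the correct ``angular'' coordinate on $\mu_{Q+1}$, namely the substitution $b=t^{Q/2}$ followed by the Cayley-type parametrization through a trace-one element $\xi$, since this is precisely what exposes the two cancellations $N+D=(1+b)^{4}$ and $B^{3}+A^{2}B+A^{3}=A^{3}+A+1$ and makes $L=x+x^{2}+x^{4}$ surface. Once these identities are in hand the reduction to Lemma~\ref{lem1} is automatic, and the only residual bookkeeping is the non-vanishing of $D$ on $\mu_{Q+1}$ together with the verification that $v\mapsto v'$ really lands in $\gf_{Q}$.
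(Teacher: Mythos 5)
Your proof is correct, and it takes a genuinely different route from the paper's. The paper works additively on the equation $f(x)=a$ itself: for $a=0$ it substitutes $y=x^{2^k-1}\in\mu_{2^k+1}$ to reach $y+y^2+y^4=0$; for $a\neq 0$ it inverts the exponent $d=2^{2k-1}-2^{k-1}+1$ via an $s$ with $ds\equiv 1$ and $4s\equiv 2^k+3 \pmod{2^{2k}-1}$, sets $u=x^d+a=x+x^{2^k}\in\gf_{2^k}$ (the relative trace), and collapses $f(x)=a$ to the quartic $v^4+v^2+v=a^{2^k+1}/(a^{2^k}+a)^2$ over $\gf_{2^k}$; both halves then invoke Lemma \ref{lem1}. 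You instead work multiplicatively: Zieve's criterion (\cite[Lemma 2.1]{Zieve}, the very source of the paper's Lemma \ref{ky1}) reduces the problem to $\psi(t)=t\,h(t)^{2^k-1}$ on the unit circle $\mu_{2^k+1}$, and your substitutions $b=t^{2^{k-1}}$ and $b=(v+\xi)/(v+\xi^{2^k})$ with $\xi+\xi^{2^k}=1$ conjugate $\psi$ to the affine additive map $v\mapsto L(v)+L(\xi)+\xi+1$ on $\gf_{2^k}$, where $L(x)=x+x^2+x^4$. I checked your computations --- $\psi=(1+b+b^3)/\bigl(b(1+b^2+b^3)\bigr)$, the identity $N(b)+D(b)=(1+b)^4$, the nonvanishing of $D$ on $\mu_{2^k+1}$ (roots of $b^3+b^2+1$ have order $7$ and $7\nmid 2^k+1$), $1+b=1/B$, $B^3+A^2B+A^3=A^3+A+1$, $\psi(1)=1$ with $\psi\neq 1$ at finite $v$, and $v'=L(v+\xi)+\xi+1\in\gf_{2^k}$ --- and they all hold; since $\gcd(1,2^k-1)=1$, the criterion applies and both implications follow. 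Both proofs funnel into Lemma \ref{lem1}, which is why the same polynomial $L$ appears, but your route buys a cleaner logical structure: permuting $\gf_{q}$ becomes literally equivalent to injectivity of $L$ on $\gf_{2^k}$, so necessity and sufficiency fall out simultaneously, whereas the paper's reduction chain is a priori one-directional and its ``only if'' direction tacitly requires that, when $3\mid k$, some $a$ realizes a value of $a^{2^k+1}/(a^{2^k}+a)^2$ at which $v\mapsto v^4+v^2+v$ collides --- a verification your conjugation renders moot. What the paper's route buys in exchange is elementarity and reusability: no parametrization of the unit circle is needed, and its direct substitution style is the template for the proofs of the subsequent trinomial theorems.
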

     \begin{proof}
     	First of all, we show that $x=0$ is the only solution of $f(x)=0$ in $\mathbb{F}_{q}$ when $k \not\equiv 0(\bmod 3)$. If $f(x)=0$, then either $x=0$ or $1+x^{2^{k}-1}+x^{2^{2k-1}-2^{k-1}}=0$. Therefore, we only need to prove {that} the equation
     	\begin{equation}
     	1+x^{2^{k}-1}+x^{2^{2k-1}-2^{k-1}}=0
     	\label{key18}    	
     	\end{equation}
     	has no solution in $\mathbb{F}_{q}$.

     Let $y=x^{2^{k}-1}$. Then computing $(y*(\ref{key18}))^2$ and simplifying it by $y^{2^k+1}=1$, we get
     	\begin{equation}
     	y+y^2+y^4=0.
     	\label{key20}
     	\end{equation}
{It follows from Lemma \ref{lem1} that (\ref{key20}) has no nonzero solution in $\mathbb{F}_{q}$ when
 $k \not\equiv 0(\bmod 3)$.
 Since $x=0$ is not the solution of (\ref{key18}), $f(x)=0$ has only one solution in $\mathbb{F}_{q}$
 when $k \not\equiv 0(\bmod 3)$.}
     	
     	Then we prove that $f(x)=a$ has at most one solution in $\mathbb{F}_{q}$ for any $a \in \mathbb{F}_{q}^{*}$ if and only if $k \not\equiv 0(\bmod 3)$.
     	Let $d=2^{2k-1}-2^{k-1}+1$.
     {Let $s$ be an integer satisfying $1\leq s \leq 2^{2k}-2$ and $4s \equiv 2^k+3 (\bmod{2^{2k}-1})$.
     Then it is easy to verify that $ds \equiv 1 (\bmod{2^{2k}-1})$.} And we have the following equation from $f(x)=a$:
     	\begin{equation}
     	x+x^{2^{k}}+x^{d}=a.
     	\label{key21}
     	\end{equation}
{Let $u=x^{d}+a=x+x^{2^k}$. Then $u\in \mathbb{F}_{2^{k}}$ and  $x=(a+u)^{s}$.} Plugging it into (\ref{key21}),we have
     	\begin{equation*}
     		(a+u)^s+(a+u)^{2^k\cdot s}=u.
     	\end{equation*}
     	Raising the above equation to the $4$-th power {and simplifying it}, we get
     	\begin{equation}
     	u^4+(a^{2^{k+1}}+a^2)u^2+(a^{2^{k}}+a)^{3}u+a^{2^{k}+1}(a^{2}+a^{2^{k+1}})=0.
     	\end{equation}
     	Let $b=a^{2^{k}}+a$, $c=a^{2^{k}+1}$. Then $b,c \in \mathbb{F}_{2^{k}}$, {and the above equation reduces to}
     	\begin{equation*}
     	u^{4}+b^{2}u^{2}+b^{3}u+b^{2}c=0.
     	\end{equation*}
     	
     	If $b=0$, then there is only one solution $u=0$ of the above equation. Hence,  $f(x)=a$ has at most one solution $x=a^{s}$ in $\mathbb{F}_{q}$ for any $a \in \mathbb{F}_{q}^{*}$.
     	
     	If $b \neq 0$, let $u=bv$. Rewriting the above equation, we get
   {  \begin{equation}\label{eqLv}
        v^{4}+v^{2}+v=\frac{c}{b^{2}}=\frac{a^{2^k+1}}{(a^{2^k}+a)^2}.
     \end{equation}

     	Since $x=(a+u)^{s}=(a+bv)^s$, it suffices to prove that \eqref{eqLv} has at most one solution in $\mathbb{F}_{2^k}$ for any $a \in \mathbb{F}_{q}^{*}$
     if and only if $k \not\equiv 0(\bmod 3)$. Then the result follows from Lemma \ref{lem1}.}

      We finish the proof.
%
     \end{proof}

    \begin{Th}
    	\label{theorem2}
    	Let $q=2^{2k}$ and $k>0$ be an odd integer. Then $f(x)=x+x^{2^{k}+2}+x^{2^{2k-1}+2^{k-1}+1}$ is a permutation trinomial over $\mathbb{F}_{q}$.
    \end{Th}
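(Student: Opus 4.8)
The plan is to exploit the fact that $f$ factors through the norm map of the quadratic extension $\mathbb{F}_{q}/\mathbb{F}_{2^{k}}$, where $q=2^{2k}$. Writing $w=x^{2^{k}+1}$, which is the relative norm $N(x)$ of $x$ and hence lies in $\mathbb{F}_{2^{k}}$, and using $2^{2k-1}+2^{k-1}=2^{k-1}(2^{k}+1)$, one has $x^{2^{k}+2}=x\cdot w$ and $x^{2^{2k-1}+2^{k-1}+1}=x\cdot(x^{2^{k}+1})^{2^{k-1}}=x\cdot w^{2^{k-1}}$, so that
\[
f(x)=x\left(1+w+w^{2^{k-1}}\right).
\]
Set $c=1+w+w^{2^{k-1}}\in\mathbb{F}_{2^{k}}$ and put $t=w^{2^{k-1}}$, so that $t^{2}=w$ and $c=t^{2}+t+1$. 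Since $k$ is odd we have $\mathrm{Tr}_{k}(1)=1$, hence $t^{2}+t+1$ has no root in $\mathbb{F}_{2^{k}}$ and $c\neq0$ for every $x$. In particular $f(x)=0$ forces $x=0$, which settles the unique-root requirement.

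It then remains to show that $f(x)=a$ has at most one solution for each $a\in\mathbb{F}_{q}^{*}$. First I would take the relative norm of both sides of $xc=a$. As $c\in\mathbb{F}_{2^{k}}$ its norm is $c^{2^{k}+1}=c^{2}=1+w^{2}+w$ (using $w^{2^{k}}=w$), while $N(x)=w$; therefore
\[
w\,(w^{2}+w+1)=a^{2^{k}+1}.
\]
Writing $A=a^{2^{k}+1}\in\mathbb{F}_{2^{k}}^{*}$ and invoking the characteristic-$2$ identity $w^{3}+w^{2}+w+1=(w+1)^{3}$, this rearranges to
\[
(w+1)^{3}=A+1.
\]
Because $k$ is odd, $2^{k}\equiv2\pmod{3}$, so $\gcd(3,2^{k}-1)=1$ and the cubing map $z\mapsto z^{3}$ is a bijection of $\mathbb{F}_{2^{k}}$; hence there is exactly one $w=w_{0}\in\mathbb{F}_{2^{k}}$ satisfying the displayed equation. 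Consequently every solution $x$ of $f(x)=a$ must have $N(x)=w_{0}$, which fixes the value of $c$ and forces $x=a/c$; thus $f(x)=a$ has at most one solution. Combined with the previous paragraph, $f$ is injective on the finite set $\mathbb{F}_{q}$, hence a permutation.

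The main obstacle, and the crux of the argument, is to recognize the norm factorization together with the identity $w^{3}+w^{2}+w+1=(w+1)^{3}$, which collapses the cubic in $w$ to a pure cube; after that, unique solvability is immediate from $\gcd(3,2^{k}-1)=1$ for odd $k$. The points I would verify carefully are that $c$ never vanishes---guaranteed precisely by the oddness of $k$ through $\mathrm{Tr}_{k}(1)=1$---so that $x=a/c$ is well defined, and that the forced value $w_{0}$ is nonzero (since $A\neq0$ gives $A+1\neq1$, i.e. $w_{0}+1\neq1$, consistent with $w_{0}=N(x)\neq0$). No separate surjectivity step is needed, since an injective self-map of a finite set is automatically bijective.
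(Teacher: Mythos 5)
Your proof is correct and is essentially the paper's own argument in a cleaner framing: your norm variable $w=x^{2^{k}+1}$ coincides with the paper's $y=a^{2^{k}+1}/u^{2}$ (where the paper's $u=a/x$ is exactly your $c=1+w+w^{2^{k-1}}$), and both proofs reduce injectivity to the unique solvability of $w^{3}+w^{2}+w=a^{2^{k}+1}$ over $\mathbb{F}_{2^{k}}$, settled via $w^{3}+w^{2}+w+1=(w+1)^{3}$ together with $\gcd(3,2^{k}-1)=1$ for odd $k$. Likewise, your trace argument that $t^{2}+t+1$ has no root in $\mathbb{F}_{2^{k}}$ is just another way of phrasing the paper's order-$3$ contradiction in the unique-root step, so the two proofs match in substance.
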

    \begin{proof}
    	First of all, we show that $x=0$ is the only solution of $f(x)=0$ in $\mathbb{F}_{q}$. If $f(x)=0$, then either $x=0$ or $1+{{x}^{{{2}^{k}}+1}}+{{x}^{{{2}^{2k-1}}+{{2}^{k-1}}}}=0$. Therefore, we only need to prove the equation
    	\begin{equation}
    	1+{{x}^{{{2}^{k}}+1}}+{{x}^{{{2}^{2k-1}}+{{2}^{k-1}}}}=0
    	\label{key12}    	
    	\end{equation}
    	has no solution in $\mathbb{F}_{q}$.
    	Raising (\ref{key12}) to {its} $2$-th power, we have
    	\begin{equation*}
    	1+x^{2^{k+1}+2}+x^{2^{k}+1}=0.
    	\end{equation*}
    	Let $y=x^{2^{k}+1}$. Then $y \in \mathbb{F}_{2^{k}}$, and we get
    	\begin{equation}
    	1+y+y^2=0.
    	\label{key13}
    	\end{equation}
    	It is clear that $y=1$ is not the solution of (\ref{key13}). Then multiplying $(1+y)$ to both sides, we have $y^3=1$. We know $\mathrm{gcd}(2^{k}-1,3)=1$ since $k$ is odd. Therefore the above equation has no solution in $\mathbb{F}_{2^{k}}$. Hence, $x=0$ is the only solution {of  $f(x)=0$ in $\gf_q$.}
    	
    		Next, we prove that $f(x)=a$ has at most one solution in $\mathbb{F}_{q}$ for any $a \in \mathbb{F}_{q}^{*}$.
    		Considering the following equation:
    		\begin{equation}
    		1+x^{2^{k}+1}+x^{2^{2k-1}+2^{k-1}}=\frac{a}{x}.
    		\label{key14}
    		\end{equation}
    		Let $u=\frac{a}{x}$. Then $u=1+x^{2^{k}+1}+x^{2^{2k-1}+2^{k-1}}\in \mathbb{F}^{*}_{2^k}$. Plugging $x=\frac{a}{u}$ into { the square of (\ref{key14}),}  we have
    		\begin{equation}
    		1+\frac{{{a}^{{{2}^{k+1}}+2}}}{{{u}^{4}}}+\frac{{{a}^{{{2}^{k}}+1}}}{{{u}^{2}}}={{u}^{2}}.
    		\label{key17}
    		\end{equation}
    		Let $y=\frac{a^{2^{k}+1}}{u^{2}}$. Then $y \in \mathbb{F}_{2^{k}}$. Plugging it into (\ref{key17}), we have $y^{3}+y^{2}+y=a^{2^{k}+1}$.
    		
    		{Since $g(y)=y^{3}+y^{2}+y=(y+1)^{3}+1$ is a permutation polynomial over $\mathbb{F}_{2^{k}}$ when $k>0$ is odd,} we know that $f(x)=a$ has at most one solution in $\mathbb{F}_q$ for any $a \in \mathbb{F}_{q}^{*}$. Hence the proof is complete.       		
  \end{proof}
%

      Next we introduce three new classes  of permutation trinomials with the form $f(x)=x+ax^{\alpha}+bx^{\beta}\in \gf_{2^m}[x]$,
     {where $a, b\in \gf_{2^m}^\ast$}.

      \begin{Th}
\label{th_Tripp3}
      	Let $q=2^{2k}$ and $k>0$ be an integer, $f(x)=x+ax^{2^{k+1}-1}+a^{2^{k-1}}x^{2^{2k}-2^{k}+1}$, where $a\in \mathbb{F}_{q}$ and the order of $a$ is $2^{k}+1$. Then $f(x)$  is a permutation trinomial over $\mathbb{F}_{q}$.
      \end{Th}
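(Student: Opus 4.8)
The plan is to exploit the multiplicative structure forced by $a$ having order $Q+1$, where I write $Q=2^{k}$, so $q=Q^{2}$ and $a^{Q}=a^{-1}$. First I would note that both nontrivial exponents of $f$ are congruent to $1$ modulo $Q-1$: since $2^{k+1}-1=2(Q-1)+1$ and $2^{2k}-2^{k}+1=Q(Q-1)+1$, one has $f(x)=x\,h\!\left(x^{Q-1}\right)$ with $h(y)=1+ay^{2}+a^{Q/2}y^{Q}$. Following the standard $x^{r}h(x^{(q-1)/d})$ reduction (carried out directly as in the proofs of Theorems \ref{theorem1} and \ref{theorem2}: the solutions of $f(x)=c$ with $x\neq0$ correspond bijectively to the solutions $z=x^{Q-1}\in\mu_{Q+1}$ of $g(z)=c^{Q-1}$, the point $x=c/h(z)$ being uniquely recovered), it suffices to show that $g(z)=z\,h(z)^{Q-1}$ permutes $\mu_{Q+1}$. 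On $\mu_{Q+1}$ one has $z^{Q}=z^{-1}$, $a^{Q}=a^{-1}$ and $(a^{Q/2})^{Q}=a^{-Q/2}$, so a short computation rewrites $g$ as $g(z)=a^{Q/2}\,G\!\left(z/a^{Q/2}\right)$ with $G(z)=\dfrac{z^{3}+z^{2}+1}{z^{3}+z+1}$; hence it is enough to prove $G$ permutes $\mu_{Q+1}$.

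For the ``unique root'' part I would first check that $G$ is well defined on $\mu_{Q+1}$, i.e.\ that $z^{3}+z+1$ never vanishes there; this is exactly the statement that $x=0$ is the only root of $f$. Since $z^{3}+z+1$ is irreducible over $\mathbb{F}_{2}$, its roots are primitive elements of $\mathbb{F}_{8}$ and thus have order $7$; as $2^{k}\not\equiv 6\pmod 7$ for every $k$, we get $7\nmid Q+1$, so $\gcd(7,Q+1)=1$ and $\mu_{Q+1}$ contains no element of order $7$. Therefore $z^{3}+z+1\neq0$ on $\mu_{Q+1}$, which settles this part for all $k$.

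The core is injectivity of $G$ on $\mu_{Q+1}$. The key observation is that $G$ commutes with the inversion $\iota(z)=z^{-1}$: a direct check gives $G(z^{-1})=G(z)^{-1}$, and also $G(z)^{Q}=G(z)^{-1}$, so $G$ maps $\mu_{Q+1}$ into itself. Consequently $G$ descends to the orbit space of $\iota$, parametrized by $t=z+z^{-1}$ ranging over $\mathcal T=\{\,z+z^{-1}:z\in\mu_{Q+1}\,\}=\{0\}\cup\{t\in\mathbb{F}_{Q}^{*}:\mathrm{Tr}_{k}(1/t)=1\}$. Writing $\overline G(t)=G(z)+G(z)^{-1}$ and reducing the symmetric functions of $z,z^{-1}$, I expect the clean formula $\overline G(t)=\dfrac{t}{t^{3}+t^{2}+1}$, a self-map of $\mathcal T$. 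Because $\mu_{Q+1}$ has odd order, $z=z^{-1}$ and $G(z)=1$ both force $z=1$; hence injectivity of $\overline G$ on $\mathcal T$ will imply injectivity of $G$ on $\mu_{Q+1}$ (from $G(z_{1})=G(z_{2})$ one gets $t_{1}=t_{2}$, so $z_{2}\in\{z_{1},z_{1}^{-1}\}$, and $z_{2}=z_{1}^{-1}\neq z_{1}$ would give $G(z_{1})^{2}=1$, i.e.\ $z_{1}=1$).

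The main obstacle, and the step I would spend most care on, is proving $\overline G$ injective on $\mathcal T$; here I would invoke Lemma \ref{san1}. Fix $\tau=\overline G(t)$ with $t\in\mathcal T$; since $\overline G$ maps $\mathcal T$ into $\mathcal T$ we have $\tau\in\mathcal T$, hence $\mathrm{Tr}_{k}(\tau^{-1})=1$. The equation $\overline G(t)=\tau$ is the cubic $t^{3}+t^{2}+\tau^{-1}t+1=0$ over $\mathbb{F}_{Q}$; depressing it by $t=w+1$ yields $w^{3}+\lambda w+\lambda=0$ with $\lambda=1+\tau^{-1}$, and $w=\lambda^{1/2}v$ turns it into $v^{3}+v=\mu$ with $\mu=\lambda^{-1/2}$. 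These are $\mathbb{F}_{Q}$-affine bijections, so the number of roots $t\in\mathbb{F}_{Q}$ equals the number of $v\in\mathbb{F}_{Q}$. Now $\mathrm{Tr}_{k}(\mu^{-1})=\mathrm{Tr}_{k}(\lambda^{1/2})=\mathrm{Tr}_{k}(\lambda)=\mathrm{Tr}_{k}(1)+\mathrm{Tr}_{k}(\tau^{-1})=\mathrm{Tr}_{k}(1)+1\neq\mathrm{Tr}_{k}(1)$, so by Lemma \ref{san1} the equation $v^{3}+v=\mu$ has a \emph{unique} solution in $\mathbb{F}_{Q}$. Thus $\tau$ has exactly one preimage $t\in\mathbb{F}_{Q}$, a fortiori at most one in $\mathcal T$ (the cases $\tau\in\{0,1\}$ being checked directly), which is the desired injectivity. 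The decisive point is that the depressed cubic lands in the form $v^{3}+v=\mu$ and that the trace identity $\mathrm{Tr}_{k}(\mu^{-1})=\mathrm{Tr}_{k}(1)+\mathrm{Tr}_{k}(\tau^{-1})$ converts the membership $\tau\in\mathcal T$ precisely into the hypothesis of Lemma \ref{san1}, so the permutation property follows for every $k>0$.
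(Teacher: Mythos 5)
Your proof is correct, and it takes a genuinely different route from the paper's. I verified the key computations: $f(x)=x\,h(x^{Q-1})$ with $h(y)=1+ay^{2}+a^{Q/2}y^{Q}$; the substitution $z=a^{Q/2}w$ together with $a^{Q+1}=1$ and $z^{Q}=z^{-1}$ indeed gives $g(z)=z\,h(z)^{Q-1}=a^{Q/2}\,\bigl(w^{3}+w^{2}+1\bigr)/\bigl(w^{3}+w+1\bigr)$; the descent formula $\overline G(t)=t/(t^{3}+t^{2}+1)$ on $\mathcal T=\{0\}\cup\{t\in\mathbb{F}_{Q}^{*}:\mathrm{Tr}_{k}(1/t)=1\}$ checks out (the denominator of $G(z)+G(z)^{-1}$ is $z^{6}+\cdots+z+1=(z^{7}+1)/(z+1)$, whence $t^{3}+t^{2}+1$); and the chain $\mathrm{Tr}_{k}(\mu^{-1})=\mathrm{Tr}_{k}(\lambda)=\mathrm{Tr}_{k}(1)+\mathrm{Tr}_{k}(\tau^{-1})=\mathrm{Tr}_{k}(1)+1$ correctly triggers Lemma \ref{san1}. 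The paper instead works directly with $f(x)=c$ in $\mathbb{F}_{q}$: it eliminates down to a cubic $u^{3}+\alpha u+\gamma\alpha=0$ in the auxiliary variable $u=x^{2}+c^{2}$ (which satisfies $u=au^{2^{k}}$), normalizes to $\varepsilon^{3}+\varepsilon+\gamma/\alpha^{1/2}=0$ over $\mathbb{F}_{2^{k}}$ after separate case analyses for $\gamma=0$ and $\alpha=0$, and establishes $\mathrm{Tr}_{k}(\alpha/\gamma^{2})=\mathrm{Tr}_{k}(1)+1$ by an ad hoc device (the element $v=1/(1+ac^{2^{k+1}-2})$ with $v^{2^{k}}=v+1$). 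So both proofs bottom out in the same Berlekamp--Rumsey--Solomon criterion (Lemma \ref{san1}) with the same trace value $\mathrm{Tr}_{k}(1)+1$, but your reduction is more structural: the Zieve-style passage to the degree-$3$ rational map $G$ on $\mu_{Q+1}$ absorbs $a$ entirely into a rotation of the unit circle, which makes transparent that only $a^{Q+1}=1$, $a\neq 0$ is used (your argument thus proves a slightly stronger statement than the theorem, whose hypothesis fixes the order of $a$ exactly), and the hypothesis of Lemma \ref{san1} arises naturally from the membership $\tau\in\mathcal T$ rather than from a trace computation that must be guessed; what the paper's direct elimination buys is self-containedness --- no quotient set $\mathcal T$ and no appeal to the $x^{r}h(x^{s})$ machinery. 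Two small fill-ins for your write-up: to know that $G$ maps $\mu_{Q+1}$ into itself (and that $G(z^{-1})=G(z)^{-1}$ is meaningful) you also need the numerator $w^{3}+w^{2}+1$ to be nonvanishing on $\mu_{Q+1}$, and the identical order-$7$ argument applies since it is the reciprocal polynomial of $w^{3}+w+1$; and in the flagged case $\tau=1$ the cubic degenerates to $(t+1)^{3}=0$, so uniqueness of the preimage indeed holds there as well.
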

      \begin{proof}
      	First of all, we show that $x=0$ is the only solution to $f(x)=0$. If $f(x)=0$, then either $x=0$ or $1+ax^{2^{k+1}-2}+a^{2^{k-1}}x^{2^{2k}-2^{k}}=0$. Therefore, we only need to prove the equation
      	\begin{equation}
      	1+ax^{2^{k+1}-2}+a^{2^{k-1}}x^{2^{2k}-2^{k}}=0.
      	\label{key23}
      	\end{equation}
      has no solution in $\mathbb{F}_{q}$.
      	Adding $(\ref{key23})$ to $(\ref{key23})^{2^{k+1}}$, we have
      	\begin{equation*}
      	x^{3\cdot 2^{k}-3}=a^{3\cdot 2^{k-1}}.
      	\end{equation*}
      	Therefore, {$x^{2^{k}-1}=a^{2^{k-1}}$ or $x^{2^{k}-1}=a^{2^{k-1}}\omega$,} where $\omega^{2}+\omega+1=0$.
      	
      	If $x^{2^{k}-1}=a^{2^{k-1}}$, {then by plugging it into (\ref{key23}), one get $1+a^{2^k+1}+a^{2^{2k-1}+2^{k-1}}=0$.} Recalling the order of $a$ is $2^k+1$, we have $1=0$. It is a contradiction.
      	
      	If $x^{2^{k}-1}=a^{2^{k-1}}\omega$, {then by plugging it into (\ref{key23}), one have $\omega^{2^{k}}+\omega^{2}+1=0$. Hence  $\omega^{2^{k}}=\omega$, which means that $k$ is even. However, in this case, $(a^{2^{k-1}}\omega)^{2^k+1}=\omega^{2^k+1}=\omega^2\neq 1$. It then follows that } the equation $x^{2^{k}-1}=a^{2^{k-1}}\omega$ has no solution in { $\mathbb{F}_q$. Contradicts!}
      	
      	Hence, $x=0$ is the only solution to $f(x)=0$ {in $\gf_q$}.
      	
   	    Next we prove that $f(x)=c$ has at most one solution in $\mathbb{F}_q$ for any $c \in \mathbb{F}_{q}^{*}$. Considering the equation
   	    \begin{equation*}
   	    	x+ax^{2^{k+1}-1}+a^{2^{k-1}}x^{2-2^{k}}=c.
   	    \end{equation*}  	
      	Raising the above equation to the $2$-th power, we have
      	\begin{equation}
      	x^{2}+a^{2}x^{2^{k+2}-2}+a^{2^{k}}x^{4-2^{k+1}}=c^{2}.
      	\label{key25}
      	\end{equation}
      	Computing $(\ref{key25})+(\ref{key25})^{2^k}*a$, and simplifying it by using $a^{2^{k}+1}=1$, we have
      	\begin{equation*}
      		x^{2}+ax^{2^{k+1}}=c^{2}+ac^{2^{k+1}}.
      	\end{equation*}
      	Let $u=x^{2}+c^{2}$. Then $u=au^{2^{k}}$. Plugging $x^2=u+c^2$ into (\ref{key25}), we get
      {
            	\begin{equation}
      	u+a^2(u+c^2)^{2^{k+1}-1}+a^{2^k}(u+c^2)^{2-2^{k}}=0.
        \end{equation}
      Multiplying $(u+c^2)^{2^{k}+1}$ across both sides of the above equation and then substituting $u^{2^{k}}=u/a$ and $a^{2^k+1}=1$ into it,
      one can get the following equation after simplification. }
      	\begin{equation}
      	u^3+\alpha u+\beta=0,
      	\label{key27}
        \end{equation}
        where $\alpha=ac^{2^{k+1}+2}+c^{4}+a^{2}c^{2^{k+2}}$, $\beta=a^{3}c^{3\cdot 2^{k+1}}+c^{6}$.

       Now let $\gamma=ac^{2^{k+1}}+c^{2}$. Then $\beta=\gamma\alpha$. We distinguish two cases.

        {\bf Case  $\gamma=0$. }

        Then $\alpha=c^{4}$ and $\beta=0$. Therefore, the solutions of (\ref{key27}) are $u=0$ or $u=c^{2}$. So $x=0$ and $x=c$ may be the solutions of the equation $f(x)=c$. However, $f(0) \neq c$. Hence, in this case, $f(x)=c$ has  one unique nonzero solution $x=c$ for each nonzero $c \in \mathbb{F}_{q}$.

      {\bf Case  $\gamma\neq 0$. }

If $\alpha=0$, then $u=0$ is the only solution of (\ref{key27}).

  If $\alpha \neq 0$,     then let $u=\alpha^{1/2} \varepsilon$. Further, $ \alpha^{2^k}=\frac{\alpha}{a^2}$, $\varepsilon^{2^k}=\frac{u^{2^k}}{(\alpha^{1/2})^{2^k}}=\frac{u}{a}\frac{a}{\alpha^{1/2}}=\varepsilon$. Therefore $\varepsilon \in \mathbb{F}_{2^{k}}$.
      Dividing (\ref{key27}) by $\alpha^{\frac{3}{2}}$ results in
        \begin{equation*}
        	\varepsilon^{3}+\varepsilon+\frac{\gamma}{\alpha^{1/2}}=0.
        \end{equation*}
       It is routine to verify that $\frac{\gamma}{\alpha^{1/2}} \in \mathbb{F}_{2^{k}}$. Hence $h(\varepsilon)=	 \varepsilon^{3}+\varepsilon+\frac{\gamma}{\alpha^{1/2}} \in \mathbb{F}_{2^{k}}[\varepsilon]$.
      According to Lemma \ref{san1}, $h(\varepsilon)=0$ has only one solution in $\mathbb{F}_{2^{k}}$ if and only
      if ${{\operatorname{Tr}}_{k}}\left(\frac{\alpha }{{{\gamma }^{2}}}\right)={\operatorname{Tr}_k}(1)+1$.

     Next,  we show that ${{\operatorname{Tr}}_{k}}\left(\frac{\alpha }{{{\gamma }^{2}}}\right)={\operatorname{Tr}_k}(1)+1$. We know
        \begin{eqnarray*}
        {{\operatorname{Tr}}_{k}}\left(\frac{\alpha }{{{\gamma }^{2}}}\right)
        &=&      {{\operatorname{Tr}}_{k}}\left(\frac{a{{c}^{{{2}^{k+1}}+2}}+{{c}^{4}}+{{a}^{2}}{{c}^{{{2}^{k+2}}}}}{{{c}^{4}}+{{a}^{2}}{{c}^{{{2}^{k+2}}}}}\right)\\
        &=& {{\operatorname{Tr}}_{k}}\left(\frac{a{{c}^{{{2}^{k+1}}+2}}}{{{c}^{4}}+{{a}^{2}}{{c}^{{{2}^{k+2}}}}}+1\right)\\
        &=& {\operatorname{Tr}_k}(1)+{{\operatorname{Tr}}_{k}}\left(\frac{a{{c}^{{{2}^{k+1}}+2}}}{{{c}^{4}}+{{a}^{2}}{{c}^{{{2}^{k+2}}}}}\right)
        \end{eqnarray*}
        and
        \begin{equation*}
        {{\operatorname{Tr}}_{k}}\left(\frac{a{{c}^{{{2}^{k+1}}+2}}}{{{c}^{4}}+{{a}^{2}}{{c}^{{{2}^{k+2}}}}}\right)=
        {{\operatorname{Tr}}_{k}}\left(\frac{a{{c}^{{{2}^{k+1}}-2}}}{1+{{a}^{2}}{{c}^{{{2}^{k+2}}-4}}}\right)=
        {{\operatorname{Tr}}_{k}}\left(\frac{1}{1+a{{c}^{{{2}^{k+1}}-{2}}}}+\frac{1}{{{(1+a{{c}^{{{2}^{k+1}}-{2}}})}^{2}}}\right).
        \end{equation*}
        Let $v=\frac{1}{1+ac^{2^{{k+1}}-2}}$. {T}hen $v^{2^{k}}=\frac{1}{1+a^{2^{k}}c^{2-2^{2^{k+1}}}}=v+1$. Therefore we have
        \begin{equation*}
        {{\operatorname{Tr}}_{k}}\left(\frac{a{{c}^{{{2}^{k+1}}+2}}}{{{c}^{4}}+{{a}^{2}}{{c}^{{{2}^{k+2}}}}}\right)={{\operatorname{Tr}}_{k}}\left(v+{{v}^{2}}\right)=v+{{v}^{{{2}^{k}}}}=1.
        \end{equation*}
        Hence, ${{\operatorname{Tr}}_{k}}\left(\frac{\alpha }{{{\gamma }^{2}}}\right)={\operatorname{Tr}_k}(1)+1$.

       Then, ${\operatorname{Tr}_k}(\frac{\alpha^{1/2}}{\gamma})=\left({{\operatorname{Tr}}_{k}}\left(\frac{\alpha }{{{\gamma }^{2}}}\right)\right)^{\frac{1}{2}} \neq {\operatorname{Tr}_k}(1)$.  Hence,  $h(\varepsilon)=0$ has only one solution in $\mathbb{F}_{2^{k}}$ according to Lemma \ref{san1}.  Therefore $f(x)=c$ has at most one solution in $\mathbb{F}_q$ for any $c \in \mathbb{F}_{q}^{*}$.

        The proof is complete.
      \end{proof}

 \begin{Th}
\label{th_Tripp4}
 	Let $q=2^{2k+1}$ and $k>0$ be an integer. Then $f(x)=x+ax^{2^{k+1}-1}+a^{2^{2k+1}-2^{k+1}-2}x^{2^{k+1}+1}$, where $a \in \mathbb{F}_{q}$, is a permutation trinomial over $\mathbb{F}_{q}$.
 \end{Th}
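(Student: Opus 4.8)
The plan is to show that $f$ is injective on $\mathbb{F}_{q}$; since $\mathbb{F}_{q}$ is finite, injectivity is equivalent to being a permutation, so I never need to argue surjectivity separately. If $a=0$ then $f(x)=x$ and there is nothing to prove, so assume $a\in\mathbb{F}_{q}^{*}$. Then $a^{q-1}=1$, and since $2^{2k+1}-2^{k+1}-2=(q-1)-(2^{k+1}+1)$ we have $a^{2^{2k+1}-2^{k+1}-2}=a^{-(2^{k+1}+1)}$. Writing $s=2^{k+1}$ and making the bijective substitution $x=az$, the equation $f(x)=c$ becomes
\[
g(z):=z^{s+1}+a^{s}z^{s-1}+az=c ,
\]
with much cleaner coefficients. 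I will prove that for every $c\in\mathbb{F}_{q}$ this has at most one nonzero solution $z$: the case $c=0$ then shows that $0$ is the only root of $f$, and the case $c\neq0$ gives injectivity away from $0$.

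The mechanism is the same as in the earlier trinomial proofs: apply a Frobenius map to $g(z)=c$ and combine with the original equation to cancel the top-degree terms. The numerical facts I rely on over $\mathbb{F}_{q}$ are $u^{q}=u$, equivalently $u^{s^{2}}=u^{2}$ (since $s^{2}=2^{2k+2}=2q$) together with $u^{s\cdot 2^{k}}=u$, and $a^{q}=a$. First I raise $g(z)=c$ to the $s$-th power; the exponents collapse through $z^{s^{2}}=z^{2}$, and after multiplying by $z^{s-2}$ (legitimate because $z\neq0$) the leading monomial $z^{2s}=z^{s-1}z^{s+1}$ can be rewritten with $z^{s+1}$ replaced by $c+a^{s}z^{s-1}+az$ drawn from $g(z)=c$ itself. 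The highest-degree terms cancel in pairs, leaving a relation of degree $s$. Repeating the device once more — multiplying by $z$ and again substituting $z^{s+1}$ — produces
\[
c\,z^{s}+a^{s+1}z^{s-1}+ac=c^{s}z^{s-1},
\]
which every nonzero solution must satisfy. For $c=0$ this already reads $a^{s+1}z^{s-1}=0$, forcing $z=0$, so $x=0$ is the unique root of $f$. For $c\neq0$ a final elimination between the two degree-$s$ relations (multiplying by $c$ and removing the $z^{s}$ term) collapses everything to the single linear condition
\[
\bigl(c^{2}+ac^{s}+a^{s+2}\bigr)\,z=c^{s+1}.
\]
Writing $D=c^{2}+ac^{s}+a^{s+2}$: if $D\neq0$ then $z=c^{s+1}/D$ is the only candidate, while if $D=0$ the condition reads $0=c^{s+1}\neq0$ and there is no solution. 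In either case $g(z)=c$ has at most one solution, so $f$ is injective and hence a permutation of $\mathbb{F}_{q}$. Pleasantly, this route makes Lemma~\ref{san1} unnecessary, since the whole system telescopes down to a linear condition rather than a cubic, which is the essential difference from the even-degree case of Theorem~\ref{th_Tripp3}.

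The main obstacle is not conceptual but the disciplined bookkeeping of the exponents: at each stage one must reduce powers of $z$ modulo $q-1$ via $z^{s^{2}}=z^{2}$ and $z^{s\cdot 2^{k}}=z$, and choose exactly the right power of $z$ to multiply by so that the top-degree monomials cancel. One must also be scrupulous to multiply and divide only by powers of $z$, never by a quantity (such as $D$) that might vanish, so that the terminal linear relation is a genuine necessary condition valid even in the degenerate case $D=0$; this is precisely what lets the single identity dispose of the nonexistence and the uniqueness sub-cases uniformly.
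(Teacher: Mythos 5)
Your proof is correct and follows essentially the same route as the paper: after your normalization $x=az$, the terminal linear relation $\bigl(c^{2}+ac^{s}+a^{s+2}\bigr)z=c^{s+1}$ is exactly the paper's equation $(a^{2^{k+1}+2}+ac^{2^{k+1}}+c^{2})x=ac^{2^{k+1}+1}$, reached by the same Frobenius-twist-and-substitute elimination. The only cosmetic difference is that you handle the case $c=0$ uniformly inside the same elimination, whereas the paper treats $f(x)=0$ separately via $(x+a^{2^k+1})^{2^{k+1}+1}=a^{2^{k+1}+2^{k}+2}$ and $\gcd(2^{k+1}+1,2^{2k+1}-1)=1$.
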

 \begin{proof}
 	The case $a=0$ is trivial. Therefore, we assume $a\neq 0$ in the following.
 	
 	First of all, we show that $x=0$ is the only solution of $f(x)=0$ in $\mathbb{F}_q$. If $f(x)=0$, then either $x=0$ or $	 1+ax^{2^{k+1}-2}+a^{-2^{k+1}-1}x^{2^{k+1}}=0$. Therefore, we only need to prove the equation
 	\begin{equation}
 	1+ax^{2^{k+1}-2}+a^{-2^{k+1}-1}x^{2^{k+1}}=0
 	\label{key1}    	
 	\end{equation}
 	has no solution in $\mathbb{F}_{q}$.
 	Raising (\ref{key1}) to the $2^{k}$-th power, we have
 	\begin{equation}
 	1+a^{2^k}x^{1-2^{k+1}}+a^{-2^k-1}x=0.
 	\label{key2}
 	\end{equation}
 	Multiplying $a^{2^k+1}x^{2^{k+1}}$ {across}  both sides of (\ref{key2}), we have
 	\begin{equation}
 	(x+a^{2^k+1})^{2^{k+1}+1}=a^{2^{k+1}+2^{k}+2}.
 	\label{key3}
 	\end{equation}
 	{It is clear that $\mathrm{gcd}(2^{k+1}+1,2^{2k+1}-1)=1$. Hence $x=0$ is the only solution of (\ref{key3}) in $\gf_q$. }
 	However, it is not the solution of (\ref{key1}). Hence, (\ref{key1}) has no solution in $\mathbb{F}_{q}$.
 	
 	Then we prove that $f(x)=c$ has at most one solution in $\mathbb{F}_{q}$ for any $c \in \mathbb{F}_{q}^{*}$.
 	Considering the following equation:
 	\begin{equation}
 	x+ax^{2^{k+1}-1}+a^{-2^{k+1}-1}x^{2^{k+1}+1}+c=0.
 	\label{key4}
 	\end{equation}
 	Computing $a^{2^k}\ast(\ref{key4})+x^{2^k}\ast(\ref{key4})^{2^k}$, we have
 	\begin{equation}
 	a^{2^k+1}x^{2^{k+1}-1}+x^{2^{k+1}}+c^{2^k}x^{2^k}+a^{2^k}c=0.
 	\label{key9}
 	\end{equation}
 	Then {by computing $(c+x)x^{2^{k+1}-1}\ast(\ref{key9})^{2^{k+1}}+a^{2^{k+1}+1}x\ast(\ref{key4})$ and after simplification,} we get
 	\begin{equation}
 	(a^{2^{k+1}+2}+ac^{2^{k+1}}+c^2)x=ac^{2^{k+1}+1}.
 	\label{key11}
 	\end{equation}
 	{Since $a, c\in \gf_q^\ast$, we know that (\ref{key11}) has at most one solution in $\gf_q$.}
 	Then $f(x)=c$ has at most one solution in $\mathbb{F}_q$ for any $c \in \mathbb{F}_{q}^{*}$. {We are done.}   	
 \end{proof}
 {Let $a=1$ in the above theorem. Then $f(x)=x+x^{2^{k+1}-1}+x^{2^{k+1}+1}$ and
 $f(x^{2^{k+1}+2})=x^{2^{k+1}}+x^{2^{k+1}+2}+x^{3\cdot2^{k+1}+\bf{4}}$,
 which is the permutation trinomial in Theorem \ref{th_TriPPList}(6).
 Hence  Theorem  \ref{th_Tripp4} is a generalization of a known permutation trinomial.}

\mqu{A comment about Theorem \ref{th_Tripp4} is as follows. Just after our submission, we noticed  that this class of permutation trinomial was 
also introduced in a very recent paper. In \cite{MZFG}, Ma et al. presented several classes of new permutation polynomials,
including two classes of permutation trinomials, that is, \cite[Theorem 5.1 and Theorem 5.2]{MZFG}.
It can be readily verified that Theorem \ref{th_Tripp4} is equivalent to \cite[Theorem 5.1]{MZFG}, though our proof is different and a bit short.  }

 	The last class of permutation trinomial is a generalization of  the second permutation trinomial in Theorem \ref{th_DQ1} (\cite[Theorem 2.2]{DQ}).
      \begin{Th}
\label{th_Tripp5}
         	Let $q=2^{2k+1}$,$f(x)=x+ax^3+a^{2^{2k+1}-2^{k+1}}x^{2^{2k+1}-2^{k+2}+2}$, $a \in \mathbb{F}_{q}$. Then $f(x)$ is a permutation trinomial over $\mathbb{F}_{q}$.
         	\label{key29}
      \end{Th}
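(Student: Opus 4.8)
The plan is to sidestep a direct application of Hermite's criterion and instead exhibit $f$ as a linear rescaling of the known permutation trinomial
$$f_1(x)=x+x^3+x^{2^{2k+1}-2^{k+2}+2}$$
over $\mathbb{F}_q$, which is a permutation by the second part of Theorem~\ref{th_DQ1} (with $m=2k+1$, so that $2^m-2^{(m+3)/2}+2=2^{2k+1}-2^{k+2}+2$). The case $a=0$ is immediate, since then $f(x)=x$; so I would assume $a\in\mathbb{F}_q^\ast$ and write $e=2^{2k+1}-2^{k+2}+2$ for the third exponent.

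First I would record that, because $a^q=a$, the stated third coefficient simplifies to $a^{2^{2k+1}-2^{k+1}}=a^{1-2^{k+1}}$, so that $f(x)=x+ax^3+a^{1-2^{k+1}}x^{e}$. Let $\gamma=a^{1/2}\in\mathbb{F}_q^\ast$ be the unique square root of $a$ in characteristic $2$. I claim
$$f(x)=\gamma^{-1}f_1(\gamma x).$$
Expanding the right-hand side term by term gives $\gamma^{-1}\gamma\,x=x$, $\gamma^{-1}\gamma^{3}x^{3}=\gamma^{2}x^{3}=ax^{3}$, and $\gamma^{-1}\gamma^{e}x^{e}=\gamma^{e-1}x^{e}$. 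Thus the first two coefficients match automatically, and the whole claim reduces to the single scalar identity $\gamma^{e-1}=a^{1-2^{k+1}}$.

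This last identity is the only delicate point, and it is where I would be most careful, since $e-1=2^{2k+1}-2^{k+2}+1$ is odd and $\gamma^{e-1}$ cannot be rewritten as an integer power of $a$ directly. The clean way around this is to use that squaring is injective on $\mathbb{F}_q$: it suffices to verify the squared identity $\gamma^{2(e-1)}=a^{2-2^{k+2}}$. Since $\gamma^2=a$, the left-hand side equals $a^{e-1}$, and $(e-1)-(2-2^{k+2})=2^{2k+1}-1=q-1$, so $a^{e-1}=a^{2-2^{k+2}}$ follows at once from $a^{q-1}=1$. Hence $\gamma^{e-1}=a^{1-2^{k+1}}$, the claim $f(x)=\gamma^{-1}f_1(\gamma x)$ holds, and $f$ is the composition of the permutation $f_1$ with the two bijections $x\mapsto\gamma x$ and $y\mapsto\gamma^{-1}y$ of $\mathbb{F}_q$. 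Therefore $f$ permutes $\mathbb{F}_q$.

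I would expect the coefficient bookkeeping of the previous paragraph (the passage through $a^q=a$ and $a^{q-1}=1$ together with the squaring trick) to be the only real obstacle; everything else is formal. As an alternative, fully in the style of Theorems~\ref{th_Tripp3} and~\ref{th_Tripp4}, one could instead argue directly that $x=0$ is the unique root of $f$ and that $f(x)=c$ has at most one solution for each $c\in\mathbb{F}_q^\ast$ by eliminating high-degree terms to reach a linear equation in $x$; but the rescaling argument above is shorter and makes the ``generalization'' claim transparent.
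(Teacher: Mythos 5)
Your proof is correct, and it takes a genuinely different---and much shorter---route than the paper's. Your key identity checks out: with $\gamma^2=a$ and $e=2^{2k+1}-2^{k+2}+2$, one has $\gamma^{e-1}=\gamma^{q}\cdot\gamma^{1-2^{k+2}}=\gamma^{2-2^{k+2}}=\left(\gamma^2\right)^{1-2^{k+1}}=a^{1-2^{k+1}}$, which by $a^q=a$ is exactly the stated coefficient $a^{2^{2k+1}-2^{k+1}}$; your squaring trick is an equivalent way of seeing this, the other coefficients match trivially, so $f(x)=\gamma^{-1}f_1(\gamma x)$ and $f$ inherits the permutation property from $f_1$, the second trinomial of Theorem~\ref{th_DQ1}. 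The paper argues quite differently: after showing $x=0$ is the only root of $f$, it supposes $f(x)=f((1+c)x)$ for some $c\in\mathbb{F}_q^\ast$, eliminates $y=x^{2^{k+1}}$ through a chain of computations down to a quadratic $\varepsilon^2+\varepsilon+D=0$, solves for $\varepsilon$, and reaches the contradiction $d=c^2$ and hence $d\in\{0,1\}$; crucially, it invokes the claim $\operatorname{Tr}_{2k+1}(D_1)=1$, whose proof is omitted and cited from the proof of \cite[Theorem 2.2]{DQ}. So both arguments ultimately rest on \cite{DQ}---yours on the statement of the known theorem, the paper's on a technical trace claim inside its proof---but yours buys brevity and a structural insight the paper does not make explicit: since $a=\gamma^2$ ranges over all of $\mathbb{F}_q^\ast$ as $\gamma$ does, the family of Theorem~\ref{th_Tripp5} is precisely the orbit of $f_1$ under conjugation by the scaling maps $x\mapsto\gamma x$, i.e., it is \emph{linearly} equivalent, in the sense the paper itself names in Section~\ref{sec1}, to the known trinomial; this makes the word ``generalization'' transparent, but also shows the new family adds nothing beyond linear equivalence (the paper only checks multiplicative inequivalence). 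One small caveat, shared with the paper's statement: you need $k\ge 1$ so that $m=2k+1>1$ in Theorem~\ref{th_DQ1}; for $k=0$ the exponent $e$ degenerates to $0$.
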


      \begin{proof}
      The case $a=0$ is trivial. If $a=1$, then it reduces to  \cite[Theorem 2.2]{DQ}. Therefore, we assume $a \neq 0, 1$ in the following.

      Let $y=x^{2^{k+1}}$ and $b=a^{2^{k+1}}$. Then $y^{2^{k+1}}=x^{2}$ and $b^{2^{k+1}}=a^{2}$. For all $x \in \mathbb{F}_{q}^{*}$, we have
      \begin{equation}
      f(x)=x+ax^3+\frac{a}{b}\frac{x^3}{y^2}=\frac{x(abx^2y^2+by^2+ax^2)}{by^2}.
      \label{key30}
      \end{equation}

      First, we show that $x=0$ is the only solution of $f(x)=0$ in $\mathbb{F}_q$. From (\ref{key30}), we only need to prove the equation
      	\begin{equation}
      	abx^2y^2+by^2+ax^2=0
      	\label{key31}
      	\end{equation}
      has no solution in {$\mathbb{F}_{q}^\ast$.}
      Raising (\ref{key31}) into {its} $2^{k+1}$-th power, we have
        \begin{equation}
        ba^{2}x^4y^2+a^{2}x^4+by^2=0.
        \label{key32}
        \end{equation}
        Computing $(\ref{key31})^{2}+(\ref{key32})$, we have
        \begin{equation*}
        (1+ax^{2})^{2^{k+1}+2}=0.
        \end{equation*}
        Then $x^{2}=\frac{1}{a}$ and $y^{2}=\frac{1}{b}$. Plugging them into (\ref{key31}), we get $1=0$. It is a contradiction.

        If $f(x)$ is not a permutation polynomial of $\mathbb{F}_{q}^{*}$, then  there exists $x \in \mathbb{F}_{q}^{*}$ and $c \in \mathbb{F}_{q}^{*}$ such that $f(x)=f((1+c)x)$. Let $d=c^{2^{k+1}}$. It is clear that $c,d\neq 0,1$ and $c\neq d$.  Then
        $$\frac{x(b{{y}^{2}}+ab{{x}^{2}}{{y}^{2}}+a{{x}^{2}})}{b{{y}^{2}}}=
        \frac{(1+c)x(b{{(1+d)}^{2}}{{y}^{2}}+ab{{(1+c)}^{2}}{{(1+d)}^{2}}{{x}
        ^{2}}{{y}^{2}}+a{{(1+c)}^{2}}{{x}^{2}})}{b{{(1+d)}^{2}}{{y}^{2}}}, $$

       After simplifying, we get
       \begin{equation}
        A_1x^2y^2+A_2y^2+A_3x^2=0,
        \label{key33}
       \end{equation}
       where
       \begin{eqnarray*}
  A_1 &=& abc(1+d)^2(c^2+c+1), \\
  A_2 &=& bc(1+d)^2, \\
  A_3 &=& a[(1+d)^2+(1+c)^3].
\end{eqnarray*}
      Raising (\ref{key33}) to $2^{k+1}$-th power, we have
      \begin{equation}
        A_1^{2^{k+1}}x^4y^2+A_3^{2^{k+1}}y^2 + A_2^{2^{k+1}}x^4=0.
        \label{key34}
       \end{equation}
   Computing (\ref{key33})*$(A_{1}^{2^{k+1}}x^{4}{+A_3^{2^{k+1}}})$+(\ref{key34})*$(A_{1}x^{2}+A_{2})$, we get
   \begin{equation}
   \label{key35}
  B_1x^4+B_2x^2 + B_3=0.
\end{equation}
where
\begin{eqnarray*}
  B_1 &=& A_3A_1^{2^{k+1}}+A_1A_2^{2^{k+1}}=a^3bd^2(1+c)^4[(1+c)^3+(1+d)^3], \\
  B_2 &=& A_2^{2^{k+1}+1}=a^2bcd(1+c)^4(1+d)^2, \\
 B_3 &=& A_3^{2^{k+1}+1}.
\end{eqnarray*}
 It is {routine to verify } that $A_1,A_2,B_1,B_2 \neq 0$. Let ${{x}^{2}}=\frac{{{B}_{2}}}{{{B}_{1}}}\varepsilon$. Plugging it into (\ref{key35}), we have
 \begin{equation}
 {{\varepsilon }^{2}}+\varepsilon +D=0,
 \label{key36}
 \end{equation}
 where $D=\frac{{{B}_{1}}{{B}_{3}}}{B_{2}^{2}}$.
 As for $D$, we know
 $$D=\frac{{{B}_{1}}{{B}_{3}}}{B_{2}^{2}}=\frac{A_{3}^{{{2}^{k+1}}+1}({{A}_{3}}A_{1}^{{{2}^{k+1}}}+{{A}_{1}}A_{2}^{{{2}^{k+1}}})}
 {A_{2}^{{{2}^{k+2}}+2}}=\frac{{{A}_{1}}A_{3}^{{{2}^{k+1}}+1}}{A_{2}^{{{2}^{k+1}}+2}}+\frac{A_{1}^{{{2}^{k+1}}}A_{3}^{{{2}^{k+1}}+2}}
 {A_{2}^{{{2}^{k+2}}+2}}={{D}_{1}}+D_{1}^{{{2}^{k+1}}},$$
where $${{D}_{1}}=\frac{{{A}_{1}}A_{3}^{{{2}^{k+1}}+1}}{A_{2}^{{{2}^{k+1}}+2}}=\frac{{{A}_{1}}{{B}_{3}}}{{{A}_{2}}{{B}_{2}}}=\frac{(c^2+c+1)[(1+c)^4+(1+d)^{3}][(1+d)^2+(1+c)^3]}{cd(1+d)^2(1+c)^4}.$$
{Now we claim that ${{\operatorname{Tr}}_{2k+1}}({{D}_{1}})=1$. It is the same claim appeared in the proof of  \cite[Theorem 2.2]{DQ}.
The proof of this claim is a bit long and intricate. We omit it here. The interested reader  please
refer  \cite{DQ} for details. }

Raising (\ref{key36}) to $2^{i}$-th power, where $i=0,1,\cdots,k $, and summing them, we get
$${{\varepsilon }^{{{2}^{k+1}}}}=\varepsilon +\sum\limits_{i=0}^{k}{{{({{D}_{1}}+D_{1}^{{{2}^{k+1}}})}^{{{2}^{i}}}}}={\varepsilon} +\sum\limits_{i=0}^{2k+1}{D_{1}^{{{2}^{i}}}}={\varepsilon}  +{{D}_{1}}+{{\operatorname{Tr}}_{2k+1}}({{D}_{1}})={\varepsilon}  +{{D}_{1}}+1.$$
and
$${{\varepsilon }^{{{2}^{k+1}}+1}}=\varepsilon (\varepsilon +{{D}_{1}}+1)={{D}_{1}}\varepsilon +D.$$
{Substituting ${{x}^{2}}=\frac{{{B}_{2}}}{{{B}_{1}}}\varepsilon$, ${{y}^{2}}=(\frac{{{B}_{2}}}{{{B}_{1}}})^{2^{k+1}}\varepsilon^{2^{k+1}}$ and} the above two equations into (\ref{key33}), we obtain
$$\frac{{{A}_{1}}B_{2}^{{{2}^{k+1}}+1}}{B_{1}^{{{2}^{k+1}}+1}}({{D}_{1}}\varepsilon +D)+\frac{{{A}_{2}}B_{2}^{{{2}^{k+1}}}}{B_{1}^{{{2}^{k+1}}}}(\varepsilon +{{D}_{1}}+1)+\frac{{{A}_{3}}{{B}_{2}}}{{{B}_{1}}}\varepsilon =0.$$
Multiplying $B_1^{2^{k+1}+1}$ across the two sides of the above equation and using $B_2^{2^{k+1}}=A_2^{2^{k+1}+2}=A_2B_2$, we have
$${{C}_{1}}\varepsilon +{{C}_{2}}=0,$$
where
\begin{eqnarray*}
  C_1 &=& A_1A_2B_2D_1+A_2^2B_1+A_3B_1^{2^{k+1}}=A_1A_2^{2^{k+1}+2}, \\
  C_2 &=& A_1A_2B_2D+A_2^2B_1(D_1+1)=A_2^2B_1.
\end{eqnarray*}
Therefore, $\varepsilon =\frac{{{C}_{2}}}{{{C}_{1}}}=\frac{{{B}_{1}}}{{{A}_{1}}A_{2}^{{{2}^{k+1}}}}=
\frac{{{A}_{2}}{{B}_{1}}}{{{A}_{1}}{{B}_{2}}}$. Plugging it into (\ref{key36}), we have
$${{B}_{1}}A_{2}^{2}+{{A}_{1}}{{A}_{2}}{{B}_{2}}=A_{1}^{2}{{B}_{3}}.$$
That is $A_{1}^{{{2}^{k+1}}}A_{2}^{2}=A_{1}^{2}A_{3}^{{{2}^{k+1}}}$. {Substituting the definitions of $A_1,A_2,A_3$ into the above equation,} we get
$$a^2bd(1+c)^4(d^2+d+1)b^2c^2(1+d)^4=a^2b^2c^2(1+d)^4(c^2+c+1)^2b[(1+c)^4+(1+d)^3].$$
Simplifying it, we have ${{(1+d)}^{3}}={{(1+c)}^{6}}$. {Since $\mathrm{gcd}(3,2^{2k+1}-1)=1$, one get $d=c^{2}$. After raising it to the $2^{k+1}$-th power, we obtain $c^2=d^2=d$, which means $d=0, 1$. It is a contradiction since $d\neq 0,1$.}

Hence, the proof is complete.
      \end{proof}

{   \subsection{Multiplicative  inequivalence of the new permutation trinomials with known ones}
In this subsection, we briefly discuss the multiplicative  inequivalence of the new permutation
trinomials with known ones.

First, the last three classes of permutation trinomials are not with trivial coefficients. As
far as the authors know, there exist only five such classes of permutation trinomials in $\gf_{2^m}$. They
are the functions in Theorem  \ref{th_TriPPList} (5), (7), (8), Theorems
\ref{TPPHou1} and \ref{TPPHou2}.
Any of our newly constructed permutation trinomial is multiplicative inequivalent to those in Theorem  \ref{th_TriPPList} (5), (7)
since the latter permutations are defined over $\gf_{2^{3k}}$.
The three conditions in Theorem \ref{th_TriPPList} (8) are further investigated in \cite{LeePark}. In general,
no simple conditions can make $h(x)$ into a permutation trinomial. Hence it is multiplicative inequivalent to  any of our newly constructed permutation trinomial. The permutations in Theorems \ref{th_Tripp4}, \ref{th_Tripp5} are multiplicative inequivalent to those in Theorems \ref{TPPHou1}, \ref{TPPHou2}
 since they are defined over odd dimension, while the latter permutations are defined over even dimension.
 It can be easily verified that the function in Theorem \ref{th_Tripp3} is multiplicative inequivalent to those in Theorems \ref{TPPHou1}, \ref{TPPHou2}.
Hence the permutation trinomials in Theorems \ref{th_Tripp3}, \ref{th_Tripp4} and \ref{th_Tripp5} are
indeed multiplicative inequivalent to known permutations. Further, they are pairwise multiplicative inequivalent.

\mqu{In \cite{MZFG}, two classes of permutation trinomials were introduced. The first one is equivalent to
Theorem \ref{th_Tripp4}. The second one is as follows. 
\begin{Th}\cite[Theorem 5.2]{MZFG}\label{th_MZF2}
Let $m>1$ be an odd integer such that $m=2k-1$. Then $f(x) = x + ux^{2^k-1} +
u^{2^k}
x^{2^m-2^{k+1}+2}$, $u\in  \gf_{2^m}$, is a permutation polynomial over $\gf_{2^m}$.
\end{Th}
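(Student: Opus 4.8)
The plan is to follow the two-stage recipe used throughout Section~\ref{sec3}: show that $x=0$ is the only preimage of $0$, and that $f(x)=c$ has at most one preimage for each $c\in\gf_q^\ast$; a counting argument then forces $f$ to permute $\gf_q$. Before computing anything, though, I would first strip off the parameter $u$. Write $d_1=2^k-1$, $d_2=2^m-2^{k+1}+2$, and use $m=2k-1$, so that $2^{2k}\equiv 2\pmod{2^m-1}$; a one-line check then gives $2^k(d_1-1)\equiv d_2-1\pmod{2^m-1}$. Consequently, for every $\lambda\in\gf_q^\ast$ we have $\lambda^{-1}f_1(\lambda x)=x+\lambda^{d_1-1}x^{d_1}+\lambda^{d_2-1}x^{d_2}$ with $\lambda^{d_2-1}=(\lambda^{d_1-1})^{2^k}$, where $f_1$ denotes the instance $u=1$. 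Since $\gcd(d_1-1,2^m-1)=\gcd(2^k-2,2^m-1)=1$ (because $2^k-2=2(2^{k-1}-1)$ and $\gcd(k-1,2k-1)=1$), the map $\lambda\mapsto\lambda^{d_1-1}$ permutes $\gf_q^\ast$; hence every $u\in\gf_q^\ast$ is $\lambda^{d_1-1}$ for a unique $\lambda$, and then $u^{2^k}=\lambda^{d_2-1}$ holds for free. Thus $f=\lambda^{-1}f_1(\lambda x)$ permutes $\gf_q$ if and only if $f_1$ does (the case $u=0$ being trivial), and the whole theorem collapses to the single trinomial $f_1(x)=x+x^{2^k-1}+x^{2^m-2^{k+1}+2}$ with trivial coefficients.

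For the root count I would put $y=x^{2^k}$, so that $y^{2^k}=x^2$, and rewrite $f_1(x)=0$ (with $x\neq0$) as $x^2y^2+y^3+x^4=0$ after clearing denominators. Raising this to the $2^k$-th power produces the companion identity $x^4y^2+x^6+y^4=0$. The two identities give $y^3=x^2(y+x)^2$ and $y^4=x^4(y+x)^2$; the case $y=x$ is immediate, since it forces $y^3=0$, and otherwise dividing yields $y=x^2$, i.e. $x^{2^k-2}=1$, whence $x=1$ by the coprimality just noted. Substituting $x=1$ contradicts the equation in characteristic two, so $0$ is the unique root.

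The hard part will be injectivity on $\gf_q^\ast$, and here I would imitate the proof of Theorem~\ref{th_Tripp5}. Assuming $f_1(x)=f_1((1+c)x)$ for some $x,c\in\gf_q^\ast$ and setting $d=c^{2^k}$, I would clear denominators in the two numerators $x^2y^2+y^3+x^4$, then eliminate the top power of $x$ between the resulting relation and its $2^k$-th power to reach a single quadratic $\varepsilon^2+\varepsilon+D=0$ in a normalized unknown $\varepsilon$, with $D$ an explicit rational function of $c$ and $d$. The crux, and the only genuinely laborious step, is to prove $\mathrm{Tr}_m(D)=1$, for then this quadratic is unsolvable in $\gf_q$ and no $c\neq0$ can exist; this trace evaluation is exactly the ``long and intricate'' computation acknowledged in Theorem~\ref{th_Tripp5} and in \cite{DQ}. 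One then closes as there, extracting a relation of the type $(1+d)^3=(1+c)^6$, invoking $\gcd(3,2^m-1)=1$ to get $d=c^2$, and finally $c^2=d=c^{2^{k+1}}$ to force $c\in\{0,1\}$, the required contradiction. Should the elimination instead terminate in a cubic $\varepsilon^3+\varepsilon+\delta=0$ over a quadratic-twist variable, I would finish via the trace criterion of Lemma~\ref{san1}, as in Theorem~\ref{th_Tripp3}.
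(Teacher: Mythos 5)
First, a point of calibration: the paper does not prove this statement at all. Theorem~\ref{th_MZF2} is quoted verbatim from \cite[Theorem 5.2]{MZFG} in the inequivalence discussion, solely so the authors can check it against Theorem~\ref{th_Tripp5}; so there is no in-paper proof to match, and your attempt must stand on its own, with the proofs of Theorems~\ref{th_Tripp3}--\ref{th_Tripp5} as the only templates. Your first two stages are correct and complete. The normalization is a genuinely nice observation: since $2^k(2^k-2)\equiv 2^m-2^{k+1}+1 \pmod{2^m-1}$ and $\gcd(2^k-2,2^m-1)=2^{\gcd(k-1,2k-1)}-1=1$, every $u\in\gf_{2^m}^\ast$ equals $\lambda^{2^k-2}$ for a unique $\lambda$, and then $u^{2^k}=\lambda^{2^m-2^{k+1}+1}$ comes for free, so the family is linearly equivalent to its $u=1$ member. (Incidentally, the same scaling trick applied to the paper's own Theorems~\ref{th_Tripp4} and~\ref{th_Tripp5} reduces those parametric families to their $a=1$ instances, i.e.\ to Theorem~\ref{th_TriPPList}(6) and \cite[Theorem 2.2]{DQ}, whereas the paper carries the coefficient $a$ through the whole computation.) Your root count via $y=x^{2^k}$, $y^{2^k}=x^2$, the companion equation, and the dichotomy $y=x$ versus $y=x^2$ also checks out line by line.

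The injectivity stage, however, is a plan rather than a proof, and the step you defer is the entire content of the theorem. You never carry out the elimination, never exhibit $D$, and you justify the trace claim by asserting it is ``exactly'' the computation acknowledged in Theorem~\ref{th_Tripp5} and \cite{DQ}. That citation cannot be transplanted: $x+x^{2^k-1}+x^{2^m-2^{k+1}+2}$ is not multiplicatively equivalent to the trinomial of Theorem~\ref{th_Tripp5} --- the paper asserts precisely this inequivalence immediately after the quoted theorem --- and its numerator $x^2y^2+y^3+x^4$ has a different monomial pattern from the $abx^2y^2+by^2+ax^2$ that drives the elimination there, so there is no reason the same $D_1$ or the same trace identity arises; it must be recomputed from scratch, and that recomputation \emph{is} the ``long and intricate'' part. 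Your hedge (``should the elimination instead terminate in a cubic\dots'') concedes that you do not know what the elimination yields. Moreover your description of the endgame is internally inconsistent: in the model proof the quadratic $\varepsilon^2+\varepsilon+D=0$ is \emph{solvable} --- there $D=D_1+D_1^{2^{k+1}}$, so its absolute trace is $0$ --- and the claim ${\operatorname{Tr}}(D_1)=1$ is used not to rule out solutions but to compute $\varepsilon^{2^{k+1}}=\varepsilon+D_1+1$, pin down $\varepsilon$ uniquely, and only then extract $(1+d)^3=(1+c)^6$ and the contradiction; your sentence ``${\operatorname{Tr}}_m(D)=1$, for then this quadratic is unsolvable'' followed by ``one then closes as there'' conflates two mutually exclusive mechanisms. (A small slip in the same spirit: with $m=2k-1$ the conjugate parameter is $d=c^{2^k}$, not $c^{2^{k+1}}$; the contradiction should read $d^{2^k}=c^{2^{2k}}=c^2=d$ together with $d=c^2\Rightarrow d^{2^k}=d^2$, whence $d^2=d$.) In sum: the reduction and the root count are solid, but the heart of the proof --- that $f_1(x)=c$ has at most one solution for each $c\in\gf_{2^m}^\ast$ --- is missing.
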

We only need to check the multiplicative equivalence between the permutation trinomials in Theorems \ref{th_Tripp5} and \ref{th_MZF2}.
 It can be easily verified that they are indeed multiplicative inequivalent.
}

Now let us discuss the first two classes of permutation trinomials with trivial coefficients.
We need to show they are multiplicative inequivalent to all the known classes. To this end,
we used a Magma program
to confirm this conclusion for at least one small field $\gf_{2^m}$, where $4 \le m \le 10$. Consequently,
these two classes of permutation trinomials are also new. Further, these two classes are also multiplicative inequivalent to each other.
}

%

    \section{Conclusions}
    \label{sec4}

 {Permutation binomials and permutation trinomials over finite fields are both interesting and important
   in theory and in many applications. In this paper,
we present several new classes of permutation binomials and permutation trinomials. These functions extend
the list of known such permutations.  However, from computer experiments, we found that there should be
more classes of permutation binomials and permutation trinomials. A complete determination of
all permutation binomials or all permutation trinomials over finite fields seems to be out of reach for the time bing.
The constructed functions here lay a solid foundation for the further research.

At last, we would like to mention that the constructed functions have many applications. For instances,
they can be employed in linear codes \cite{CCJ} and cyclic codes  \cite{CD},
they can also be used to construct highly nonlinear functions such as bent and semi-bent functions.
For more details, please refer to the last paragraph in \cite{DQ} and the references therein.  }

\end{document}